\newtheorem{theorem}{Theorem}[section]
\theoremstyle{definition}
\newtheorem{prop}{Proposition}
\journal{Mathematical Biosciences}
\begin{document}

\begin{frontmatter}



\title{Turing Patterns in a Morphogenetic Model with Single Regulatory Function} 



\author[um6p]{Mohamed Amine Ouchdiri} 

\affiliation[um6p]{organization={UM6P},
            addressline={College of Computing}, 
            city={Benguerir},
            country={Morocco}}

\author[label1sb,label2sb]{Saad Benjelloun}
\affiliation[label1sb]{organization={De Vinci Higher Education},
             addressline={De Vinci Research Center},
             state={Paris},
             country={France}}
\affiliation[label2sb]{organization={Makhbar Mathematical Research Institute},
             city={Casablanca},
             country={Morocco}}

\author[um6p]{Adnane Saoud} 

\author[esp]{Irene Otero-Muras} 
\affiliation[esp]{organization={Computational Synthetic Biology Group, Institute for Integrative Systems Biology (CSIC-UV)},
            city={Valencia},
            postcode={46980}, 
            country={Spain}}
            
\begin{abstract}
Confirming Turing's theory of morphogens in developmental processes is challenging, and synthetic biology has opened new avenues for testing Turing's predictions. Synthetic mammalian pattern formation has been recently achieved through a reaction-diffusion system based on the short-range activator (Nodal) and the long-range inhibitor  (Lefty) topology, where a single function regulates both morphogens.  In this paper, we investigate the emergence of Turing patterns in the synthetic Nodal-Lefty system. First, we prove the existence of a global solution and derive conditions for Turing instability through linear stability analysis. Subsequently, we examine the behavior of the system near the bifurcation threshold, employing weakly nonlinear analysis, and using multiple time scales, we derive the amplitude equations for supercritical and subcritical cases. The results demonstrate the ability of the system to support various patterns, with the subcritical Turing instability playing a crucial role in the formation of dissipative structures observed experimentally.
\end{abstract}


\begin{keyword}


Reaction–diffusion system, Turing diffusion-driven instability, weakly non-linear analysis, morphogenesis.

\end{keyword}

\end{frontmatter}


\section{Introduction}
\label{sec1}




Introduced by Alan Turing, reaction-diffusion theory of morphogenesis explains how two interacting chemical species, or morphogens, can transition from a stable homogeneous state to spatially organized patterns through diffusion-driven instability \cite{turing1990chemical}. Morphogens are essential signaling molecules for embryonic patterning, influencing tissue development through concentration-dependent cellular responses \cite{wolpert1969positional, Ulloa2007}. Despite the high relevance of self-organized reaction-diffusion (RD) patterns in biology, the conditions for pattern-enabling systems \emph{in vivo} are still not fully clarified and remain subject to active research \cite{Paul2024}.

To achieve pattern formation, Gierer and Meinhardt \cite{gierer1972theory} proposed the activator-inhibitor model for Turing pattern formation, which relies on local self-enhancement coupled with long-range inhibition. A prime example of this mechanism is found in the Nodal–Lefty signaling pathway, which is involved in mesodermal induction, axis formation, and left-right patterning during vertebrate development. 
Specifically, Nodal and Lefty are members of the transforming growth factor-$\beta$ (TGF-$\beta$) superfamily that play crucial roles in early vertebrate development \cite{schier2003nodal}. Nodal functions as an auto-activating morphogen that promotes its own expression while driving mesodermal specification and embryonic axis establishment \cite{robertson2004embryonic}. In contrast, Lefty restricts Nodal signaling spatially, creating the sharp concentration gradients essential for proper tissue patterning and left-right asymmetry \cite{nakamura2006generation}. Remarkably, Lefty diffuses approximately 29 times faster than Nodal, naturally establishing the spatial scale separation required for Turing instability and making this system an ideal experimental model for testing morphogenetic theory \cite{muller2012differential}.

Building on these natural properties, Sekine \emph{et al.} \cite{sekine2018synthetic} successfully engineered the first mammalian synthetic circuit capable of exhibiting Turing patterns. Their systematic investigation revealed distinct RD patterns driven by two different feedback mechanisms involving Lefty \cite{chen2004two}: competitive inhibition and a combination of competitive inhibition with direct inhibition. Interestingly, competitive inhibition alone produced maze-like Turing patterns (labyrinthine), while the combined feedback mechanism generated solitary spot patterns. Through linear analysis, the authors successfully identified the regions in which the Turing instability condition is met. 

Nevertheless, these results suggest that further investigation is needed beyond linear analysis, as linear stability theory does not always capture the full dynamics of pattern-forming systems, particularly those involving instabilities resulting from subcritical bifurcations \cite{Krause2024}. This limitation motivates the present study, where we employ weakly nonlinear analysis to provide a more complete understanding of pattern formation in the synthetic Nodal-Lefty system.

In this paper, we apply weakly nonlinear analysis (WNL) \cite{wollkind1994weakly} to identify the parameter regions for subcritical and supercritical bifurcation types in the synthetic Nodal-Lefty system. This approach builds on early work by Stuart \cite{stuart1960nonlinear} and multiple scales methods developed by Nayfeh \cite{nayfeh1973perturbation}, with key contributions from Newell, Whitehead, and Segel \cite{newell1969finite} who derived the amplitude equations for spatially modulated patterns near onset. Cross and Hohenberg \cite{cross1993pattern} provided the comprehensive review that established this framework for pattern formation studies. Since then, this method has been widely applied to pattern formation \cite{wollkind1994weakly, stephenson1995weakly, cruywagen1997biological, peng2016turing, han2017cross, gambino2019pattern, liu2011sequential}.

The WNL method distinguishes between two fundamental types of Turing instabilities. In supercritical bifurcations, patterns emerge gradually with small initial amplitudes that grow continuously in space. In subcritical bifurcations, the spatial structure appears abruptly with well-defined amplitudes, creating localized structures through discontinuous transitions. To capture these dynamics, we derive the Stuart-Landau equation in two dimensions, computing coefficients to third order for supercritical cases and fifth order for subcritical cases.

Our parameter space analysis reveals distinct bifurcation regimes. In the competitive inhibition model, we identify a large subcritical region that precedes the supercritical region. Conversely, the combined competitive and direct inhibition model shows a more restricted Turing region that predominantly favors subcritical behavior. Importantly, our analysis reveals two key findings: the solitary spot patterns observed experimentally by Sekine \emph{et al.} \cite{sekine2018synthetic} result directly from subcritical Turing instability, and our theoretical predictions demonstrate good agreement with computational results in both bifurcation regimes.

Other research has explored alternative Nodal signaling mechanisms. For instance, wave pinning studies in Xenopus embryos show how Lefty limits Nodal signaling to control cell polarization and embryonic symmetry breaking—processes distinct from Turing pattern formation \cite{middleton2013wave, nakamura2006generation}. While Turing patterns create spatial periodicity, these alternative symmetry-breaking mechanisms may serve different developmental functions, requiring further quantitative analysis to determine their relative importance.

The remainder of this paper is organized as follows: Section \ref{sec2} introduces the Nodal-Lefty reaction-diffusion model. Section \ref{sec3} demonstrates global solution existence. Section \ref{sec4} derives conditions for Turing diffusion instability. Section \ref{sec5} employs two-dimensional weakly nonlinear analysis to distinguish supercritical and subcritical regimes. Section \ref{sec6} presents numerical simulations with validation of theoretical predictions through supercritical and subcritical examples (Section \ref{sec6_1}), pattern expansion in the supercritical regime (Section \ref{sec6_2}), and localized spots in the subcritical regime (Section \ref{sec6.3}).
\label{sec2}

\begin{table*}[]
\small
\[
\begin{array}{lll}
\text { Parameter } & \text { Description } & \text { Reference value/range and Unit } \\
\hline \alpha_{\mathrm{N}} & \text { Maximum production rate of Nodal } & \text{Randomly selected and varied} \; \mathrm{nM} \mathrm{min}^{-1} \\
\alpha_{\mathrm{L}} & \text { Maximum production rate of Lefty } & \text{Randomly selected and varied}\;\mathrm{nM} \mathrm{min}^{-1} \\
K_{+} & \text {Association rate of Nodal and Lefty } & \text{Randomly selected and varied}\mathrm{~min}^{-1} \mathrm{nM}^{-1} \\
n_{\mathrm{N}} & \text { Hill coefficient of activation by Nodal } & 2.63 \text{ (dimensionless)}\\
n_{\mathrm{L}} & \text { Hill coefficient of inhibition by Lefty } & 1.09 \text{ (dimensionless)}\\
K_{\mathrm{N}} & \text { Dissociation coefficient of Nodal } & 9.28 \text{ nM} \\
K_{\mathrm{L}} & \text { Dissociation coefficient of Lefty } & 14.96 \text{ nM} \\
\gamma_{\mathrm{N}} & \text { Degradation rate of Nodal } & 2.37 \times 10^{-3} \text{ min}^{-1} \\
\gamma_{\mathrm{L}} & \text { Degradation rate of Lefty } & 5.65 \times 10^{-3} \text{ min}^{-1} \\
D_{\mathrm{N}} & \text { Diffusion coefficient of Nodal } & 1.96 \, \mu \text{m}^2 \mathrm{~min}^{-1} \\
D_{\mathrm{L}} & \text { Diffusion coefficient of Lefty } & 56.39 \text{ mm}^2 \mathrm{~min}^{-1} \\
\hline
\end{array}
\]
\caption{List of models parameters and their reference value. }
\label{tab:parameters}
\end{table*}
\section{Reaction Diffusion system of Nodal and Lefty}
\label{sec2}
In order to understand the patterning mechanism of the activator-inhibitor circuit, Sekine et al.\cite{sekine2018synthetic} proposed the following reaction-diffusion system of Nodal and Lefty  that gives rise to various spatial patterns:
\begin{equation}
\begin{cases}
\begin{aligned}
\dfrac{\partial N(X, T)}{\partial T} &= \alpha_{\mathrm{N}} H(N,L)-\gamma_{\mathrm{N}} N-K_{+} N L\\
&\quad +D_{\mathrm{N}} \Delta N, 
\end{aligned}
& X \in \Omega, T>0 ,\\
\\
\begin{aligned}
\dfrac{\partial L(X, t)}{\partial T} &= \alpha_{\mathrm{L}} H(N,L)-\gamma_{\mathrm{L}} L-K_{+} N L\\
&\quad +D_{\mathrm{L}} \Delta L, 
\end{aligned}
& X \in \Omega, T>0 ,
\end{cases}
\label{N_L_dim}
\end{equation}
where $T$ is the time variable, $X \in \Omega$ is the spatial variable, and $\Omega \subset R^n(n=1,2)$ is the closed and bounded spatial domain that generates one-dimensional and two-dimensional spatial Turing patterns. The terms $N(X,T)$ and $L(X,T)$ describe the temporal evolution of the spatial concentrations of Nodal and Lefty protein ligands, respectively, where $D_N$ and $D_L$ are the diffusion coefficients for Nodal and Lefty, respectively, and $\Delta=\frac{\partial^2 }{\partial \boldsymbol{X}^2}+\frac{\partial^2 }{\partial \boldsymbol{Y}^2}$ is the Laplacian operator. The parameters $\alpha_{\mathrm{N}}$ and $\alpha_{\mathrm{L}}$ are the maximum production rates of Nodal and Lefty, respectively, and $H$ is the nonlinear regulatory function given as follows:
\begin{equation}
H(N,L)=\dfrac{N^{n_{\mathrm{N}}}}{N^{n_{\mathrm{N}}}+\left[K_{\mathrm{N}}\left\{1+\left(\frac{L}{K_{\mathrm{L}}}\right)^{n_{\mathrm{L}}}\right\}\right]^{n_{\mathrm{N}}}},
\end{equation}
the regulatory function $H$ determines how the production rates of Nodal and Lefty depend on their concentrations. Since the same promoters control both morphogens, a single regulatory function is justified for both, which is sigmoidal form and follows the Hill rule without the inhibitor Lefty.

Protein ligands require receptors to initiate signaling. Two different mechanisms have been postulated that can describe the negative feedback by Lefty. The first mechanism involves competition at receptor sites: Lefty inhibits Nodal by competing for the same receptors. In this interaction, the Hill coefficient of Nodal $n_N$, which describes the steepness of the response of its activation, similarly affects the interaction of Lefty, and imposing the Hill coefficient of Lefty $n_L \approx 1$ suggests that both morphogens are activated with comparable steepness of response. Thus, they are competing on receptors at these binding sites. The parameters $K_{\mathrm{N}}$ and $K_{\mathrm{L}}$ are the dissociation coefficients of Nodal and Lefty, respectively. 

The second mechanism is when Lefty restricts the signal of Nodal directly. Given the uncertainty of the decisive mechanism limiting Nodal signaling from \emph{in vivo} data. Sekine et al. modeled the two scenarios: one where the association rate of Nodal and Lefty $K_{+}$ is nonzero, the two modes of inhibition are active, and if $K_{+}$ is equal to zero, the inhibition of Lefty works only via competition on receptors. 

In the absence of the measurement of the association rate of Nodal and Lefty $K_{+}$, this parameter will be varied alongside the maximum production rate of Nodal and Lefty $(\alpha_{\mathrm{N}},\alpha_{\mathrm{L}})$. The degradation process is chosen to be linear, as described by the two degradation rates $\gamma_\mathrm{N}$ and $\gamma_\mathrm{L}$ for Nodal and Lefty, respectively.

The parameter values are derived from a combination of experimental measurements and theoretical considerations based on the synthetic mammalian system developed by Sekine et al. \cite{sekine2018synthetic}. Specifically: (1) \textbf{Diffusion coefficients} are calculated from experimentally measured characteristic diffusion lengths using the HiBiT visualization system, where Lefty diffuses approximately 29 times faster than Nodal; (2) \textbf{Hill coefficients} are directly obtained from experimental dose-response curves measuring receptor activation; (3) \textbf{Dissociation constants} represent experimentally determined binding affinities from biochemical assays; (4) \textbf{Degradation rates} are measured using cycloheximide treatment to block protein synthesis; (5) \textbf{Production rates} and \textbf{association rate} are systematically varied as control parameters to explore different dynamical regimes, with ranges chosen to encompass biologically relevant expression levels observed in mammalian cell culture systems. Table \ref{tab:parameters} presents the parameters and their values.

To facilitate the mathematical analysis, we have non-dimensionalized the model, thereby reducing the number of parameters. We introduce new units of space, time, and densities as follows:
$$
n=\frac{N}{K_{\mathrm{N}}}, \quad l=\frac{L}{K_{\mathrm{L}}}, \quad t= \frac{\alpha_N}{K_N} T, \quad x=\sqrt{\frac{\alpha_N D_N}{K_N} } X.
$$

Then, from the system \eqref{N_L_dim}, we obtain
\begin{equation}
\begin{cases} \frac{\partial n(\boldsymbol{x}, t)} {\partial t}=h(n,l)-\gamma_{\mathrm{n}} n-k_{+} n l+\Delta n & x \in \Omega, t>0, \\ 
\\
\frac{\partial l(\boldsymbol{x}, t)}{\partial t}=\tau h(n,l) -\gamma_{\mathrm{l}} l-\nu  k_{+}n l+d \Delta l, & x \in \Omega, t>0.\end{cases}
\label{N_L_Nondim}
\end{equation}
where $h$ is the non-dimensional form of the regulatory function given as follows:
\begin{equation}
h(n,l)=\dfrac{n^{n_n}}{n^{n_n}+\left(1+l^{n_l}\right)^{n_n}},
\end{equation}
and the parameters 
$$
\begin{aligned}
&\tau=\frac{K_N\alpha_L}{K_L\alpha_N}, \quad \gamma_n=\frac{K_N }{\alpha_N}\gamma_N, \quad \gamma_l=\frac{K_N }{\alpha_N}\gamma_L, \quad k_+ = \dfrac{K_N}{\alpha_N}K_+ K_L,\\ &\nu=\frac{K_N}{K_L}, \quad d=\dfrac{D_L}{D_N}.
\end{aligned}
$$

We also add to \eqref{N_L_Nondim} the following initial condition:
$$
n(x,0)=n_0(x), \quad  l(x, 0)=l_0(x), x \in \Omega
$$
and we use the Neumann boundary condition.
$$
\partial_\mu n=0, \quad \partial_\mu l=0, \quad \text { on }  \; \partial \Omega,
$$
where $\mu$ is outward unit normal to the boundary $\partial \Omega$.

\section{Existence of global solution}
\label{sec3}

This section examines the existence of a global solution for the system \eqref{N_L_Nondim}. First, we introduce the model, which includes the Neumann boundary and initial conditions. For the sake of simplicity, we only consider the model in one dimension.

Let $\Omega=[0, L]$ be a bounded domain with
boundary $\partial \Omega=\{0, L\}$. The system is written as follows:

\begin{equation}
\begin{cases}\frac{\partial n}{\partial t}=\dfrac{n^{n_n}}{n^{n_n}+\left(1+l^{n_l}\right) ^{n_n}}-\gamma_{\mathrm{n}} n-k_{+} n l+\frac{\partial^2 n}{\partial \boldsymbol{x}^2}, & x \in \Omega, t>0, \\ \frac{\partial l}{\partial t}=\tau \dfrac{n^{n_n}}{n^{n_n}+\left(1+l^{n_l}\right) ^{n_n}} -\gamma_{\mathrm{l}} l-\nu  k_{+}n l+d \frac{\partial^2 l}{\partial x^2}, & x \in \Omega, t>0, \\ \frac{\partial n}{\partial x}=\frac{\partial l}{\partial x}=0, & x \in \partial \Omega, t>0, \\ n(0, x)=n_0(x) \geq 0, \quad l(0, x)=l_0(x) \geq 0, & x \in \Omega.\end{cases}
\label{N_L_nondim_gen}
\end{equation}

For the unique local solution of the system \eqref{N_L_Nondim}, one can rely on the results \cite{amann1985global}. In this paper, we prove the global existence by demonstrating the boundedness of $\|n(t,.)\|_{L_{\infty}}$ and $\|l(t,.)\|_{L_{\infty}}$ for all time t, using the strong comparison theorem \cite{smoller2012shock}.

\begin{theorem}
The solution of the system of Nodal and Lefty \eqref{N_L_nondim_gen} with $n_0(x), l_0(x) \in C^2(\Omega)$ and $0\leq n_0(x)$, $0\leq l_0(x)$ is global and bounded for all $t \geq 0$.
\label{exs_of_global}
\end{theorem}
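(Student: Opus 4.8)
The plan is to combine the local well-posedness from Amann's theory with uniform-in-time $L^\infty$ a priori bounds obtained via the comparison principle, so that the standard blow-up alternative forces the maximal existence time to be infinite. Concretely, the results of \cite{amann1985global} provide a unique classical solution $(n,l)$ on a maximal interval $[0,T_{\max})$, with the dichotomy that either $T_{\max}=\infty$ or $\|n(t,\cdot)\|_{L^\infty}+\|l(t,\cdot)\|_{L^\infty}\to\infty$ as $t\to T_{\max}$. It therefore suffices to rule out the second alternative by exhibiting bounds depending only on the data and the parameters.

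First I would establish non-negativity. The lower faces $\{n=0\}$ and $\{l=0\}$ are never crossed: on $\{n=0\}$ the reaction term in the first equation equals $h(0,l)=0$ (the numerator $n^{n_n}$ vanishes), and on $\{l=0\}$ the reaction term in the second equation equals $\tau h(n,0)\ge 0$; together with the diagonal diffusion and Neumann conditions, the strong comparison/maximum principle \cite{smoller2012shock} yields $n(t,x)\ge 0$ and $l(t,x)\ge 0$ for all $(t,x)$. This step is crucial because it is precisely non-negativity that gives the cross terms $-k_+nl$ and $-\nu k_+nl$ the favorable sign used below, so the order of the steps matters.

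Next, with $n,l\ge 0$ in hand, I would derive the upper bounds. Since $0\le h(n,l)\le 1$ and $k_+nl\ge 0$, the first equation gives $\partial_t n-\partial_{xx}n\le 1-\gamma_n n$, so $n$ is a subsolution of the scalar problem with reaction $1-\gamma_n w$; comparing with the spatially homogeneous solution of $\dot w=1-\gamma_n w$, $w(0)=\|n_0\|_{L^\infty}$, gives $n(t,x)\le \max\{\|n_0\|_{L^\infty},\,1/\gamma_n\}=:M_n$. Identically, $\partial_t l-d\,\partial_{xx}l\le \tau-\gamma_l l$ yields $l(t,x)\le \max\{\|l_0\|_{L^\infty},\,\tau/\gamma_l\}=:M_l$. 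Hence $(n,l)$ stays in the rectangle $[0,M_n]\times[0,M_l]$, which can equivalently be phrased by saying that this rectangle is a positively invariant region for the kinetics (the reaction vector field points inward on all four faces) and contains the initial data. The resulting uniform bound $\|n(t,\cdot)\|_{L^\infty}+\|l(t,\cdot)\|_{L^\infty}\le M_n+M_l$ contradicts blow-up, so $T_{\max}=\infty$ and the solution is global and bounded.

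The main obstacle is not conceptual but one of rigor: making the scalar comparison argument airtight for the coupled system. One must (i) use parabolic smoothing so that the local solution is classical for $t>0$ and the strong comparison principle genuinely applies; (ii) handle the non-integer Hill exponents $n_n,n_l$, which leave $h$ defined (and smooth) only for $n\ge 0$ — this is exactly why non-negativity must be proved first and why one should really work inside the invariant rectangle; and (iii) note that the diagonal, hence componentwise order-preserving, structure of the diffusion is what legitimizes bounding each equation by a scalar super/subsolution. Once these points are settled, the estimates themselves are elementary.
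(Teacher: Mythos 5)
Your proposal is correct and follows essentially the same route as the paper: non-negativity first (via the maximum/comparison principle from \cite{smoller2012shock}), then upper bounds by comparing each component with the scalar ODEs $\dot w = 1-\gamma_n w$ and $\dot z = \tau - \gamma_l z$, yielding exactly the same constants $\max\{\|n_0\|_{L^\infty},1/\gamma_n\}$ and $\max\{\|l_0\|_{L^\infty},\tau/\gamma_l\}$, with local existence from \cite{amann1985global}. Your explicit invocation of the blow-up alternative and your remarks on why non-negativity must precede the upper bounds are slightly more careful than the paper's write-up, but the argument is the same.
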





   \begin{proof}
Following Lemma 14.20 in \cite{smoller2012shock}, we first establish that solutions maintain non-negativity when starting from non-negative initial data $n_0(x), l_0(x)$. Next, we consider the initial value problem:
\begin{equation}
\left\{\begin{array}{l}
\frac{d w}{d t}=1 - \gamma_n w \quad \text { for } t>0, \\
w(0)=w_0=\max _{\Omega} n_0(x) ,
\end{array}\right.
\label{comp_1}
\end{equation}
then the solution of the ordinary differential equation \eqref{comp_1}is given as follows
\begin{equation}
w(t)=\frac{e^{-\gamma_n t}\left(w_0 \gamma_n+e^{\gamma_n t}-1\right)}{\gamma_n}.
\end{equation}

Since we have 
$$\frac{\partial n}{\partial t}-\frac{\partial^2 n}{\partial \boldsymbol{x}^2} \leq 1 - \gamma_n n, (x, t) \in \Omega \times[0, \infty),$$
then, using the Theorem 10.1 in \cite{smoller2012shock} we have  
$$
n(x, t) \leq w(t) \leq \max \left\{ \max _{\Omega} n_0(x) , \dfrac{1}{\gamma_n}\right\} \triangleq M_1, \quad t>0.
$$

Similarly, we show the boundedness of $l$ by using the following initial value problem
\begin{equation}
\left\{\begin{array}{l}
\frac{d z}{d t}=\tau - \gamma_l z \quad \text { for } t>0, \\
z(0)=z_0=\max _{\Omega} l_0(x),
\end{array}\right.
\label{comp_2}
\end{equation}
the solution of the ODE \eqref{comp_2} is given by
\begin{equation}
z(t)=\frac{e^{-\gamma_l t}\left(z_0 \gamma_l+\tau(e^{\gamma_l t}-1)\right)}{\gamma_l}.
\end{equation}

We conclude from Theorem 10.1 in \cite{smoller2012shock} that 
$$
l(x, t) \leq z(t) \leq \max \left\{ \max _{\Omega} l_0(x) , \dfrac{\tau}{\gamma_l}\right\} \triangleq M_2, \quad t>0.
$$

Thus, $\|n(t, .)\|_{L_{\infty}}$, $\|l(t, .)\|_{L_{\infty}}$ are bounded for all $t$ and the solution exists globally.
\end{proof}


\begin{figure*}[]
      \centering
	   \begin{subfigure}{0.45\linewidth}
		\includegraphics[width=\linewidth]{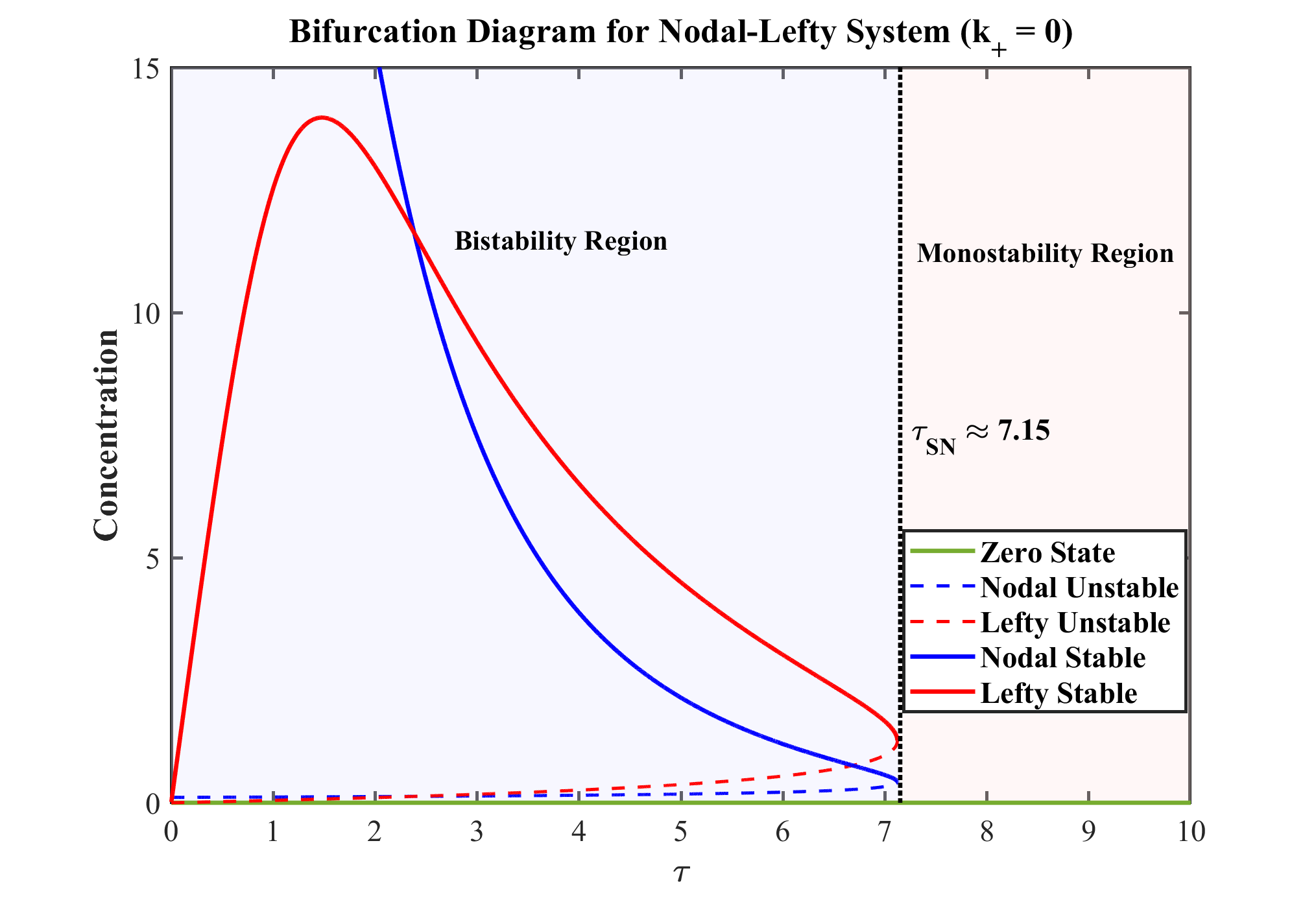}
		\caption{}
		\label{fig:subfig01}
	   \end{subfigure}
	   \begin{subfigure}{0.45\linewidth}
		\includegraphics[width=\linewidth]{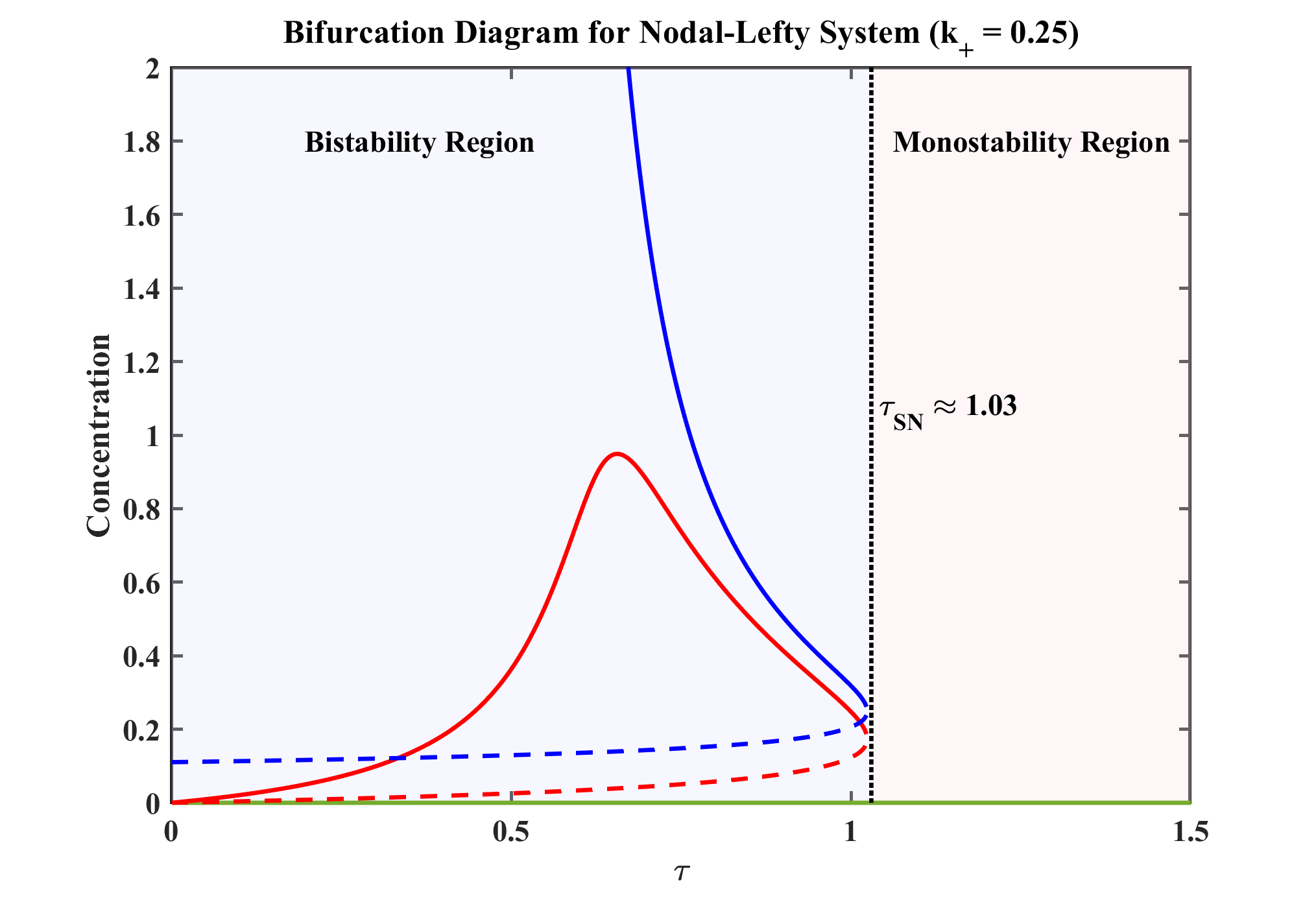}
		\caption{}
		\label{fig:subfig02}
	    \end{subfigure}
      \vfill
	   \begin{subfigure}{0.45\linewidth}
		\includegraphics[width=\linewidth]{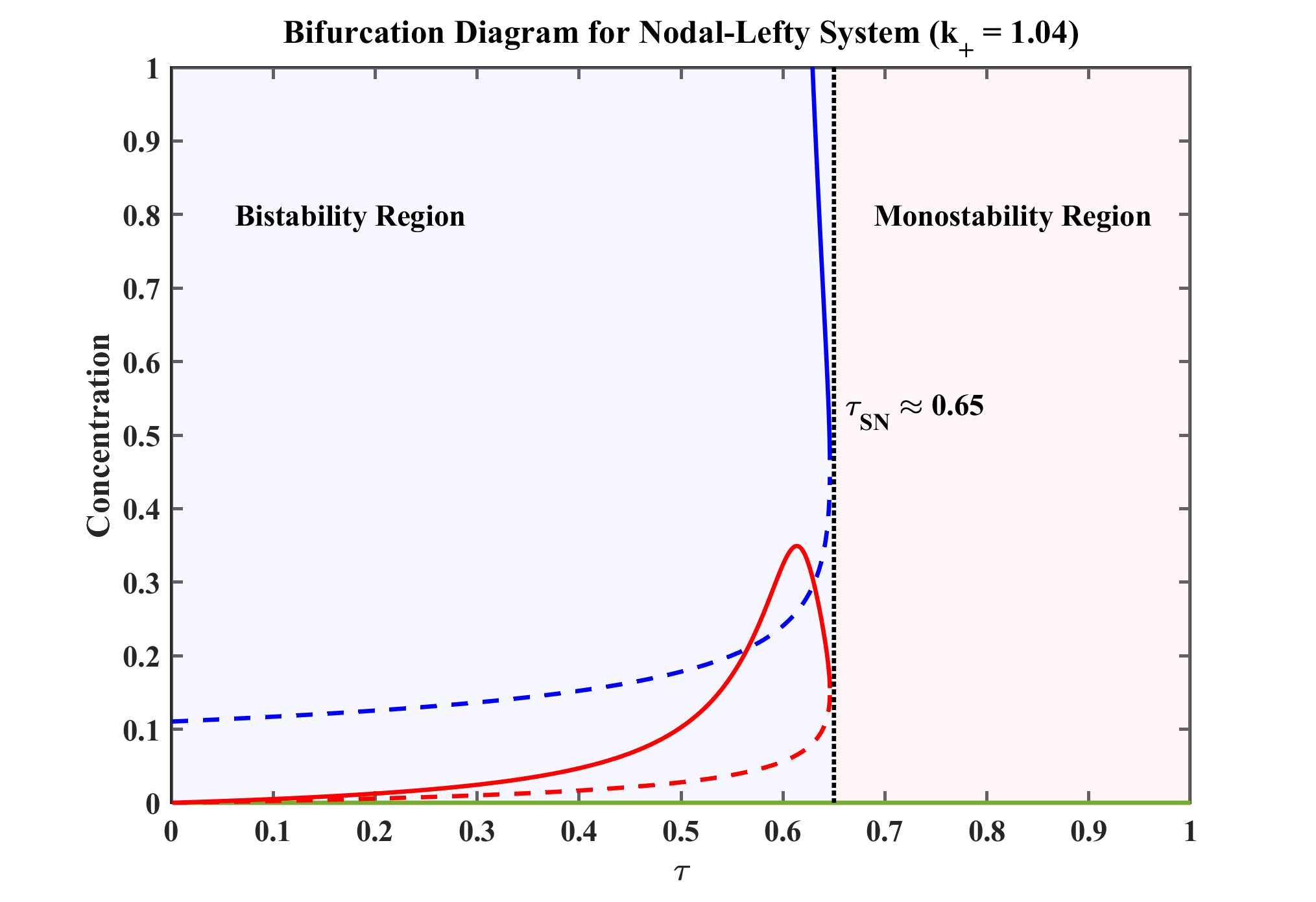}
		\caption{}
		\label{fig:subfig1}
	   \end{subfigure}
	   \begin{subfigure}{0.44\linewidth}
		\includegraphics[width=\linewidth]{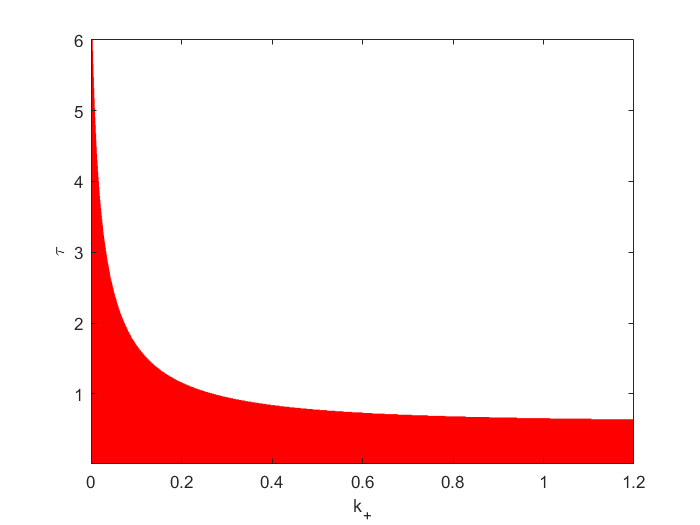}
		\caption{}
		\label{fig:subfig2}
	    \end{subfigure}
	\vfill
	     \begin{subfigure}{0.45\linewidth}
		 \includegraphics[width=\linewidth]{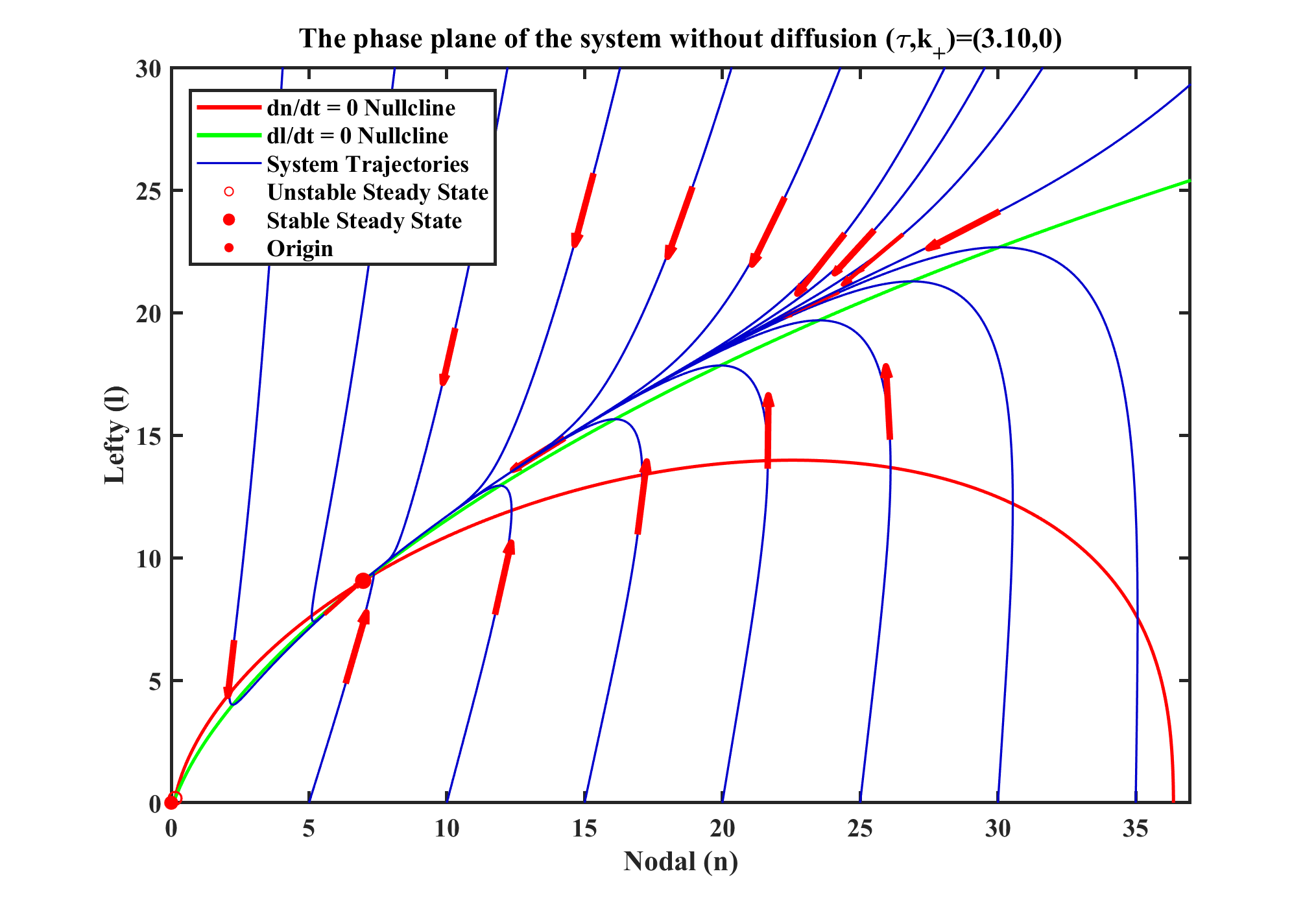}
		 \caption{}
		 \label{fig:subfig3}
	      \end{subfigure}
	       \begin{subfigure}{0.45\linewidth}
		  \includegraphics[width=\linewidth]{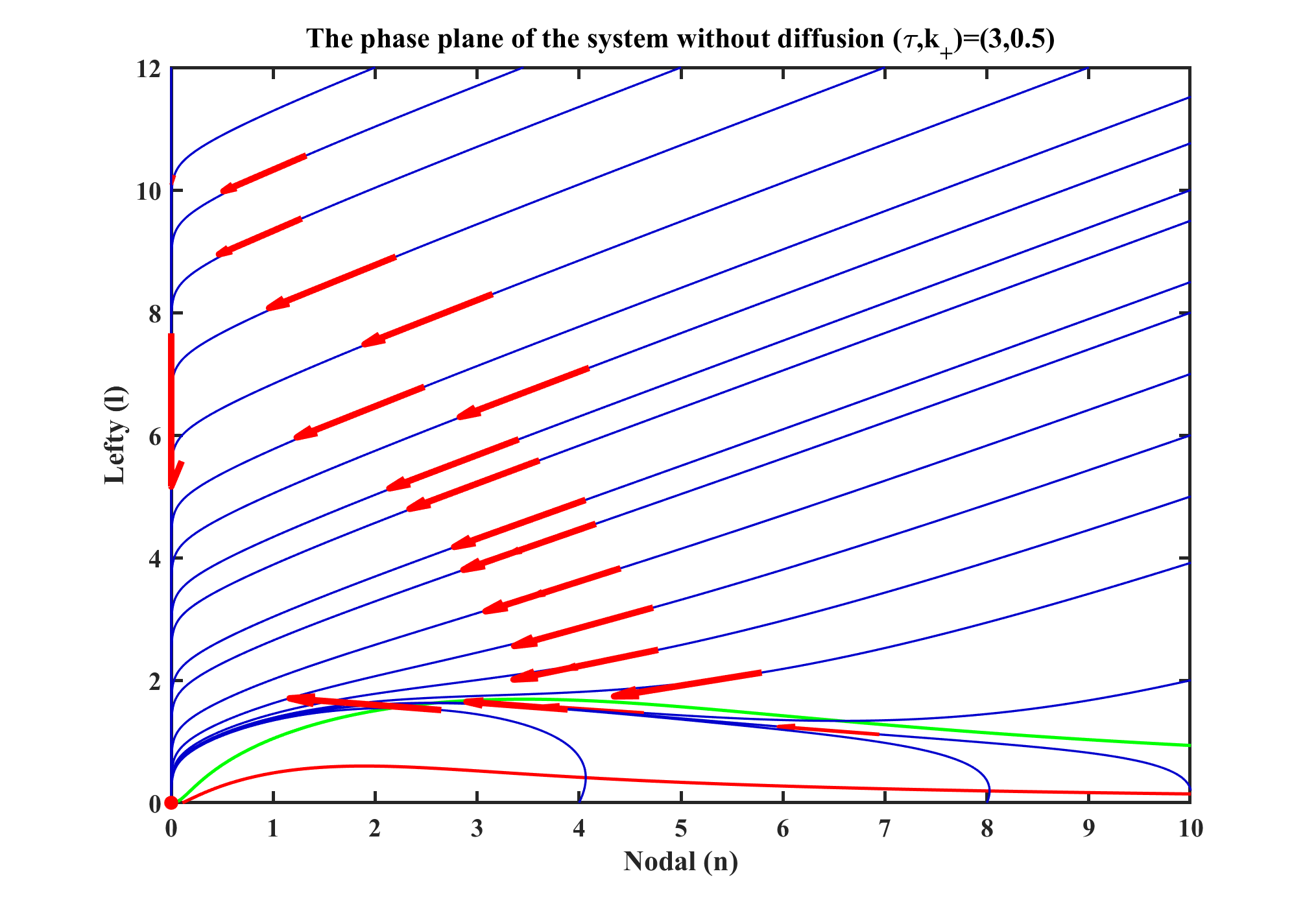}
		  \caption{}
		  \label{fig:subfig4}
	       \end{subfigure}
	\caption{Three bifurcation diagrams along the $\tau$ direction: the stable branches are depicted with solid lines, and the unstable branch with a dashed line. (a) For $k_+=0$ at a critical value  $\tau_{SN}\approx 7.12$, a saddle-node bifurcation occurs, allowing for the coexistence of two branches along with the stable branch of the trivial steady state for $\tau \leq\tau_{SN}$. (b) For $k_+=0.25$, the critical value of $\tau$ is $\tau_{SN}\approx1.05$. (c) For $k_+=1.04$, the critical value of $\tau$ is $\tau_{SN}\approx0.65$. (d) The stability region in the plane $(k_+, \tau)$, the region of bistability is denoted by the red area, while the remaining parameters represent the region of monostability. This monostability is distinguished by a single stable steady state $E_0 = (0, 0)$. (e) The phase plane of the system without diffusion, for specific values of $(\tau,k_{+})=(3.10,0)$, illustrates bistability with two stable nodes, $E_0=(0,0)$ and $E_2=(6.95,9.05)$ (indicated by red points), and a saddle point $E_1=(0.13,0.18)$ (marked by a red circle). (f) For a different set of $(\tau,k_{+})=(3,0.5)$ values, the phase plane depicts monostability, where the nullclines of Nodal and Lefty are illustrated by red and green curves, respectively. While the other parameters remain fixed as follows: $n_n=2.63$, $n_l=1.09$, $\gamma_n=0.02$, and $\gamma_l=0.06$.}
	\label{fig:subfigures4}
\end{figure*}


\section{Diffusion driven instability}
\label{sec4}

In this section, we investigate the diffusion-driven instability of system \eqref{N_L_Nondim}, starting with identifying the system's stability region without diffusion, a foundational step for the potential emergence of Turing patterns. For this reason, we introduce the following ODE system
\begin{equation}
\begin{aligned}
& \frac{\partial n}{\partial t}= h(n,l)-\gamma_{\mathrm{n}} n-k_{+} n l,\\
& \frac{\partial l}{\partial t}= \tau h(n,l)-\gamma_{\mathrm{l}} l-\nu k_{+} n l.
\end{aligned}
\label{ode_nodal_lefty}
\end{equation}

Given that we are working within a biological context, our interest is identifying the real positive steady states. In this system, finding the algebraic expression of the steady states poses a significant challenge, particularly because we have set the Hill coefficient $n_n$ to $2.63$ as proposed in \cite{sekine2018synthetic}.
First, the system admits a trivial steady state $E_0=(0,0)$ that always exists; the existence of the nontrivial positive steady states $(n^*,l^*)$ is the solution of the following two equations.
\begin{equation}
\begin{aligned}
h\left(n^*, l^*\right)  & =\gamma_n n^*+k_{+} n^* l^*, \\
\tau h\left(n^*, l^*\right) & =\gamma_l l^*+\nu k_{+} n^* l^*,
\end{aligned}
\label{steady_state_cond1}
\end{equation}
substituting the expression $h(n^*,l^*)$ from the first equation in the second equation \eqref{steady_state_cond1}, we can express the $l^*$ in function of $n^*$, $\tau$, and $k_+$. This yields to 
\begin{equation}
l^*=\frac{\tau \gamma_n n^*}{\gamma_l+(\nu-\tau) k_{+} n^*}.
\end{equation}

Now substitute this expression of $l^*$ in the first equation of the steady state \eqref{steady_state_cond1}. After simplification, we can deduce an algebraic condition that governs the existence of the steady states. We write this condition as follows:
\begin{equation}
\begin{split}
\frac{n^{* n_n}} {n^{* n_n}+\left(1+\left(\frac{\tau \gamma_n n^*}{\gamma_l+(\nu-\tau) k_{+} n^*}\right)^{n_l}\right) ^{n_n}} &= \gamma_n n^* + k_{+} n^*\left(\frac{\tau \gamma_n n^*}{\gamma_l+(\nu-\tau) k_{+} n^*}\right).
\end{split}
\label{steady_state_cond2}
\end{equation}

A simplified case can be illustrated by setting the parameters $n_n=2$, $n_l=1$, and $k_+=0$. Then we have $l^*=\frac{\tau \gamma_n n^*}{\gamma_l}$, and we can drive from the condition \eqref{steady_state_cond2} a quadratic equation for $n^*$ as 
\begin{equation}
\tau^2 \gamma_n^3 n^{* 2}+n^*\left (\gamma_l^2\left (1-\gamma_n\right) -2 \tau \gamma_n^2 \gamma_l\right) -\gamma_n \gamma_l^2=0.
\label{quadratic}
\end{equation}

The condition for the discriminant $\Delta$ to be positive for \eqref{quadratic}, and therefore the existence of two real solutions is
\begin{equation}
\tau<\tau_{SN}=\frac{\gamma_l\left(1-\gamma_n\right)}{4 \gamma_n^2},
\end{equation}
This inequality depends on $\tau$, and below a critical value of $\tau_{SN}$, a saddle-node bifurcation can occur. Assuming this condition is verified, we can find the two real solutions as follows:
\begin{align}
\begin{cases}
n_1^* &= \dfrac{-\left(\gamma_l^2\left(1-\gamma_n\right) -2 \tau \gamma_n^2 \gamma_l\right)}{2 \tau^2 \gamma_n^3}-\dfrac{\sqrt{\gamma_l^4\left(1-\gamma_n\right) ^2-4 \tau \gamma_n^2 \gamma_l^3\left(1-\gamma_n\right)}}{2 \tau^2 \gamma_n^3}, \\
n_2^* &= \dfrac{-\left(\gamma_l^2\left(1-\gamma_n\right) -2 \tau \gamma_n^2 \gamma_l\right)}{2 \tau^2 \gamma_n^3} +\dfrac{\sqrt{\gamma_l^4\left(1-\gamma_n\right) ^2-4 \tau \gamma_n^2 \gamma_l^3\left(1-\gamma_n\right)}}{2 \tau^2 \gamma_n^3}.
\end{cases}
\label{real_sol}
\end{align}
%

%
In the general case, the Hill exponent is not an integer number \cite{gutierrez2012cooperative}, we keep $n_n=2.63$ and $n_l=1.09$, and the other parameters of the system as introduced in Table \ref{tab:parameters}. We observe numerically that below a critical value of $\tau=\tau_c$, two real and positive fixed points $E_{1}$ and $E_{2}$ emerge as a result of the nullclines of Nodal and Lefty intersecting. Notably, the fixed point $E_2$ is stable equilibrium, alongside the trivial steady state $E_0=(0,0)$, which is always stable; this leads to bistability across a wide range of parameters. Meanwhile, $E_1$ remains unstable or acts as a saddle point. In the rest of the paper, we denote $E_2$ by $E_2=(n^*,l^*)$. Therefore, we generate three bifurcation diagrams plotted in $\tau$ direction for different values of $k_+$ in figures \ref{fig:subfig01},\ref{fig:subfig02} and \ref{fig:subfig1}, revealing a saddle-node bifurcation at a critical value $\tau_c$. The region of bistability is illustrated numerically in the plane $(k_{+},\tau)$ figure \ref{fig:subfig2}. An example of bistability, showing two stable nodes separated by a saddle point, is presented in figure \ref{fig:subfig3}. The existence of $E_2$ is not only biologically significant but also crucial in our case for providing the possibility of developing Turin patterns. Once outside the bistability region, we are only left with the trivial steady state $E_0$ as a stable node shown in figure \ref{fig:subfig4}, which maintains its stability even with the addition of diffusion. Consequently, the formation of Turing patterns cannot be expected in such a scenario. 

Throughout our analysis, we assume the existence of the stable equilibrium state $E_2$. To study spatial pattern formation, we follow Turing's approach of examining how this stable state can become unstable through diffusion effects. We start by linearizing system \eqref{N_L_Nondim} around the positive steady state $E_2=(n^*, l^*)$:
\begin{align}
\dot{\mathbf{w}} &= \left.J\right|_{\left(n^*, l^*\right)}\mathbf{w}+D^{d} \nabla^2 \mathbf{w}, \; \text{where} \;\mathbf{w} \equiv \left(\begin{array}{c}
n-n^* \\
l-l^*
\end{array}\right)\; \text{and} \;
D^d = \left(\begin{array}{cc}
1 & 0 \\
0 & d
\end{array}\right),
\label{L}
\end{align}
and $\left. J\right|_{\left(n^*, l^*\right)}$ the Jacobian matrix evaluated at the point $E_2$ is given as follows:
$$
\left.J\right|_{\left(n^*, l^*\right)}=J=\left(\begin{array}{cc}
h_n-\gamma_n -k_+ l^*  & h_l-k_+ n^* \\
\\
\\
\tau h_n -\nu k_+ l^* &  \tau h_l-\gamma_l -\nu k_+ n^*
\end{array}\right) ,
$$
$h_l$ and $h_n$ are the partial derivatives of the regulatory function $h$ depending on Nodal and Lefty evaluated in $(n^{*},l^{*})$.

We examine stability by considering perturbations $\delta e^{i k x+\lambda t}$, where $k$ gives the wavenumber, $\lambda$ describes temporal evolution through its real part, and $\delta$ indicates the perturbation magnitude. The perturbation exists when $\operatorname{det}(\lambda I_d-J+k^2D^d)=0$, yielding the dispersion relation $\lambda=\lambda(k)$. This relation appears as:
\begin{equation}
\lambda^2+t\left(k^2\right) \lambda+g\left(k^2\right)=0,
\label{lambda_char}
\end{equation}
where
\begin{equation}
t\left(k^2\right)=k^2(1+d)-\operatorname{tr} J,
\end{equation}
and
 \begin{equation}
 g\left(k^2\right)=k^4 d-k^2\left( J_{22}+ d J_{11}\right)+\operatorname{det} J .
\end{equation}

For steady state $E_2$ to be stable, we require $\operatorname{tr}(J) < 0$ and $\operatorname{det}(J) > 0$. For the diffusion-driven instability to occur, we need $\operatorname{Re}(\lambda)>0$ for some $k\neq 0$, which is equivalent to $g(k^2)<0$ for some $k\neq 0$; this requires $dJ_{11}+ J_{22}>0$. The critical condition ensuring diffusivity-driven instability occurs when $g(k^2)=0$ has two real solutions $k_{1,2}$, which is equivalent to imposing $\left( dJ_{11}+J_{22}\right)^2-4  d \operatorname{det} J>0$. Figure \ref{turing_space} illustrates the region in the $(k_+,\tau)$ parameter space where diffusion-driven instability and the formation of spatial patterns occur for a fixed set of the other system parameters. The condition for marginal stability at some $k = k_c$ is:
\begin{equation}
\min \left(g\left(k_c^2\right)\right)=0,
\end{equation}
and the minimum of $g$ is attained when:
\begin{equation}
k_c^2=\frac{d_cJ_{11}+J_{22}}{2 d_c},
\label{wavenumber}\end{equation}
Setting $\operatorname{det(J)}=\frac{\left( d_cJ_{11}+J_{22}\right)^2}{4 d_c}$ and using the boundary condition $\left( dJ_{11}+ J_{22}\right)^2-4 d \operatorname{det} J=0$, we obtain the critical value for the diffusion parameter
\begin{equation}
d_c=\frac{J_{11} J_{22}-2 J_{12} J_{21} \pm \sqrt{\left(J_{11} J_{22}-2 J_{12} J_{21}\right)^2-J_{11}^2 J_{22}^2}}{J_{11}^2}.
\end{equation}

When $d>d_c$, our system exhibits a finite $k$ stationary instability that generates patterns, characterized by $g(k^2)<0$ and $Re(\lambda(k^2))>0$. The critical wavenumber lies between the roots $k_1^2$ and $k_2^2$ of $g(k^2)$. Consequently, the equilibrium state $E_2$ becomes unstable, leading to the following theorem.
\begin{theorem}
Assuming the existence of the stable positive equilibrium $E_2 = (n^*, l^*)$ of system \eqref{N_L_Nondim}, then $E_2$ is spatially unstable when $d > d_c$.

\label{th3}
\end{theorem}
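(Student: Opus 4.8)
# Proof Proposal for Theorem \ref{th3}

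The plan is to carry out a standard linear stability analysis of the reaction–diffusion operator obtained in \eqref{L}, exploiting the already-derived dispersion relation \eqref{lambda_char}. Since we assume $E_2$ is a stable equilibrium of the kinetics, we have $\operatorname{tr} J < 0$ and $\det J > 0$ by hypothesis; the goal is to show that adding diffusion with ratio $d > d_c$ produces a wavenumber $k \neq 0$ for which $\operatorname{Re}\lambda(k^2) > 0$.

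First I would expand the perturbation $\mathbf{w}$ in the Neumann eigenfunctions of the Laplacian on $\Omega$ (i.e. $\cos(k_m x)$ with $k_m = m\pi/L$ in one dimension), so that on each eigenmode the PDE reduces to the $2\times 2$ ODE system whose characteristic polynomial is exactly \eqref{lambda_char}. Next I would note that $t(k^2) = k^2(1+d) - \operatorname{tr} J > 0$ for all $k^2 \ge 0$ because $\operatorname{tr} J < 0$ and $d > 0$; hence by the Routh–Hurwitz criterion for the quadratic \eqref{lambda_char}, an eigenvalue with positive real part can occur only through $g(k^2) < 0$. So the instability question reduces entirely to showing $g(k^2) < 0$ for some admissible $k^2 > 0$. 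Then I would analyze the quadratic $g(k^2) = d\,(k^2)^2 - (J_{22} + d J_{11})(k^2) + \det J$ in the variable $s = k^2$: its minimum over $s \ge 0$ is attained at $s = k_c^2 = (d J_{11} + J_{22})/(2d)$, provided this is positive, and the minimum value is $\det J - (d J_{11} + J_{22})^2/(4d)$. A short computation shows that this minimum value is a decreasing function of $d$ once $d$ exceeds the threshold where $d J_{11} + J_{22} > 0$, and that it vanishes precisely at $d = d_c$ (this is exactly the defining relation $(d_c J_{11} + J_{22})^2 = 4 d_c \det J$ used to derive $d_c$). Therefore for $d > d_c$ the minimum is strictly negative, so $g$ has two real roots $0 < k_1^2 < k_2^2$ and $g(s) < 0$ on the open interval $(k_1^2, k_2^2)$.

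Finally I would pick any admissible wavenumber $k^2 \in (k_1^2, k_2^2)$ — in the continuous/large-domain setting such a mode always exists, and on a finite domain one assumes $L$ large enough that some $k_m^2$ falls in this band, which is the implicit standing assumption of the section — and conclude that the corresponding eigenvalue satisfies $\lambda(k^2) = \tfrac{1}{2}\bigl(-t(k^2) + \sqrt{t(k^2)^2 - 4 g(k^2)}\bigr) > 0$ since $g(k^2) < 0$. Hence the homogeneous state $E_2$ is linearly unstable with respect to spatial perturbations, which is the assertion of the theorem.

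The main obstacle, and the only genuinely non-routine point, is verifying the monotonicity claim that $\min_{s\ge0} g(s)$ crosses zero exactly at $d = d_c$ and stays negative thereafter: one must check that $d_c$ as given by the displayed formula is indeed the relevant (smaller, admissible) root of $(dJ_{11}+J_{22})^2 - 4d\det J = 0$, that it is positive, and that on the branch $d > d_c$ one has $dJ_{11} + J_{22} > 0$ so that $k_c^2 > 0$ is a legitimate interior minimizer rather than the boundary value $s = 0$. This requires the sign conditions $J_{11} > 0$, $J_{22} < 0$ (the activator–inhibitor structure), together with $\det J > 0$; I would record these as the structural hypotheses implicit in "the Turing space is nonempty" and dispatch the algebra by treating $\phi(d) := (dJ_{11}+J_{22})^2 - 4d\det J$ as an upward parabola in $d$ with $\phi(0) = J_{22}^2 > 0$, whose two positive roots bracket the instability window, with $d_c$ being the left endpoint.
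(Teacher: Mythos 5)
Your proposal is correct and follows essentially the same route as the paper: the paper does not give a separate proof environment for Theorem~\ref{th3} but establishes it through exactly the linear stability derivation you describe (dispersion relation \eqref{lambda_char}, positivity of $t(k^2)$ from $\operatorname{tr}J<0$, minimization of $g$ at $k_c^2=(dJ_{11}+J_{22})/(2d)$, and the defining relation $(d_cJ_{11}+J_{22})^2=4d_c\det J$ for $d_c$). Your extra care about the admissible-mode issue on a finite domain is a point the paper also acknowledges, though it defers it to the discussion immediately following the theorem rather than folding it into the instability claim itself.
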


Our previous analysis of diffusion-driven instability did not account for domain geometry and boundary conditions. We examine these effects to determine which patterns can emerge within the geometric constraints. In a finite domain, having $d>d_c$ alone does not guarantee pattern formation since the critical wavenumber $k_c$ might not match any admissible mode for the given domain and boundary conditions. However, when $d>d_c$, we find a range $\left(k_1^2, k_2^2\right)$ of wavenumbers where $g\left(k^2\right)<0$ and $\operatorname{Re}(\lambda)>0$. The boundaries of this interval occur at $k_1^2$ and $k_2^2$ where $g\left(k^2\right)=0$, as shown in figure \ref{lambda}. This interval must be large enough to contain at least one mode admissible by the Neumann boundary conditions. 
We focus on pattern formation in two dimensions, particularly in a rectangular domain $\Omega=\left[0, L_x\right] \times\left[0, L_y\right]$. The solutions to the linear system \eqref{L} with Neumann boundary conditions are: 
\begin{equation}
\mathbf{w}=\sum_{s_1, z_1 \in \mathbb{N}} \mathbf{f}_{s_1 z_1} e^{\lambda (k_{s_1 z_1}^2) t} \cos (\frac{s_1 \pi}{L_x} x) \cos(\dfrac{z_1 \pi}{L_y} y),
\end{equation}
\begin{equation}
 k_{s_1 z_1}^2=\left(\dfrac{s_1 \pi}{L_x}\right)^2+\left(\dfrac{z_1 \pi}{L_y}\right)^2 ,
 \end{equation}
where $\mathbf{f}_{s_1, z_1}$ denotes the Fourier coefficients corresponding to the initial conditions, while the values $\lambda(k_{s_1 z_1}^2)$ are obtained from the dispersion relation \eqref{lambda_char}. Pattern formation occurs when mode pairs satisfy

\begin{equation}
\begin{aligned}
 & k_1^2<k^2 \equiv \Phi_1^2+\Psi_1^2<k_2^2, \quad \text { where } \Phi_1 \equiv \frac{s_1 \pi}{L_x}, \Psi_1 \equiv \frac{z_1 \pi}{L_y}, \\ 
 & \text { and} \; \lambda(k^2)>0 ,
 \end{aligned}
 \label{wave}
\end{equation}
this condition occurs when $d > d_c$. We will focus on the scenario where only a single unstable eigenvalue satisfies the Neumann boundary conditions and lies between $k_1$ and $k_2$, as specified in Equation \eqref{wave}. We denote this eigenvalue as $\bar{k}_c$, differentiating it from the critical value $k_c$.

For a specific $\bar{k}_c$ such that $k_1 \leq \bar{k}_c \leq k_2$, the rectangular domain may support one, two, or several integer pairs $(m, n)$ that satisfy:
\begin{equation}
\bar{k}_c^2=\Phi_1^2+\Psi_1^2=\left(\frac{s_1 \pi}{L_x}\right)^2+\left(\frac{z_1 \pi}{L_y}\right)^2,
\label{wave2}
\end{equation}
Each case corresponds to an eigenvalue $\lambda$ of single, double, or higher multiplicity. The domain dimensions $L_x$ and $L_y$ determine which case occurs and what patterns emerge. Throughout this work, we limit our study to the single multiplicity case.

\begin{figure*}
  \centering
  \subfloat[]{\includegraphics[width=0.46\textwidth]{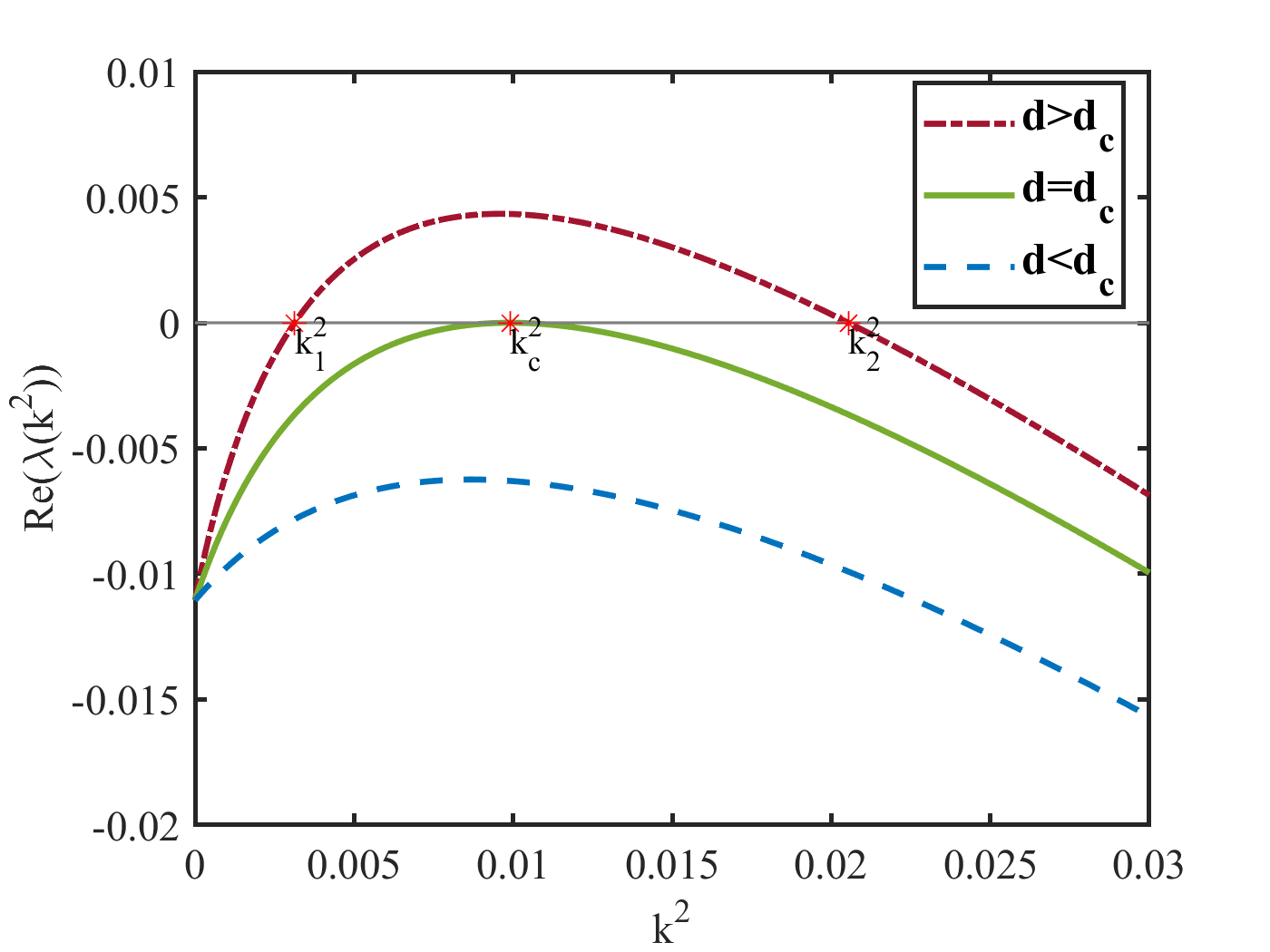}\label{lambda}}
  \hspace{15mm}
  \subfloat[]{\includegraphics[width=0.435\textwidth]{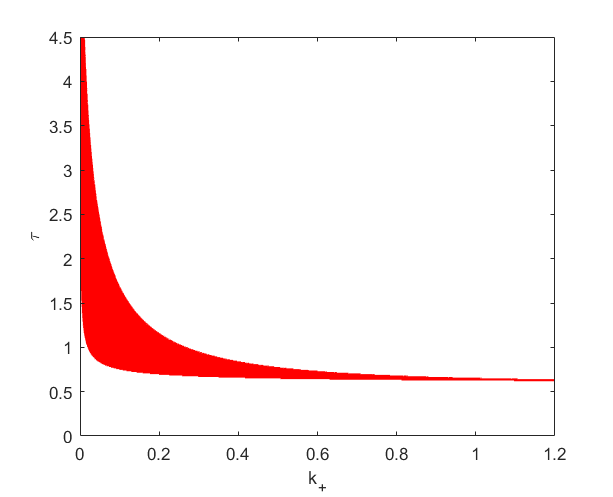}\label{turing_space}}
  \caption{(a) The plot of the real part of the growth rate of the k-th mode for different values of $d$. A band of growing mode is present for $d=28>d_c=18.34$. (b) Turing space in the $(\tau,k_{+})$ plane (the red region). The parameters are chosen as $\tau=3.10$, $k_{+}=0$, $n_n=2.63$, $n_l=1.09$, $\gamma_n=0.02$, $\gamma_l=0.06$ and $\nu=0.62$. }
\end{figure*}
\begin{figure*}[t]
     \centering
     \begin{subfigure}[b]{0.45\textwidth}
         \centering
         \includegraphics[width=\textwidth]{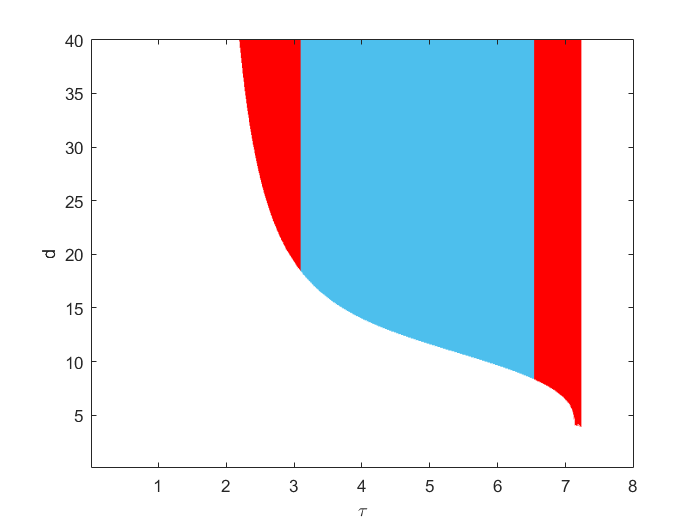}
         \caption{}
         \label{fig0:subfig1}
     \end{subfigure}
     \hfill
     \begin{subfigure}[b]{0.45\textwidth}
         \centering
         \includegraphics[width=\textwidth]{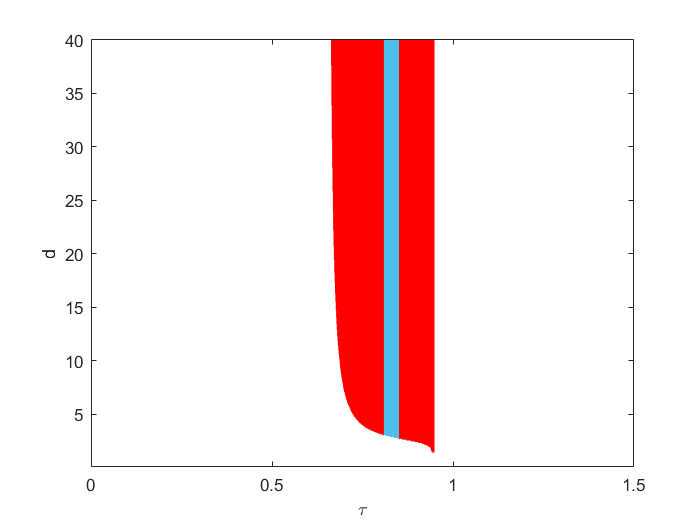}
         \caption{}
         \label{fig0:subfig2}
     \end{subfigure}
     \hfill
     \begin{subfigure}[b]{0.45\textwidth}
         \centering
         \includegraphics[width=\textwidth]{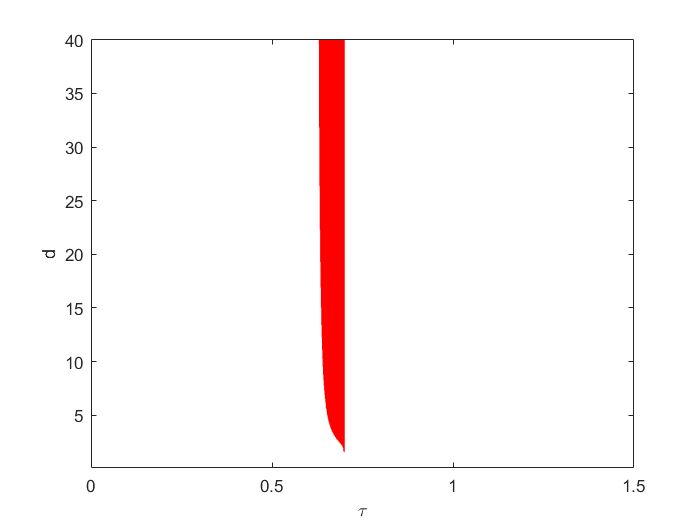}
         \caption{}
         \label{fig0:subfig3}
     \end{subfigure}
          \hfill
     \begin{subfigure}[b]{0.45\textwidth}
         \centering
         \includegraphics[width=\textwidth]{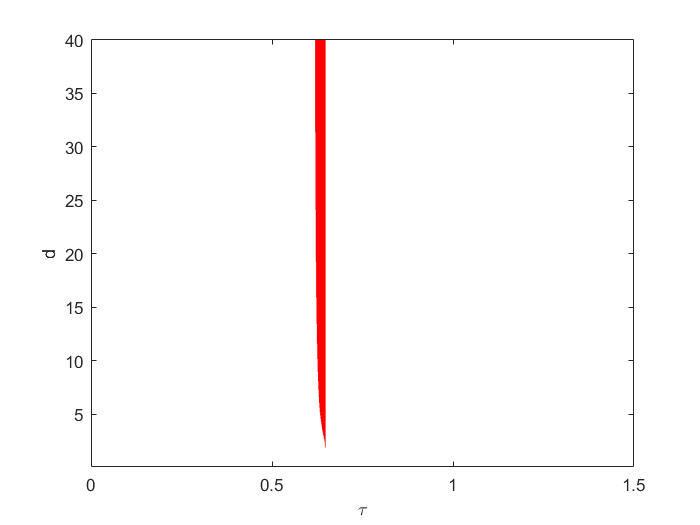}
         \caption{}
         \label{fig0:subfig4}
     \end{subfigure}
        \caption{ The instability Turing region in the plane $\left(\tau, d\right)$ is shown. In blue, the supercritical region, in red is the subcritical region. (a) $k_+=0$. (b) $k_+=0.3$. (c) $k_+=0.7$. (d) $k_+=1.04$. The parameters are chosen as $n_n=2.63$, $n_l=1.09$, $\gamma_n=0.02$, $\gamma_l=0.06$ and $\nu=0.62$ }
        \label{fig0:subfigures4}
\end{figure*}
\section{Weakly nonlinear analysis
}
\label{sec5}

In this section, we employ weakly nonlinear analysis near the bifurcation point to determine the amplitude equation. Our approach combines the multiple scale method \cite{newell1969finite} with asymptotic analysis around the stable steady state.

The multiple-scale method aims to separate the evolution of the solution into two distinct time scales: a fast time, where linear theory remains valid, and a slow time, where nonlinear effects become significant, rendering linear analysis insufficient to describe the behavior of the system \cite{matkowsky1970nonlinear}. On this slow time scale, the amplitude of the pattern evolves. To capture this, we introduce two scaled coordinates that divide time into two components: the slow time $T=\varepsilon t$, and the fast time $t$, where $\varepsilon $ serves as the control parameter measuring the system's proximity to the bifurcation, as indicated in \eqref{bif1}. We expand our solution in terms of $\varepsilon$, where the main term combines the basic pattern with its slowly varying amplitude \cite{hoyle2006pattern}. Our analysis focuses on temporal pattern modulation, without considering slow spatial variations. 

Near the homogeneous steady state $E_2$, we rewrite our original model as:
\begin{equation}
\begin{aligned}
\frac{\partial \mathbf{w}}{\partial t} &= \mathcal{L}^d\mathbf{w} + \frac{1}{2} \mathcal{Q}(\mathbf{w}, \mathbf{w})+ \dfrac{1}{6}
\begin{pmatrix}
h_{nnn}\left(n-n^*\right)^3 + 3h_{nnl}\left(n-n^*\right)^2\left(l-l^*\right) \\
+ 3h_{nll}\left(n-n^*\right)\left(l-l^*\right)^2 + h_{lll}\left(l-l^*\right)^3 \\
\\
\tau\big(h_{nnn}\left(n-n^*\right)^3 + 3h_{nnl}\left(n-n^*\right)^2\left(l-l^*\right) \\
+ 3h_{nll}\left(n-n^*\right)\left(l-l^*\right)^2 + h_{lll}\left(l-l^*\right)^3\big)
\end{pmatrix},
\end{aligned}
\label{recast}
\end{equation}
where $w$ is defined by equation \eqref{L}, with $\mathcal{L}^d$ representing the linear operator:
\begin{equation}
\mathcal{L}^d=J+D^d\nabla^2, 
\end{equation}
$J$ and $D$ are given by \eqref{L}.         
And the bi-linear operator $\mathcal{Q}$ acting on $x=(x^n,x^l)$ and $y=(y^n,y^l)$, respectively, is defined by 
\begin{equation}
\begin{aligned}
\mathcal{Q}(x,y)=&\left(\begin{array}{c} h_{nn} x^n y^n+\left( h_{nl}-k_{+}\right)\left(x^n y^l+x^ly^n\right)+ h_{ll}
 x^l y^l\\
 \\
 \tau h_{nn} x^n y^n+\left( \tau h_{nl}-\nu k_{+}\right) \left(x^n y^l+x^ly^n\right)+ 
 \tau h_{ll}x^l y^l
\end{array}\right),
\end{aligned}
\end{equation}
the terms $h_{nn}$, $h_{ll}$, and $h_{nl}$ represent the second-order partial derivatives of $h$, relating to Nodal and Lefty, and are calculated at the point $(n^*,l^*)$. Similarly, $h_{nnn}$, $h_{lll}$, $h_{nnl}$, and $h_{nll}$ denote the third-order partial derivatives of $h$, also evaluated at $(n^*,l^*)$.

Our goal is to derive the Stuart-Landau equation that captures the pattern dynamics near bifurcation, which we achieve by expanding both the temporal operator and bifurcation parameter as follows:
\begin{equation}
\frac{\partial}{\partial t}= \varepsilon \frac{\partial}{\partial T_1}+ \varepsilon^2 \frac{\partial}{\partial T_2},
 \label{time1}
 \end{equation}
\begin{equation}
d=d_c+\textcolor{blue}{\varepsilon} d_1+\varepsilon^2 d_2+O\left(\varepsilon^3\right),
\label{bif1}
\end{equation}
 where $\varepsilon$ is the control parameter, we define it as $\varepsilon^2=(d-d_c)/d_c$, and we expend the solution as
 
 \begin{equation}
\mathbf{w}=\varepsilon \mathbf{w}_1+\varepsilon^2 \mathbf{w}_2+\varepsilon^3 \mathbf{w}_3+O\left(\varepsilon^4\right).
\label{Wn1}
\end{equation}

Substituting expansions \eqref{time1}-\eqref{Wn1} into \eqref{recast}, with $\mathbf{w}_i=(n_i,l_i)^T$, yields the following linear equations at each order of $\varepsilon$:
\begin{equation}
O(\varepsilon): \quad \mathcal{L}^{d_c} \mathbf{w}_1=\mathbf{0},
\label{1}
\end{equation}

\begin{equation}
O\left(\varepsilon^2\right): \quad \mathcal{L}^{d_c} \mathbf{w}_2=\mathbf{F},
\label{2}
\end{equation}

\begin{equation}
O\left(\varepsilon^3\right): \quad \mathcal{L}^{d_c} \mathbf{w}_3= \mathbf{G}, 
\label{3}
\end{equation}
\vspace{2mm}
where
$$
\mathbf{F}=\frac{\partial \mathbf{w}_1}{\partial T_1}-\frac{1}{2}\mathcal{Q}(\mathbf{w}_1,\mathbf{w}_1)-\left(\begin{array}{cc}
0 & 0 \\
0 &d_1
\end{array}\right) \nabla^2 \mathbf{w}_1,
$$
and
\begin{equation}
\begin{split}
\mathbf{G} &= \frac{\partial \mathbf{w}_1}{\partial T_2}+\frac{\partial \mathbf{w}_2}{\partial T_1}-\mathcal{Q}(\mathbf{w}_1,\mathbf{w}_2)-\left(\begin{array}{cc}
0 & 0\\
0 & d_2
\end{array}\right) \nabla^2 \mathbf{w}_1-\left(\begin{array}{cc}
0 & 0\\
0 & d_1
\end{array}\right) \nabla^2 \mathbf{w}_2\\
&-\dfrac{1}{6}\left(\begin{array}{l}
h_{nnn}n_1^3+3h_{nnl}n_1^2l_1+ 3h_{lln}n_1l_1^2+h_{ll}l_1^3 \\
\tau \left( h_{nnn}n_1^3+3h_{nnl}n_1^2l_1+ 3h_{lln}n_1l_1^2+h_{lll}l_1^3\right)
\end{array}\right) .
\end{split}
\end{equation}

Under Neumann boundary conditions, equation \eqref{1} admits the solution: 
\begin{equation}
\mathbf{w}_1=A_1\left(T_1, T_2\right) \phi \cos \left(\Phi_1 x\right) \cos \left(\Psi_1 y\right),
\label{sol1}
\end{equation}
$A_1$ represents the amplitude of the pattern depending on the low scale time, and the vector  
$$
\mathbf{\phi}=\left(\begin{array}{c}
1 \\
M
\end{array}\right) \in \operatorname{Ker}\left(J-\bar{k}_c^2 D^{d_c}\right) \text { with } M=-\dfrac{J_{11}-\bar{k}_c^2}{J_{12}}. 
$$

\vspace{0.2cm}

At the second order we write the vector $F$ as 
\begin{equation}
\begin{split}
F = &\left(\frac{\partial A_1}{\partial T_1} \phi+d_1 \bar{k_c}^2 A_1\left(\begin{array}{c}
0 \\
M
\end{array}\right)\right) \cos \left(\Phi_1 x\right)\cos \left(\Psi_1 y\right)-\frac{1}{8} A_1^2\sum_{i, j=0.2} \mathcal{Q}(\phi, \phi) \cos \left(i \Phi_1 x\right) \cos \left(j \Psi_1 y\right),
\end{split}
\end{equation}
imposing the Fredholm alternative  condition $\langle F, \psi\rangle=0$, where $<,>$ denotes the scalar product in $L^2(0, \dfrac{2 \pi}{\bar{k}_c})$ and
\begin{equation}
\mathbf{\psi}=\left(\begin{array}{c}
1 \\
M^*
\end{array}\right) \cos \left(\Phi_1 x\right)\left(\Psi_1 y\right) 
\in \operatorname{Ker}\left\{\left(J-\bar{k}_c^2 D^{d_c}\right)^{\dagger}\right\},
\label{psi}
\end{equation}
with $M^*=-\dfrac{J_{11}-\bar{k}_c^2}{J_{21}}$, and $\left(J-k_c^2 D^{d_c}\right)^{\dagger}$ denotes the adjoint matrix of $\left(J-\bar{k}_c^2 D^{d_c}\right)$. At this order we find the Stuart-Landau equation as follow
\begin{equation}
\frac{\partial A_1}{\partial T_1}=\alpha A_1, \quad \alpha=-\dfrac{d_1 MM^* \bar{k}_c^2}{\left( 1+MM^*\right)},
\end{equation}
the amplitude behavior cannot be determined from this order of analysis. We therefore extend to higher order, imposing $T_1=0$ and $d_1=0$ to eliminate secular terms. With the Fredholm alternative satisfied, equation \eqref{2} admits the solution:
\begin{equation}
\mathbf{w}_2=A_1^2 \sum_{i, j=0,2} \mathbf{w}_{2 i j} \cos \left(i \Phi_1 x\right) \cos \left(j \Psi_1 y\right),
\label{sol2}
\end{equation}
the vectors $\mathbf{w}_{2ij}$ are determined by solving the following linear systems:
\begin{equation}
\mathfrak{L}_{i j}^1 \mathbf{w}_{2 i j}=-\frac{1}{8} \mathcal{Q}(\phi, \phi), \quad i, j=0,2,
\label{w_2_ij}
\end{equation}
with $\mathfrak{L}_{i j}^1=J-\left(i^2 \Phi_1^2+j^2 \Psi_1^2\right) D^{d_c}$.\\

The third-order term is obtained by evaluating vector $G$ with solutions \eqref{sol1} and \eqref{sol2}, resulting in:
\begin{equation}
\mathbf{G}=\left(\frac{d A_1}{d T_2} \phi+A_1 \mathbf{G}_{11}^{(1)}+A_1^3 \mathbf{G}_{11}^{(3)}\right) \cos \left(\Phi_1 x\right)\cos \left(\Psi_1x\right)+A_1^3 \mathbf{G}^*,
\end{equation}
where
$$\begin{aligned}
& \mathbf{G}_{11}^{(1)}=d_2 \bar{k_c}^2\left(\begin{array}{c}0 \\ M\end{array}\right),\\
& \mathbf{G}_{11}^{(3)}=-\left(\mathcal{Q}\left(\phi, w_{200}+\dfrac{1}{2}w_{202}+\dfrac{1}{2}w_{220}+\dfrac{1}{4}w_{222}\right)+\frac{9}{16} \boldsymbol{\Gamma}\right),
\end{aligned}$$
$$\begin{aligned}
& \mathbf{G}^*= \mathbf{G}_1^* \cos \left(3 \Phi_1 x\right) \cos \left(\Psi_1 y\right)+ \mathbf{G}_2^* \cos \left(\Phi_1 x\right) \cos \left(3 \Psi_1 y\right)+ \mathbf{G}_3^* \cos \left(3 \Phi_1 x\right) \cos \left(3 \Psi_1 y\right),
\end{aligned}$$
with
$$\begin{aligned} & \mathbf{G}_1^*=-\frac{1}{2} \mathcal{Q}\left(\phi, \mathbf{w}_{220}+\frac{1}{2} \mathbf{w}_{222}\right)-\frac{3}{16} \boldsymbol{\Gamma}, \\ & \mathbf{G}_2^*=-\frac{1}{2} \mathcal{Q}\left(\phi, \mathbf{w}_{202}+\frac{1}{2} \mathbf{w}_{222}\right)-\frac{3}{16} \boldsymbol{\Gamma}, \\ & \mathbf{G}_3^*=-\frac{1}{4} \mathcal{Q}\left(\phi, \mathbf{w}_{222}\right)-\frac{1}{16} \boldsymbol{\Gamma} ,\end{aligned}$$
and 
$$
\mathbf{\Gamma}=\dfrac{1}{6}\left(\begin{array}{l}
h_{nnn}+3h_{nnl}M+ 3h_{lln}M^2+h_{lll}M^3 \\
\tau \left(h_{nnn}+3h_{nnl}M+ 3h_{lln}M^2+h_{lll}M^3\right)
\end{array}\right).
$$

\vspace{0.2cm}
By enforcing the solvability condition $\langle \mathbf{G}, \psi \rangle = 0$, we derive the Stuart–Landau equation that governs the amplitude $A_1$:

\begin{equation}
\frac{\partial A_1}{\partial T_2} = \sigma A_1 - L A_1^3,
\label{landeau_2_positive}
\end{equation}

The coefficients $\sigma$ and $L$ are computed as follows:
\begin{equation}
\sigma = -\frac{\left\langle \mathbf{G}_{11}^{(1)} \cos(\Phi_1 x) \cos(\Psi_1 y),\, \psi \right\rangle}
{\left\langle \phi\, \cos(\Phi_1 x) \cos(\Psi_1 y),\, \psi \right\rangle},\;\text{and}
\;
L = \frac{\left\langle \mathbf{G}_{11}^{(3)} \cos(\Phi_1 x) \cos(\Psi_1 y),\, \psi \right\rangle}
{\left\langle \phi\, \cos(\Phi_1 x) \cos(\Psi_1 y),\, \psi \right\rangle}.
\end{equation}

To analyze the qualitative dynamics of the Stuart–Landau equation~\eqref{landeau_2_positive}, we distinguish two cases based on the sign of the Landau constant $L$:
\begin{enumerate}
\item $L > 0$: In this situation, a supercritical bifurcation occurs in the system.\\
\item $L < 0$: In this situation, a subcritical bifurcation occurs in the system.
\end{enumerate}

\vspace{0.2cm}

\subsection{The supercritical bifurcation case}
Considering the scenario where both $\sigma$ and $L$ are positive parameters, we find that the solution of the Stuart–Landau equation converges to a stable stationary state given by $A_{1\infty} = \sqrt{\sigma / L}$. This indicates that, over time, the system evolves toward a state in which the amplitude of spatial oscillations is primarily determined by $A_{1\infty}$. This leads us to the following proposition.

\begin{prop}
Suppose the following two conditions are satisfied:
\begin{enumerate}
\item The parameter $\varepsilon^2 = \dfrac{d - d_c}{d_c}$ is sufficiently small such that the stable steady state $E_2=(n_*,\, l_*)$ of system~\eqref{N_L_Nondim} in two dimensions becomes unstable only to modes corresponding to the eigenvalue $\bar{k}_c^2$ defined in~\eqref{wave2}, and there exists a unique pair of integers $(z_1,\, s_1)$ satisfying equation~\eqref{wave2}.

\item In equation~\eqref{landeau_2_positive}, the Landau coefficient  $L$ has a positive value.

\end{enumerate}

Therefore, by applying the weakly nonlinear analysis, the asymptotic solution of our system of Nodal and Lefty~\eqref{N_L_Nondim} is given as follows
\begin{equation}
\begin{aligned}
\begin{pmatrix}
n(x, y) \\
l(x, y)
\end{pmatrix}
&=
\begin{pmatrix}
n_* \\
l_*
\end{pmatrix}
+ \varepsilon A_{1\infty} \phi\, \cos\left(\Phi_1 x\right) \cos\left(\Psi_1 y\right)+ \varepsilon^2 A_{1\infty}^2 \sum_{i,j=0,2} \mathbf{w}_{2ij}\, \cos\left( i\Phi_1 x\right) \cos\left( j\Psi_1 y\right)+ O\left(\varepsilon^3\right),
\end{aligned}
\end{equation}
where $\phi$ is defined in equation~\eqref{psi}, and $\mathbf{w}_{2ij}$ are solutions to the linear systems in~\eqref{w_2_ij}.
\label{prop3}
\end{prop}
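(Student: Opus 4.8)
The plan is to justify the stated asymptotic form by carrying the multiple–scale hierarchy \eqref{1}--\eqref{3} to third order, solving each linear problem under the Fredholm solvability condition, and then reading off the large–time behaviour of the amplitude from the Stuart--Landau equation \eqref{landeau_2_positive}. Hypothesis~(1) is what makes the scheme go through: the single–multiplicity assumption guarantees that $\operatorname{Ker}(\mathcal{L}^{d_c})$, with Neumann boundary conditions on $\Omega=[0,L_x]\times[0,L_y]$, is exactly one–dimensional and spanned by $\phi\cos(\Phi_1 x)\cos(\Psi_1 y)$, so \eqref{1} forces $\mathbf{w}_1$ to take the form \eqref{sol1} with a single scalar amplitude $A_1$, while the adjoint kernel is spanned by $\psi$ as in \eqref{psi}. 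First I would record this one–dimensionality and the associated Fredholm alternative (solvability of $\mathcal{L}^{d_c}\mathbf{u}=\mathbf{b}$ iff $\langle\mathbf{b},\psi\rangle=0$) as the backbone of the argument.

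At $O(\varepsilon^2)$ I would compute $\mathcal{Q}(\mathbf{w}_1,\mathbf{w}_1)$, expand the products $\cos^2(\Phi_1 x)\cos^2(\Psi_1 y)$ onto the modes $(i,j)\in\{0,2\}^2$, and impose $\langle\mathbf{F},\psi\rangle=0$. Since none of those modes is the critical one, only the $\cos(\Phi_1 x)\cos(\Psi_1 y)$ part of $\mathbf{F}$ feeds the solvability condition, giving $\partial_{T_1}A_1=\alpha A_1$; taking $d_1=0$ (forced by $\varepsilon^2=(d-d_c)/d_c$, which then gives $d_2=d_c$) kills this secular term, so $A_1=A_1(T_2)$ and $\mathbf{w}_2$ is given by \eqref{sol2} with $\mathbf{w}_{2ij}$ the unique solutions of \eqref{w_2_ij}. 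Here one must check that $\mathfrak{L}^1_{ij}=J-(i^2\Phi_1^2+j^2\Psi_1^2)D^{d_c}$ is invertible for $(i,j)\in\{0,2\}^2$: this holds because $\det J>0$ (stability of $E_2$) handles $(0,0)$, and $g$ evaluated at $i^2\Phi_1^2+j^2\Psi_1^2$ is nonzero outside the instability band $(k_1^2,k_2^2)$ for the remaining cases — a point I would state explicitly as a genericity requirement on the domain.

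At $O(\varepsilon^3)$ I would substitute \eqref{sol1} and \eqref{sol2} into $\mathbf{G}$, collect the coefficient of $\cos(\Phi_1 x)\cos(\Psi_1 y)$ — the harmonics $\cos(3\Phi_1 x)\cos(\Psi_1 y)$, $\cos(\Phi_1 x)\cos(3\Psi_1 y)$, $\cos(3\Phi_1 x)\cos(3\Psi_1 y)$ being $L^2$–orthogonal to $\psi$ — and impose $\langle\mathbf{G},\psi\rangle=0$; this produces exactly \eqref{landeau_2_positive} with the displayed $\sigma$ and $L$. A short auxiliary argument shows $\sigma>0$ in the Turing regime: with $d_1=0$ one has $\mathbf{G}_{11}^{(1)}=d_c\bar{k}_c^2(0,M)^{T}$, and the sign of the resulting quotient is governed by the same inequalities that yield $g(\bar{k}_c^2)<0$ when $d>d_c$. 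With $\sigma>0$ and, by hypothesis~(2), $L>0$, the scalar ODE $\dot A_1=\sigma A_1-L A_1^3$ has $A_1=0$ as an unstable equilibrium and $A_1=\pm A_{1\infty}=\pm\sqrt{\sigma/L}$ as asymptotically stable equilibria, so any solution started from positive data converges to $A_{1\infty}$ as $T_2\to\infty$. Inserting this limit into $\mathbf{w}=\varepsilon\mathbf{w}_1+\varepsilon^2\mathbf{w}_2+O(\varepsilon^3)$ and adding the base state $(n_*,l_*)^{T}$ yields precisely the claimed expansion.

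The main obstacle is twofold. First, the third–order bookkeeping is heavy: one must expand every triple trigonometric product, separate the pieces that regenerate the critical mode $(\Phi_1,\Psi_1)$ from the genuinely new harmonics, and assemble $\mathbf{G}_{11}^{(3)}$ correctly, since any slip there propagates straight into the sign of $L$ and hence into the supercritical/subcritical classification. Second, strictly speaking the proposition asserts that the formal series approximates the \emph{exact} solution of \eqref{N_L_Nondim}; a fully rigorous statement would require a remainder estimate showing $\mathbf{w}-\varepsilon\mathbf{w}_1-\varepsilon^2\mathbf{w}_2=O(\varepsilon^3)$ on the relevant time window, exploiting the spectral gap supplied by hypothesis~(1). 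I would follow the usual weakly nonlinear convention and present the expansion as a consistent asymptotic approximation, flagging this remainder control as the place where additional analysis (beyond the formal computation) would be needed.
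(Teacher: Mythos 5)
Your proposal is correct and follows essentially the same route as the paper: the proposition is just the summary of the multiple--scales hierarchy \eqref{1}--\eqref{3}, the Fredholm solvability conditions yielding the cubic Stuart--Landau equation \eqref{landeau_2_positive}, and the substitution of the stable amplitude $A_{1\infty}=\sqrt{\sigma/L}$ into the expansion \eqref{Wn1}. Your added remarks on the invertibility of $\mathfrak{L}^1_{ij}$ off the critical mode and on the missing rigorous remainder estimate are refinements the paper leaves implicit, not a different argument.
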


\subsection{The subcritical bifurcation case:}
In the case where the coefficient $L$ is negative, equation~\eqref{landeau_2_positive} fails to accurately describe the behavior of the amplitude near the bifurcation. Consequently, following the recommendation in \cite{becherer2009probing}, we extend our weakly nonlinear analysis to $O\left(\varepsilon^5\right)$ to derive the quintic Stuart–Landau equation. This extension necessitates introducing new expansions of the partial operator in time and bifurcation parameters as follows:

\begin{equation}
 \frac{\partial}{\partial t}= \varepsilon \frac{\partial}{\partial T_1}+ \varepsilon^2 \frac{\partial}{\partial T_2}+ \varepsilon^3\frac{\partial}{\partial T_3}+ \varepsilon^4 \frac{\partial}{\partial T_4},
 \label{time2}
 \end{equation}
\begin{equation}
d=d_c+\varepsilon^2 d_2+\varepsilon^3 d_3+\varepsilon^4 d_4+O\left(\varepsilon^5\right).
\label{bif2}
\end{equation}

By adding higher order partial derivatives of the nonlinearities to \eqref{recast}  and substituting equations \eqref{time2}, \eqref{bif2} and \eqref{Wn1} into it, we find that equations \eqref{1}-\eqref{3} remain valid. Assuming the amplitude equation holds, the solvability condition $\langle\mathbf{G}, \mathbf{\psi}\rangle=0$ is satisfied. Then, the solution of the third order is given by 
\begin{equation}
\begin{split}
\mathbf{w}_3 &= A_1 \mathbf{w}_{311}^{(1)} \cos \left(\Phi_1 x\right) \cos \left(\Psi_1 y\right)+ A_1^3 \sum_{i, j=1,3} \mathbf{w}_{3 i j} \cos \left(i \Phi_1 x\right) \cos \left(j \Psi_1 y\right),
\end{split}
\label{sol3}
\end{equation}
where the vectors $\mathbf{w}_{3ij}$, with $i$ and $j$ ranging from 1 to 3, are defined by
\begin{equation}
\begin{aligned} & \mathfrak{L}_{11}^1 \mathbf{w}_{311}^{(1)}=\mathbf{G}_{11}^{(1)}+\sigma \phi, \quad \mathfrak{L}_{11}^1 \mathbf{w}_{311}=\mathbf{G}_{11}^{(3)}-L \phi, \\ & \mathfrak{L}_{31}^1 \mathbf{w}_{331}=\mathbf{G}_1^*, \quad \mathfrak{L}_{13}^1 \mathbf{w}_{313}=\mathbf{G}_2^*, \quad \mathfrak{L}_{33}^1 \mathbf{w}_{333}=\mathbf{G}_3^* .\end{aligned}
\label{w_3_ij}
\end{equation}

For the fourth order, we obtain the following linear equation 
\begin{equation}
O\left(\varepsilon^4\right): \quad \mathcal{L}^{d_c} \mathbf{w}_4=\mathbf{H},
\end{equation}
where
$$
\begin{aligned}
\mathbf{H} &= \frac{\partial \mathbf{w}_1}{\partial T_3}+\frac{\partial \mathbf{w}_2}{\partial T_2}-\frac{1}{2} \mathcal{Q}\left(\mathbf{w}_2, \mathbf{w}_2\right) -\mathcal{Q}\left(\mathbf{w}_1, \mathbf{w}_3\right) 
-\left(\begin{array}{cc}
0 & 0 \\
0 & d_2
\end{array}\right) \nabla^2 \mathbf{w}_2-\left(\begin{array}{cc}
0 & 0 \\
0 & d_3
\end{array}\right) \nabla^2 \mathbf{w}_1 \\
&-\dfrac{1}{24}\begin{pmatrix}
h_{nnnn}n_1^4+4h_{nnnl}n_1^3l_1 
+ 4h_{llln}n_1l_1^3+6h_{nnll}n_1^2l_1^2+h_{llll}l_1^4 \\
\tau(h_{nnnn}n_1^4+4h_{nnnl}n_1^3l_1
+ 4h_{llln}n_1l_1^3+6h_{nnll}n_1^2l_1^2+h_{llll}l_1^4)
\end{pmatrix},
\end{aligned}
$$
and $h_{nnnn}$, $h_{nnnl}$, $h_{nnll}$, $h_{llln}$ and $h_{llll}$ are the fourth-order partial derivatives of
$h$, evaluated at $(n^*,l^*)$.

\vspace{0.2cm}

Integrating the solutions \eqref{sol1}, \eqref{sol2} and \eqref{sol3} into the vector $H$, and completing a rigorous calculation, we obtain  
\begin{equation}
\begin{aligned}
\mathbf{H} &= \mathbf{H}_{11} \cos \left(\Phi_1 x\right) \cos \left(\Psi_1 y\right) + \sum_{i, j=0,2} A_1^2 \mathbf{H}_{i j}^{(1)} \cos \left(i \Phi_1 x\right) \cos \left(j \Psi_1 y\right)+ \sum_{i, j=0,2,4} A_1^4 \mathbf{H}_{i j} \cos \left(i \Phi_1 x\right) \cos \left(j \Psi_1 y\right),
\end{aligned}
\end{equation}
where 
$$\begin{aligned}
\mathbf{H}_{00}^{(1)}&= 2 \sigma \mathbf{w}_{200}-\frac{1}{4} \mathcal{Q}\left(\phi, \mathbf{w}_{311}^{(1)}\right), \\
\mathbf{H}_{00}&= -2 L \mathbf{w}_{200}-\frac{1}{4} \mathcal{Q}\left(\phi, \mathbf{w}_{311}\right)
-\frac{1}{2} \mathcal{Q}\left(\mathbf{w}_{200}, \mathbf{w}_{200}\right)-\frac{1}{4} \mathcal{Q}\left(\mathbf{w}_{220}, \mathbf{w}_{220}\right)
-\frac{1}{4} \mathcal{Q}\left(\mathbf{w}_{202}, \mathbf{w}_{202}\right)-\frac{1}{8} \mathcal{Q}\left(\mathbf{w}_{222}, \mathbf{w}_{222}\right)-\frac{9}{64} \boldsymbol{\Theta},\\
\mathbf{H}_{11}&=\frac{\partial A_1}{\partial T_3} \phi+d_3 \bar{k_c}^2\left(\begin{array}{c}
0 \\
M
\end{array}\right),\\
\mathbf{H}_{20}^{(1)}&=2 \sigma \mathbf{w}_{220}-\frac{1}{4} \mathcal{Q}\left(\phi, \mathbf{w}_{311}^{(1)}\right)
+4 d_2 \Phi_1^2\left(\begin{array}{cc}
0 & 0 \\
0 & d_2
\end{array}\right) \mathbf{w}_{220}, \\
\mathbf{H}_{20}&=-2 L \mathbf{w}_{220}-\frac{1}{4} \mathcal{Q}\left(\boldsymbol{\phi}, \mathbf{w}_{311}+\mathbf{w}_{331}\right)-\mathcal{Q}\left(\mathbf{w}_{200}, \mathbf{w}_{220}\right)-\frac{1}{2} \mathcal{Q}\left(\mathbf{w}_{202}, \mathbf{w}_{222}\right)
-\frac{3}{16} \boldsymbol{\Theta},\\
\mathbf{H}_{02}^{(1)}&=2 \sigma \mathbf{w}_{202}-\frac{1}{4} \mathcal{Q}\left(\phi, \mathbf{w}_{311}^{(1)}\right)
+4 d_2 \Psi_1^2\left(\begin{array}{cc}0 & 0 \\ 0 & d_2\end{array}\right) \mathbf{w}_{202},\\
\mathbf{H}_{02}&=-2 L \mathbf{w}_{202}-\frac{1}{4} \mathcal{Q}\left(\phi, \mathbf{w}_{311}+\mathbf{w}_{313}\right)
-\mathcal{Q}\left(\mathbf{w}_{200}, \mathbf{w}_{202}\right)-\frac{1}{2} \mathcal{Q}\left(\mathbf{w}_{220}, \mathbf{w}_{222}\right)
-\frac{3}{16} \boldsymbol{\Theta},\\
\mathbf{H}_{22}^{(1)}&=2\sigma \mathbf{w}_{222}-\frac{1}{4} \mathcal{Q}\left(\phi, \mathbf{w}_{311}^{(1)}\right)
+4d_2\bar{k}_c^2\left(\begin{array}{cc}0 & 0 \\ 0 & d_2\end{array}\right) \mathbf{w}_{222},\\
\mathbf{H}_{22}&=-2 L \mathbf{w}_{222}-\frac{1}{4} \mathcal{Q}\left(\phi, \sum_{i, j=1,3} \mathbf{w}_{3 i j}\right) -\mathcal{Q}\left(\mathbf{w}_{200}, \mathbf{w}_{222}\right)-\mathcal{Q}\left(\mathbf{w}_{202}, \mathbf{w}_{220}\right)
-\frac{1}{4} \boldsymbol{\Theta},\\
\mathbf{H}_{40}&=-\frac{1}{4}\left(\mathcal{Q}\left(\phi, \mathbf{w}_{331}\right)+\mathcal{Q}\left(\mathbf{w}_{220}, \mathbf{w}_{220}\right)
+\frac{1}{2} \mathcal{Q}\left(\mathbf{w}_{222}, \mathbf{w}_{222}\right)\right)-\frac{3}{64} \boldsymbol{\Theta},\\
\mathbf{H}_{04}&=-\frac{1}{4}\left(\mathcal{Q}\left(\phi,\mathbf{w}_{313}\right)+\mathcal{Q}\left(\mathbf{w}_{202}, \mathbf{w}_{202}\right)
+\frac{1}{2} \mathcal{Q}\left(\mathbf{w}_{222}, \mathbf{w}_{222}\right)\right)-\frac{3}{64}\boldsymbol{\Theta},\\
\mathbf{H}_{42}&=-\frac{1}{4}\left(\mathcal{Q}\left(\phi,\mathbf{w}_{331}+\mathbf{w}_{333}\right)
+2 \mathcal{Q}\left(\mathbf{w}_{220}, \mathbf{w}_{222}\right)\right)-\frac{1}{16}\boldsymbol{\Theta},\\ 
\mathbf{H}_{24}&=-\frac{1}{4}\left(\mathcal{Q}\left(\phi, \mathbf{w}_{313}+\mathbf{w}_{333}\right)
+2 \mathcal{Q}\left(\mathbf{w}_{202}, \mathbf{w}_{222}\right)\right)-\frac{1}{16}\boldsymbol{\Theta}, \\  
\mathbf{H}_{44}&=-\frac{1}{4}\left(\mathcal{Q}\left(\phi, \mathbf{w}_{333}\right)
+\frac{1}{2} \mathcal{Q}\left(\mathbf{w}_{222}, \mathbf{w}_{222}\right)\right)-\frac{1}{64}\boldsymbol{\Theta},
\end{aligned}$$

and

$
\Theta=\dfrac{1}{24}\left(\begin{array}{@{}l@{}}
h_{nnnn}+4h_{nnnl}M+ 4h_{nlll}M^3+6h_{nnll}M^2+h_{llll}M^4 \\
\tau\left(h_{nnnn}+4h_{nnnl}M+ 4h_{nlll}M^3+6h_{nnll}M^2+h_{llll}M^4\right)
\end{array}\right).
$

\vspace{0.2cm}
By setting $T_3 = 0$ and $d_3 = 0$, we satisfy the solvability condition. Consequently, the fourth-order solution that fulfills the Neumann boundary conditions is
\begin{equation}
\begin{split}
\mathbf{w}_4 &= A_1^2 \sum_{i, j=0,2} \mathbf{w}_{4 i j}^{(1)} \cos \left(i \Phi_1 x\right) \cos \left(j \Psi_1 y\right)+ A_1^4 \sum_{i, j=0,2,4} \mathbf{w}_{4 i j} \cos \left(i \Phi_1 x\right) \cos \left(j \Psi_1 y\right),
\end{split}
\label{sol4}
\end{equation}
where the vectors $\mathbf{w}_{4ij}^{(1)}$ and $\mathbf{w}_{4ij}$, corresponding to indices $i,j=0,2$ and $i,j=0,2,4$ respectively, solve the following linear equations:
\begin{equation}
\mathfrak{L}_{i j}^1 \mathbf{w}_{4 i j}^{(1)}=\mathbf{H}_{i j}^{(1)}, \, \mathfrak{L}_{i j}^1 \mathbf{w}_{4 i j}=\mathbf{H}_{i j}.
\label{w_4_ij}
\end{equation}

For the final order 

$$
O\left(\varepsilon^5\right): \quad \mathcal{L}^{d_c} \mathbf{w}_5=\mathbf{P},
$$

where

$$
\begin{aligned}
\mathbf{P} &= \partial_{T_4} \mathbf{w}_1+\partial_{T_2} \mathbf{w}_3-\left(\begin{array}{cc}
0 & 0\\
0 & d_2
\end{array}\right) \nabla^2 \mathbf{w}_3 -\left(\begin{array}{cc}
0 & 0\\
0 & d_4
\end{array}\right) \nabla^2 \mathbf{w}_1 -\dfrac{1}{120}\left(\begin{array}{l}
h_{nnnnn}n_1^5+5h_{nnnnl}n_1^4l_1\\
\tau\left(h_{nnnnn}n_1^5+5h_{nnnnl}n_1^4l_1\right)\end{array}\right) \\
&-\dfrac{1}{120}\left(\begin{array}{@{}l@{}}
5h_{nllll}n_1l_1^4+10h_{nnnll}n_1^3l_1^2+10h_{nnlll}n_1^2l_1^3+h_{lllll}l_1^5\\
\tau \left(5h_{nllll}n_1l_1^4+10h_{nnnll}n_1^3l_1^2+10h_{nnlll}n_1^2l_1^3+h_{lllll}l_1^5\right)
\end{array}\right),
\end{aligned}
$$
and $h_{nnnnn}$, $h_{nnnnl}$, $h_{nnnll}$, $h_{nllll}$,$h_{nnlll}$ and $h_{lllll}$ are the fifth-order partial derivatives of
$h$, evaluated at $(n^*,l^*)$.

\vspace{0.2cm}

Similar to our previous approach, by substituting  \eqref{sol1}, \eqref{sol2}, \eqref{sol3} and \eqref{sol4} in the vector $P$, we achieve
\begin{equation}
\mathbf{P}=\left(\frac{\partial A_1}{\partial T_4} \phi+\mathbf{P}_{11}^{(1)} A_1+\mathbf{P}_{11}^{(3)} A_1^3+\mathbf{P}_{11}^{(5)} A_1^5\right) \cos \left(\Phi_1 x\right) \cos \left(\Psi_1 y\right)+\mathbf{P}^*,
\end{equation}

where
$$\begin{aligned}
\mathbf{P}_{11}^{(1)}= & \sigma \mathbf{w}_{311}^{(1)}+d_4 \bar{k_c}^2\left(\begin{array}{c}
0 \\
M
\end{array}\right)+\left(\begin{array}{cc}
0 & 0 \\
0 & d_2
\end{array}\right) \bar{k_c}^2 \mathbf{w}_{311}^{(1)}, \\
\mathbf{P}_{11}^{(3)}= & -L \mathbf{w}_{311}+3 \sigma \mathbf{w}_{311}-\mathcal{Q}\left(\phi, \mathbf{w}_{400}^{(1)}+\frac{1}{2} \mathbf{w}_{420}^{(1)}+\frac{1}{2} \mathbf{w}_{402}^{(1)}+\frac{1}{4} \mathbf{w}_{422}^{(1)}\right) -\mathcal{Q}\left(\mathbf{w}_{200}+\frac{1}{2} \mathbf{w}_{220}+\frac{1}{2} \mathbf{w}_{202}+\frac{1}{4} \mathbf{w}_{222}, \mathbf{w}_{311}^{(1)}\right) \\
&+\bar{k}_c^2\left(\begin{array}{cc}
0 & 0 \\
0 & d_2
\end{array}\right) \mathbf{w}_{311},\\
\mathbf{P}_{11}^{(5)}= & -3 L \mathbf{w}_{311} -\mathcal{Q}\left(\phi, \mathbf{w}_{400}+\frac{1}{2} \mathbf{w}_{420}+\frac{1}{2} \mathbf{w}_{402}+\frac{1}{4} \mathbf{w}_{422}\right)-\mathcal{Q}\left(\mathbf{w}_{200}+\frac{1}{2} \mathbf{w}_{220}+\frac{1}{2} \mathbf{w}_{202}+\frac{1}{4} \mathbf{w}_{222}, \mathbf{w}_{311}\right)-\frac{1}{2} \mathcal{Q}\left(\mathbf{w}_{200}, \mathbf{w}_{331}\right)\\
&-\frac{1}{2} \mathcal{Q}\left(\mathbf{w}_{202}, \mathbf{w}_{313}\right) -\frac{1}{4} \mathcal{Q}\left(\mathbf{w}_{222}, \mathbf{w}_{313}+\mathbf{w}_{331}+\mathbf{w}_{333}\right)-\frac{25}{64} \Upsilon,
\end{aligned}$$
and 
$$
\begin{aligned}
\Upsilon &= \dfrac{1}{120}\left(\begin{array}{@{}l@{}}
h_{nnnnn}+5h_{nnnnl}M+ 10h_{nnnll}M^2 +10h_{nnlll}M^3\\
+5h_{nllll}M^4 +h_{lllll}M^5 \\
\\
\tau\left(h_{nnnnn}+5h_{nnnnl}M+ 10h_{nnnll}M^2+10h_{nnlll}M^3\right. \\
\left.+5h_{nllll}M^4 +h_{lllll}M^5\right)
\end{array}\right).
\end{aligned}
$$
The solvability condition of  the fifth order is 
\begin{equation}
\frac{\partial A_1}{\partial T_4}=\widehat{\sigma} A_1-\widehat{L} A_1^3+\widehat{R} A_1^5,
\label{S_L_Q_2D}
\end{equation}
where the coefficients are

$$
\begin{aligned}
& \widehat{\sigma}=-\frac{\left\langle\mathbf{P}_{11}^{(1)} \cos \left(\Phi_1 x\right) \cos \left(\Psi_1 y\right), \boldsymbol{\psi}\right\rangle}{\left\langle\phi \cos \left(\Phi_1 x\right) \cos \left(\Psi_1 y\right), \boldsymbol{\psi}\right\rangle},\quad
\widehat{L}=\frac{\left\langle\mathbf{P}_{11}^{(3)} \cos \left(\Phi_1 x\right) \cos \left(\Psi_1 y\right), \boldsymbol{\psi}\right\rangle}{\left\langle\phi \cos \left(\Phi_1 x\right) \cos \left(\Psi_1 y\right), \boldsymbol{\psi}\right\rangle}, \;
\text{and}
 \; \widehat{R}=-\frac{\left\langle\mathbf{P}_{11}^{(5)} \cos \left(\Phi_1 x\right) \cos \left(\Psi_1 y\right), \boldsymbol{\psi}\right\rangle}{\left\langle\phi \cos \left(\Phi_1 x\right) \cos \left(\Psi_1 y\right), \boldsymbol{\psi}\right\rangle} .
\end{aligned}
$$
\vspace{0.2cm}

Adding up \eqref{S_L_Q_2D} to \eqref{landeau_2_positive} one gets the quintic Stuart–Landau equation
\begin{equation}
\frac{\partial A_1}{\partial T_2}=\bar{\sigma} A_1-\bar{L} A_1^3+\bar{R} A_1^5,
\label{S_L_Q_2D_1}
\end{equation}
with 
$
\bar{\sigma}=\sigma+\varepsilon^2 \widehat{\sigma}, \quad \bar{L}=L+\varepsilon^2 \widehat{L}, \quad \bar{R}=\varepsilon^2 \widehat{R}.
$

Assuming that $L < 0$, $\bar{\sigma} > 0$, $\bar{L} < 0$, and $\bar{R} < 0$ with $\varepsilon \ll 1$, the system possesses a stable stationary state characterized by the following amplitude
$
A_{1 \infty}=\sqrt{\frac{\bar{L}-\sqrt{\bar{L}^2-4 \bar{\sigma} \bar{R}}}{2 \bar{R}}} .
$
from this, we obtain:
\begin{prop}
Assume that condition 1 of Proposition~\ref{prop3} holds, and additionally:
\begin{enumerate}
    \item The parameter $\varepsilon^2 = \dfrac{d - d_c}{d_c}$ is sufficiently small such that the Landau coefficient $L$ in equation~\eqref{landeau_2_positive} is negative.

    \item The coefficient $\bar{R}$  is negative.
\end{enumerate}

Then, according to equation~\eqref{Wn1}, the Nodal and Lefty system~\eqref{N_L_Nondim} admits an asymptotic solution given as follows

\begin{equation}
\begin{aligned}
&\left(\begin{array}{l}
n(x, y) \\
l(x, y)
\end{array}\right)=\left(\begin{array}{l}
n_* \\
l_*
\end{array}\right)+\varepsilon A_{1 \infty} \phi \cos \left(\Phi_1 x\right) \cos \left(\Psi_1 y\right)+\varepsilon^2 A_{1 \infty}^2 \sum_{i, j=0,2} \mathbf{w}_{2 i j} \cos \left(i \Phi_1 x\right) \cos \left(j \Psi_1 y\right)\\
&+\varepsilon^3\left(A_{1 \infty} \mathbf{w}_{311}^{(1)} \cos \left(\Phi_1 x\right) \cos \left(\Psi_1 y\right)\right.\left.\quad+A_{1 \infty}^3 \sum_{i, j=1,3} \mathbf{w}_{3 i j} \cos \left(i \Phi_1 x\right) \cos \left(j \Psi_1 y\right)\right) \\
&+\varepsilon^4\left(\sum_{i, j=0,2} A_{1 \infty}^2 \mathbf{w}_{4 i j}^{(1)} \cos \left(i \Phi_1 x\right) \cos \left(j \Psi_1 y\right)\right. \left.\quad+\sum_{i, j=0,2,4} A_{1 \infty}^4 \mathbf{w}_{4 i j} \cos \left(i \Phi_1 x\right) \cos \left(j \Psi_1 y\right)\right)+O\left(\varepsilon^5\right).
\end{aligned}
\end{equation}
where $\phi$ comes from equation~\eqref{psi}, and the vectors $\mathbf{w}_{kij}$ and $\mathbf{w}_{kij}^{(1)}$ solve the linear systems \eqref{w_2_ij}, \eqref{w_3_ij} and \eqref{w_4_ij} for each order $k=\{2,3,4\}$.
\label{prop4}
\end{prop}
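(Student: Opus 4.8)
The plan is to run the multiple-scale weakly nonlinear expansion already set up above all the way to fifth order, and then extract the asymptotic profile of $(n,l)$ from the stable equilibrium of the resulting quintic Stuart–Landau equation. Concretely, I would start from the recast system \eqref{recast} augmented by the fourth- and fifth-order Taylor terms of the kinetics, substitute the expansions \eqref{time2}, \eqref{bif2}, \eqref{Wn1}, and collect powers of $\varepsilon$; by condition~1 of Proposition~\ref{prop3} the operator $\mathcal{L}^{d_c}$ has one-dimensional kernel $\mathrm{span}\{\phi\cos(\Phi_1 x)\cos(\Psi_1 y)\}$, so at every order the problem $\mathcal{L}^{d_c}\mathbf{w}_k=(\mathrm{RHS})_k$ is solvable if and only if the Fredholm condition $\langle(\mathrm{RHS})_k,\psi\rangle=0$ holds, with $\psi$ as in \eqref{psi}.

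Orders $\varepsilon$ through $\varepsilon^3$ reproduce what is already computed: \eqref{1} gives \eqref{sol1}; at $O(\varepsilon^2)$ the solvability condition is automatic once $d_1=0$ and $A_1$ is taken independent of $T_1$, and \eqref{sol2} follows by inverting the systems \eqref{w_2_ij}; at $O(\varepsilon^3)$ solvability produces the cubic Stuart–Landau equation \eqref{landeau_2_positive} and the correction \eqref{sol3}--\eqref{w_3_ij}. Since here $L<0$, the cubic truncation has no stable nontrivial state, so I would push on. At $O(\varepsilon^4)$, setting $T_3=0$ and $d_3=0$ removes the secular term, the Fredholm condition is met, and the non-resonant pieces come from the invertible operators $\mathfrak{L}^1_{ij}$ with $(i,j)\neq(1,1)$, giving \eqref{sol4} via \eqref{w_4_ij}. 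At $O(\varepsilon^5)$, the condition $\langle\mathbf{P},\psi\rangle=0$ isolates the coefficient of the resonant harmonic $\cos(\Phi_1 x)\cos(\Psi_1 y)$ and yields \eqref{S_L_Q_2D}; adding this to \eqref{landeau_2_positive} and grouping by powers of $\varepsilon$ gives the quintic equation \eqref{S_L_Q_2D_1} with coefficients $\bar\sigma,\bar L,\bar R$.

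To close the argument I would treat \eqref{S_L_Q_2D_1} as a scalar ODE for $A_1$. Under the sign pattern $\bar\sigma>0$ (because $d>d_c$ forces $\sigma>0$), $\bar L<0$ (because $L<0$ and $\varepsilon\ll1$), and $\bar R<0$ (condition~2), the right-hand side $A_1(\bar\sigma-\bar L A_1^2+\bar R A_1^4)$ has exactly one positive root in $A_1^2$, namely $A_{1\infty}^2=(\bar L-\sqrt{\bar L^2-4\bar\sigma\bar R})/(2\bar R)$, and a derivative sign check shows $A_{1\infty}$ is asymptotically stable while $A_1=0$ is unstable. Substituting $A_1=A_{1\infty}$ into $\mathbf{w}=\varepsilon\mathbf{w}_1+\varepsilon^2\mathbf{w}_2+\varepsilon^3\mathbf{w}_3+\varepsilon^4\mathbf{w}_4+O(\varepsilon^5)$ and using $\mathbf{w}=(n-n_*,\,l-l_*)^T$ gives exactly the displayed expansion.

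The principal obstacle I expect is the fourth- and fifth-order bookkeeping: expanding each $\mathcal{Q}(\mathbf{w}_i,\mathbf{w}_j)$ and the cubic/quartic/quintic Taylor terms into products of cosines, applying product-to-sum identities to sort every contribution by its spatial harmonic $(i\Phi_1,j\Psi_1)$, and separating the resonant $(1,1)$ part (which feeds the amplitude equation) from the non-resonant parts (solved by the auxiliary linear systems). A secondary point is verifying that each $\mathfrak{L}^1_{ij}$ with $(i,j)$ different from the critical pair is genuinely invertible, i.e.\ that $g(i^2\Phi_1^2+j^2\Psi_1^2)\neq0$ there --- this is precisely where the single-multiplicity hypothesis and the choices $d_1=d_3=0$, $T_1=T_3=0$ are used --- and, if more than a formal result is wanted, bounding the residual so the truncated series is an $O(\varepsilon^5)$ approximate solution in a suitable norm.
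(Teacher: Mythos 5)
Your proposal is correct and follows essentially the same route as the paper: the paper's justification of Proposition~\ref{prop4} is precisely the fifth-order multiple-scales derivation of the quintic Stuart--Landau equation \eqref{S_L_Q_2D_1}, the identification of the stable amplitude $A_{1\infty}$ under the stated sign conditions, and back-substitution into the expansion \eqref{Wn1}. Your added remarks on the invertibility of the non-resonant operators $\mathfrak{L}^1_{ij}$ and on bounding the residual are points the paper leaves implicit, but they do not change the argument.
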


\begin{figure*}[!htbp]
  \centering
  \subfloat[]{\includegraphics[width=0.45\textwidth]{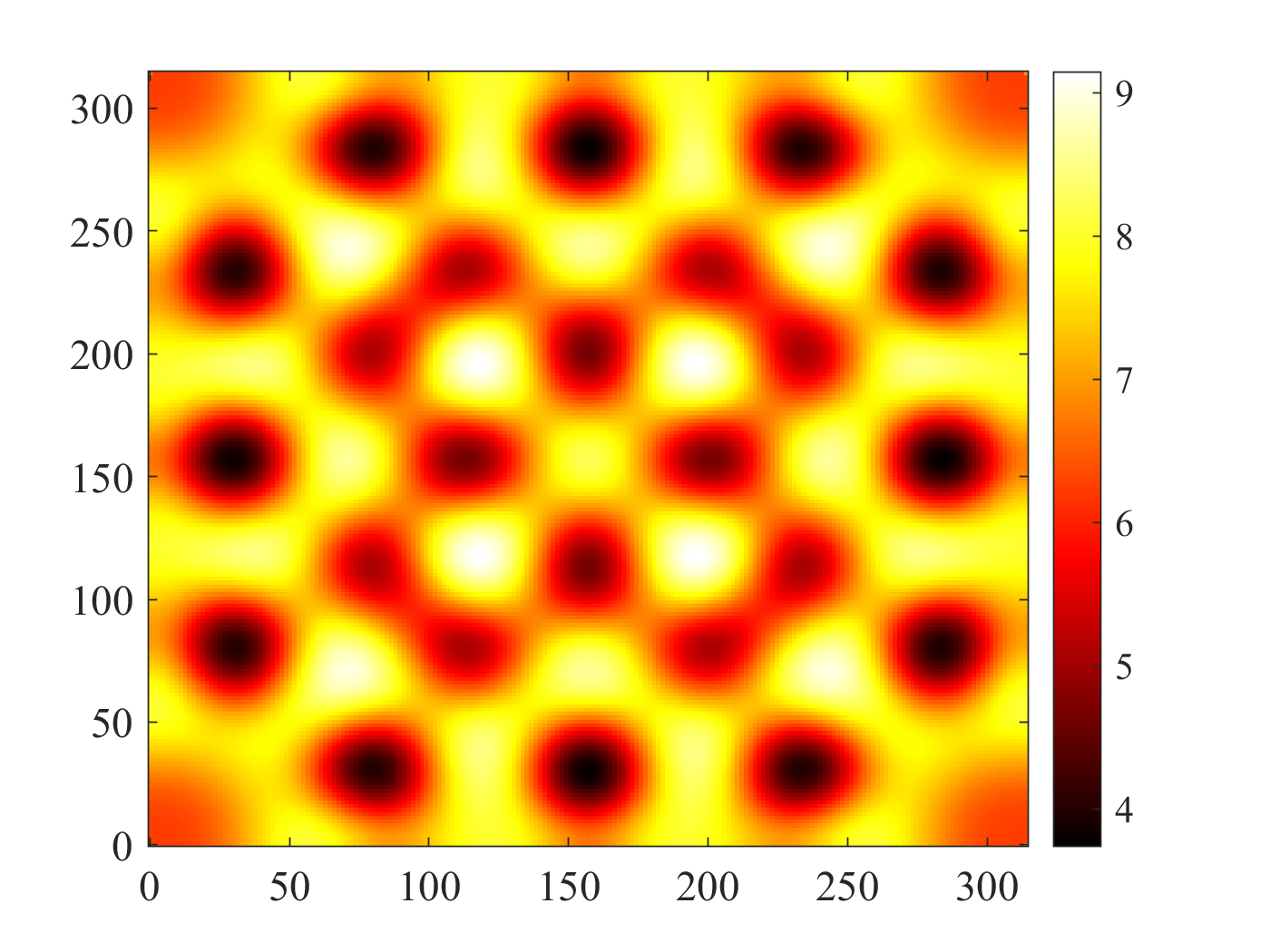}\label{squares}}
  \hspace{15mm}
  \subfloat[]{\includegraphics[width=0.45\textwidth]{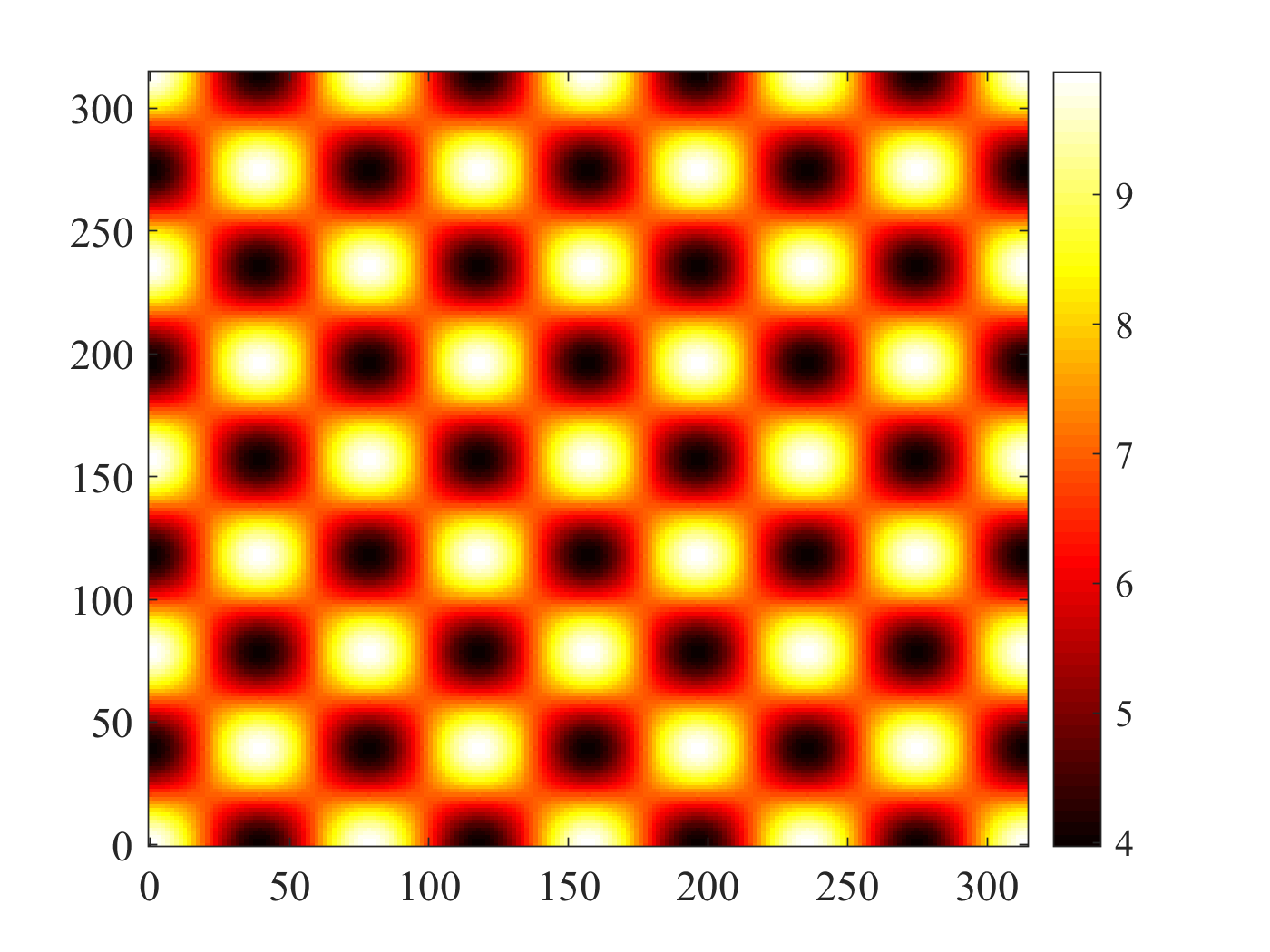}\label{wnl_squares}}
  \caption{Single-mode pattern formation in the supercritical regime. (a) numerical simulations of system \eqref{N_L_Nondim} and (b) the first-order weakly nonlinear approximation. The parameters used for these simulations are $\tau = 3.10$, $k_{+} = 0$, $n_n = 2.63$, $n_l = 1.09$, $\gamma_n = 0.02$, $\gamma_l = 0.06$, and $\nu = 0.62$. The numerical simulations were performed until $t_f = 5000$.
}\label{fig_squares}
\end{figure*}

\begin{figure*}[!htbp]
  \centering
  \subfloat[]{\includegraphics[width=0.45\textwidth]{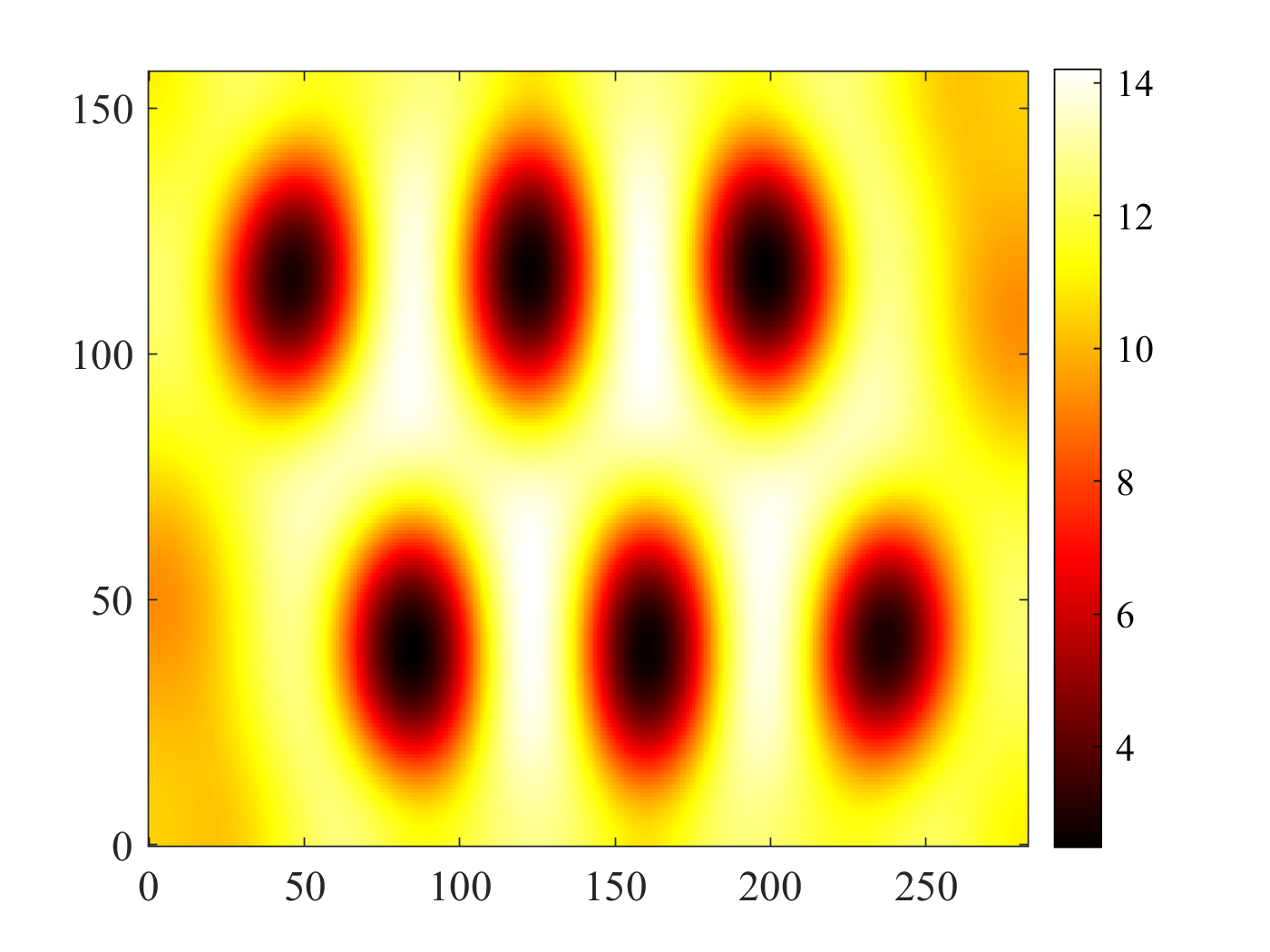}\label{fig_stripe_comp:subfig1}}
  \hspace{15mm}
  \subfloat[]{\includegraphics[width=0.45\textwidth]{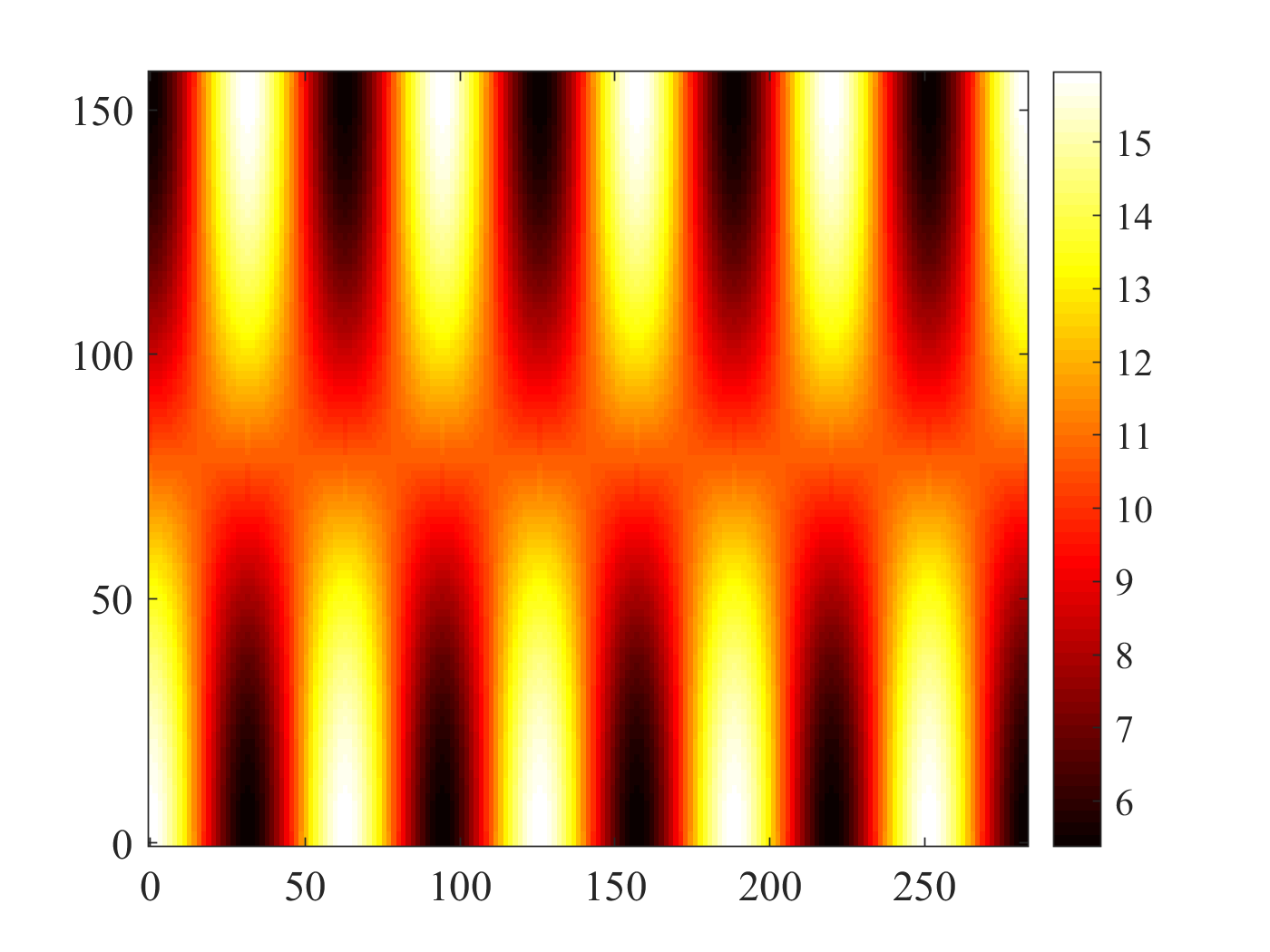}\label{fig_stripe_comp:subfig2}}
  \caption{Single-mode pattern formation in the subcritical regime. (a) Numerical simulation, (b) First-order weakly nonlinear approximation. The parameters used for these simulations are $\tau = 2.5$, $k_{+} = 0$, $n_n = 2.63$, $n_l = 1.09$, $\gamma_n = 0.02$, $\gamma_l = 0.06$, and $\nu = 0.62$. The numerical simulations were performed until $t_f = 5000$,}
  \label{fig:single_mode_subcritical}
\end{figure*}

\begin{figure*}[!htbp]
  \centering
  \subfloat[]{\includegraphics[width=0.45\textwidth]{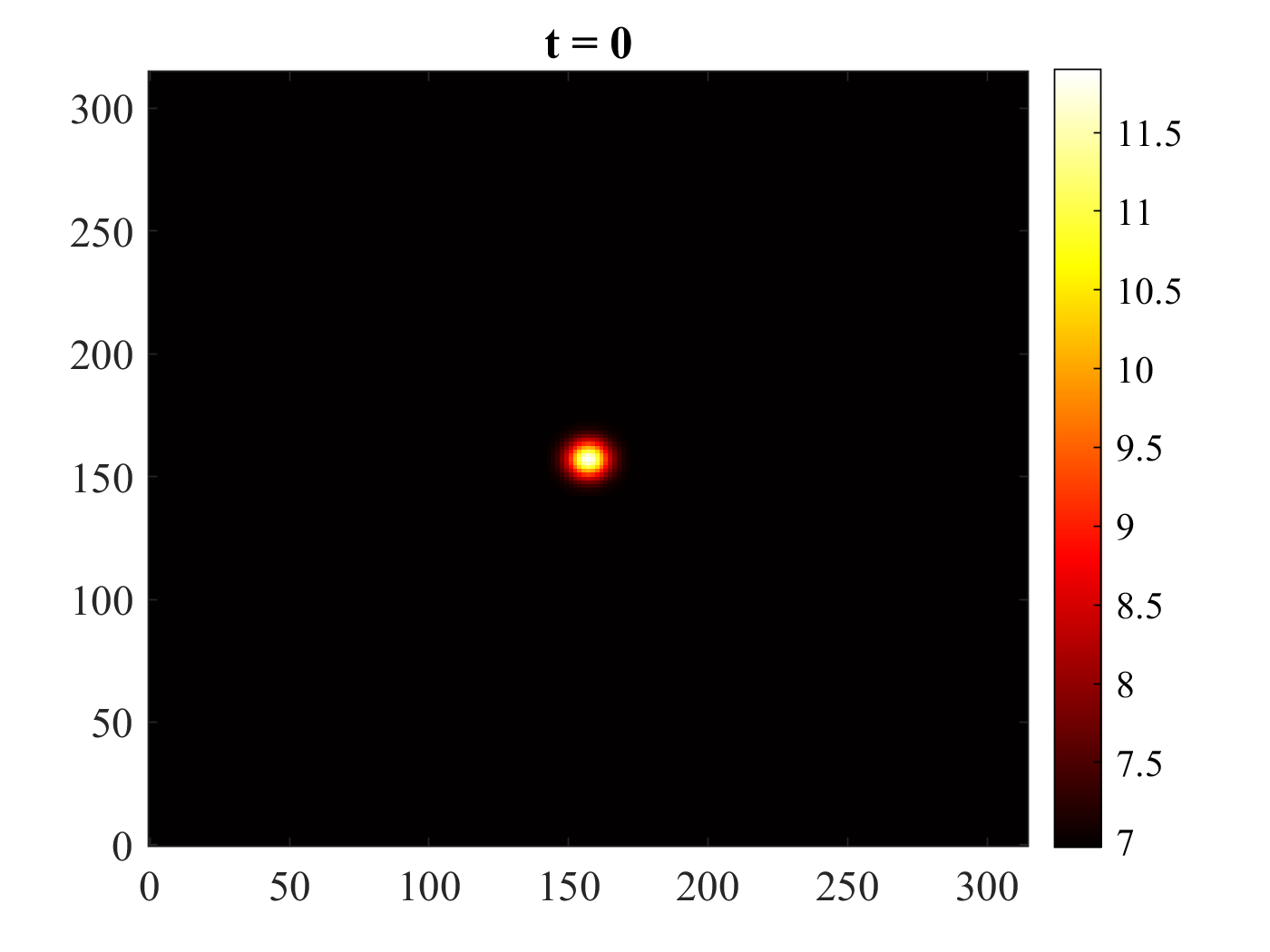}\label{fig_super2:subfig1}}
  \hfill
  \subfloat[]{\includegraphics[width=0.45\textwidth]{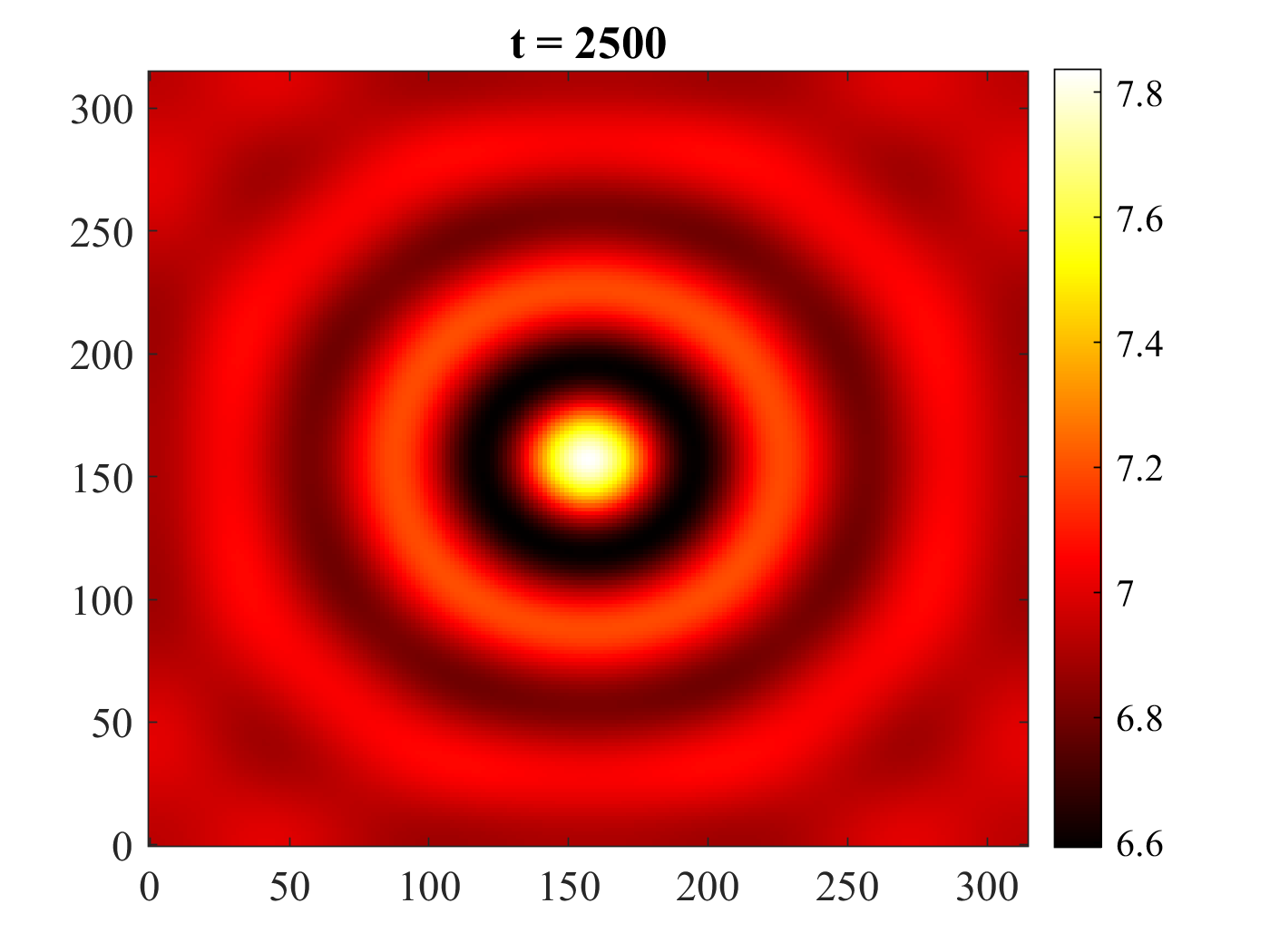}\label{fig_super2:subfig2}}
  \\
  \subfloat[]{\includegraphics[width=0.45\textwidth]{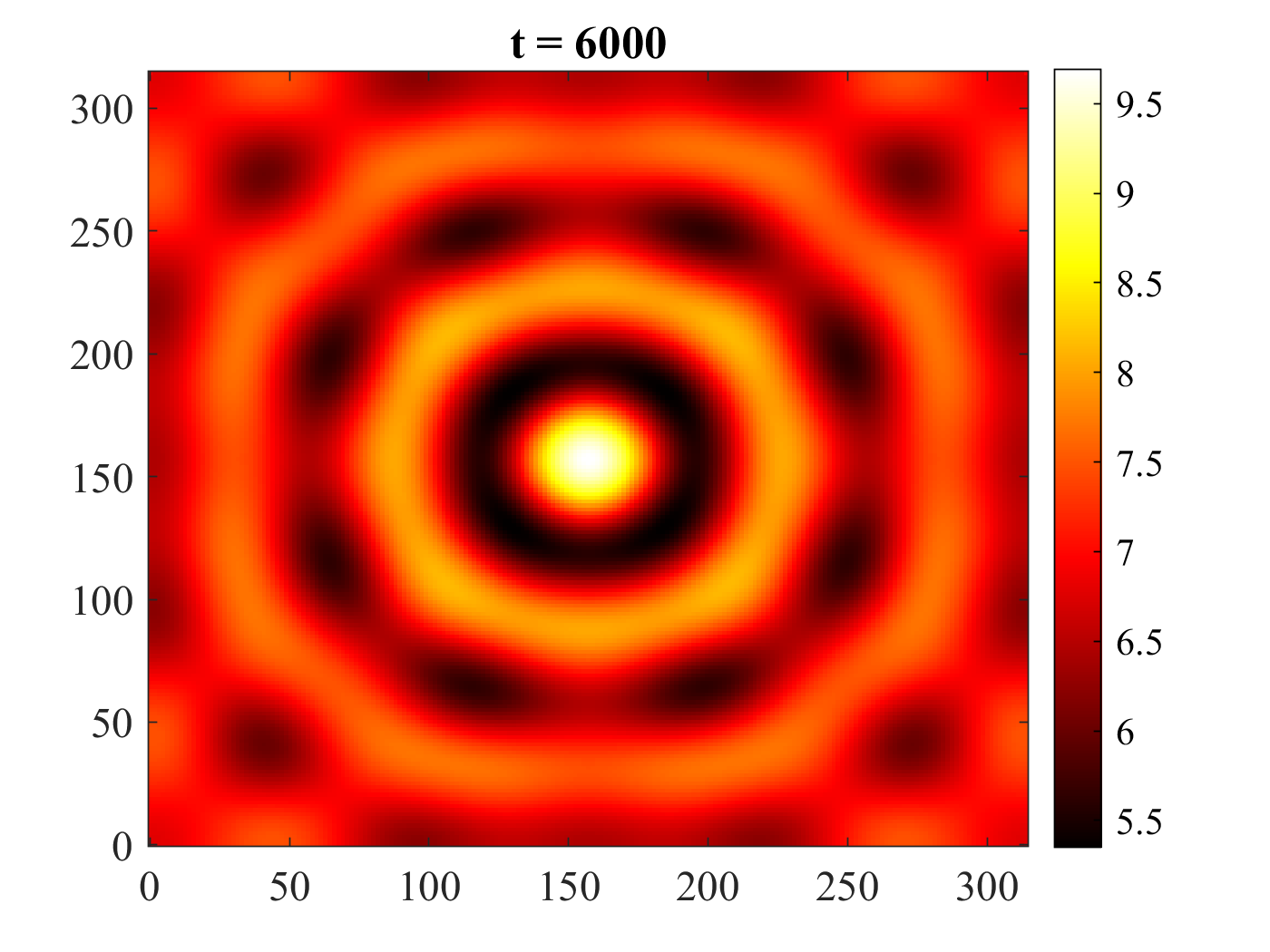}\label{fig_super2:subfig3}}
  \hfill
  \subfloat[]{\includegraphics[width=0.45\textwidth]{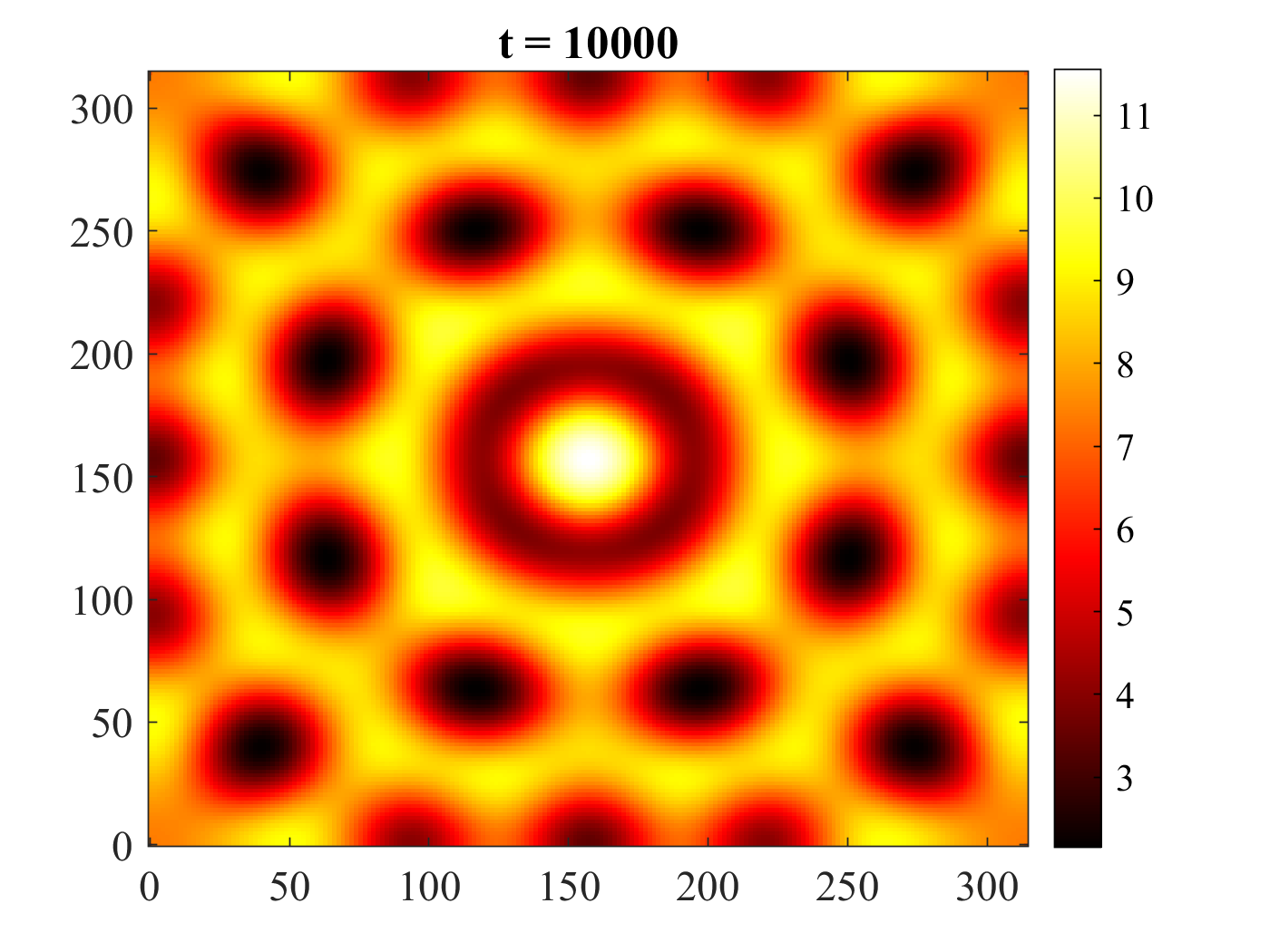}\label{fig_super2:subfig4}}
  \caption{Temporal evolution of supercritical pattern formation for $d=19.5$. Parameters: $\tau=3.10$, $k_+=0$, $n_n = 2.63$, $n_l = 1.09$, $\gamma_n = 0.02$, $\gamma_l = 0.06$, and $\nu = 0.62$.}
  \label{fig:supercritical_d19}
\end{figure*}

\begin{figure*}[!htbp]
  \centering
  \subfloat[]{\includegraphics[width=0.45\textwidth]{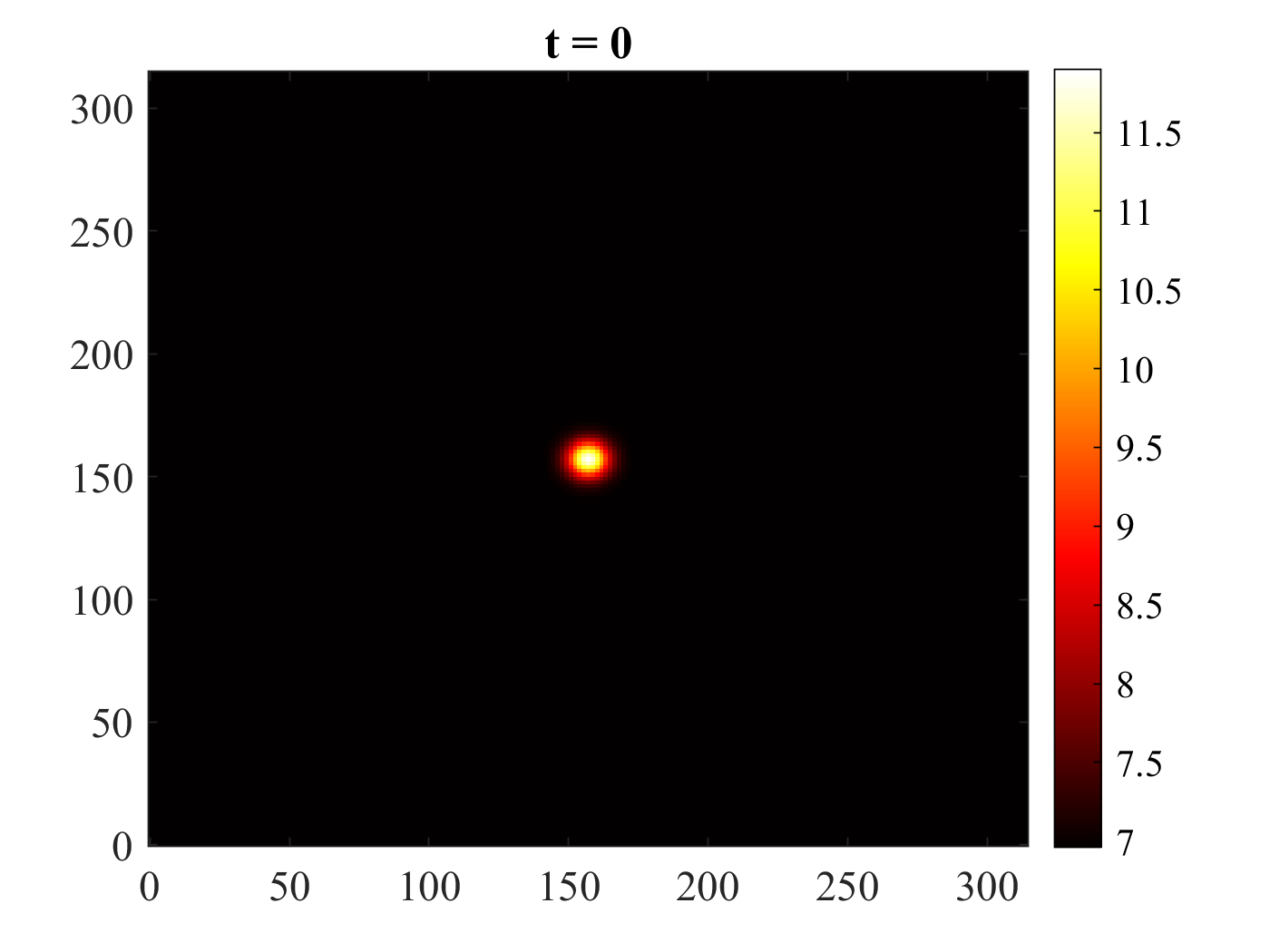}\label{fig_super1:subfig1}}
  \hfill
  \subfloat[]{\includegraphics[width=0.45\textwidth]{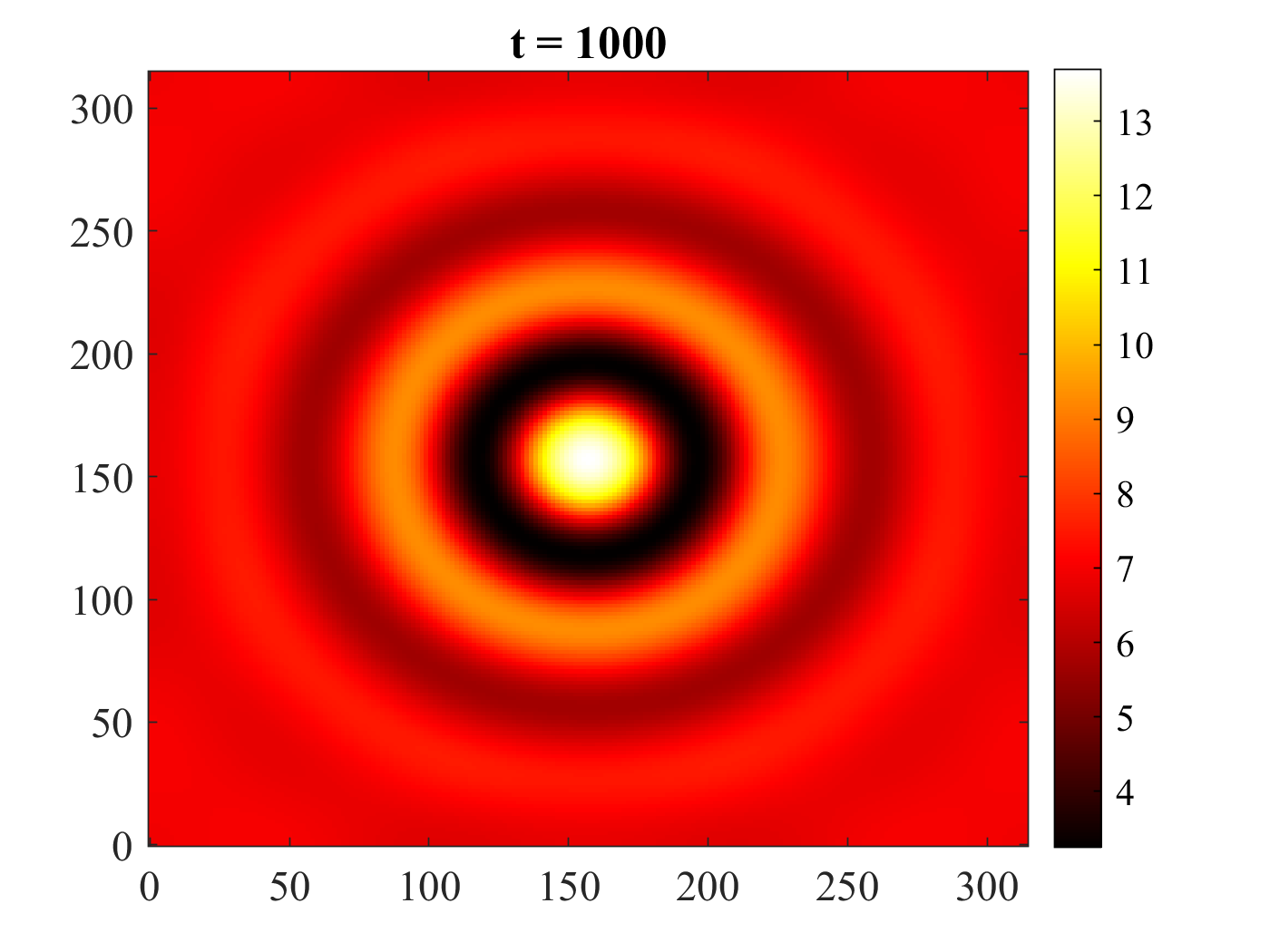}\label{fig_super1:subfig2}}
  \\
  \subfloat[]{\includegraphics[width=0.45\textwidth]{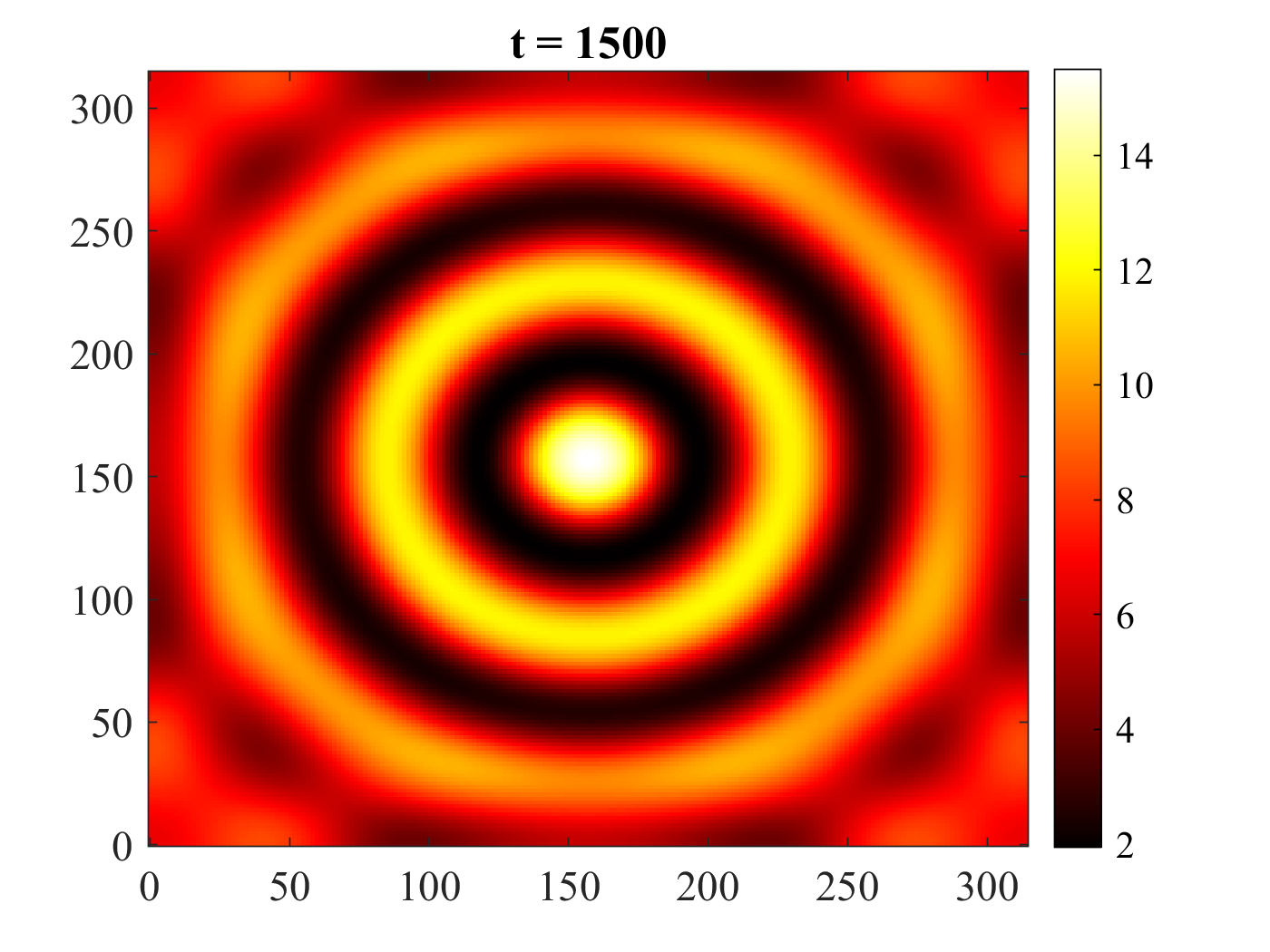}\label{fig_super1:subfig3}}
  \hfill
  \subfloat[]{\includegraphics[width=0.45\textwidth]{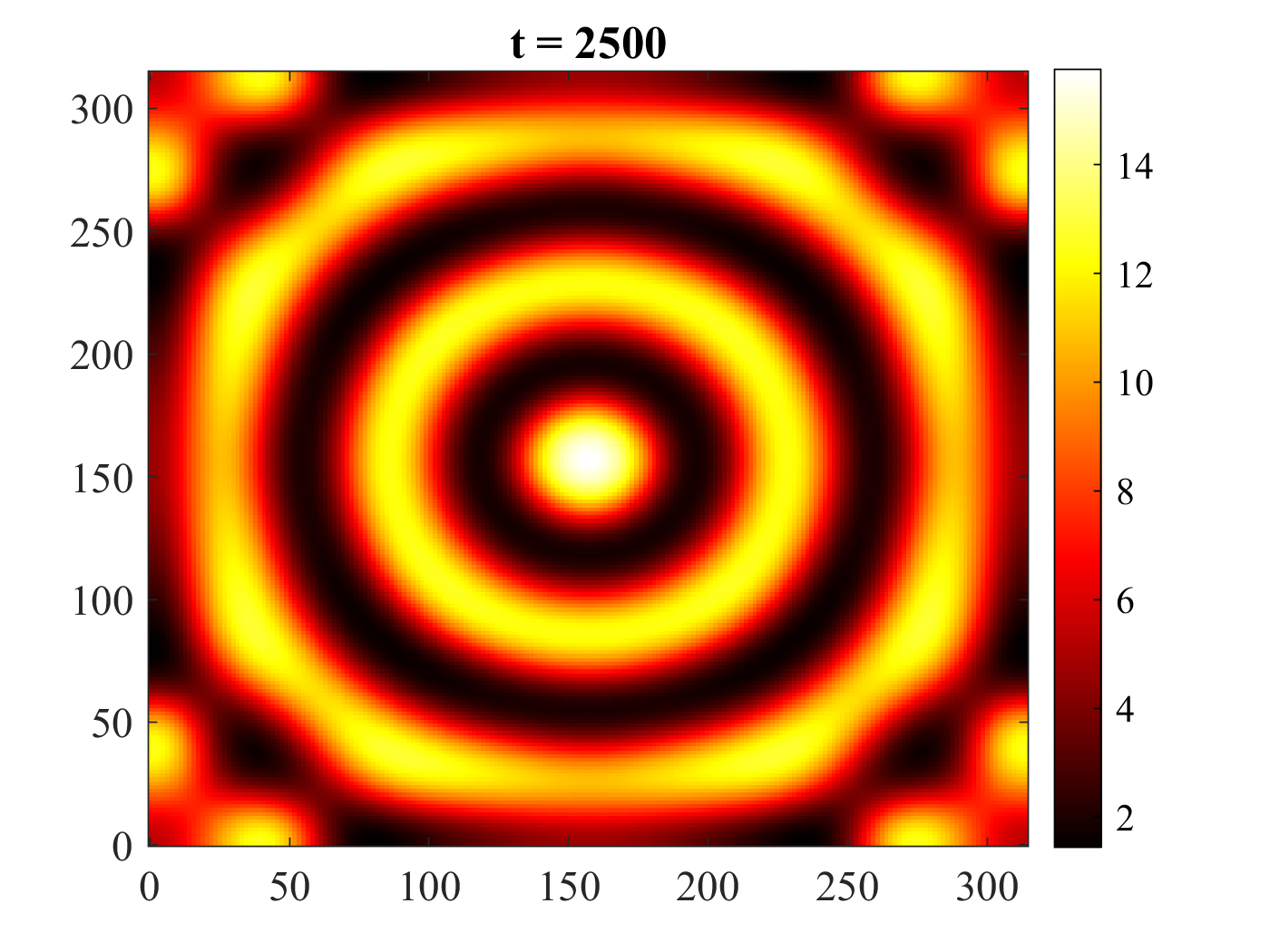}\label{fig_super1:subfig4}}
  \caption{Temporal evolution of supercritical pattern formation for $d=27$. Parameters: $\tau=3.10$, $k_+=0$, $n_n = 2.63$, $n_l = 1.09$, $\gamma_n = 0.02$, $\gamma_l = 0.06$, and $\nu = 0.62$.}
  \label{fig:supercritical_d27}
\end{figure*}

\section{Numerical simulations}
\label{sec6}
For the numerical simulations, we perform the numerical integration of the system using the Crank-Nicolson method, varying the parameters $\tau$ and $k_+$ and applying zero-flux boundary conditions. We focus on the numerical integration around the positive steady state $E_2$.

\subsection{Supercritical and subcritical pattern formation: comparison with WNL analysis}
\label{sec6_1}
We validate our weakly nonlinear analysis by comparing theoretical predictions with full numerical simulations in both supercritical and subcritical regimes. We present two representative examples that show the comparison between the numerical simulation of the system and the analytical approach.

\subsubsection{Supercritical bifurcation example}

For the supercritical regime, we choose parameters $(\tau,k_{+})=(3.10,0)$, which gives the homogeneous steady state $E_2=(6.95,9.05)$. We consider a square spatial domain $\Omega=[0,100\pi] \times [0,100\pi]$ with deviation $\varepsilon=0.2$. This yields a critical wavenumber $\bar{k}_c^2=0.0128$, where condition \eqref{wave2} is satisfied by the unique pair $(z_1,s_1)=(8,8)$.

\subsubsection{Subcritical bifurcation example}

We examine single-mode pattern formation in the subcritical regime to validate our weakly nonlinear analysis against numerical simulations. We choose parameters $(\tau,k_{+})=(2.5,0)$ in the subcritical region, which results in the equilibrium steady state $E_2=(10.66,11.18)$.

For this analysis, we consider a spatial domain $\Omega=[0, 90\pi] \times [0, 50\pi]$ with deviation $\varepsilon=0.3$. This yields to the critical wavenumber $\bar{k}_c^2=0.0104$, with condition \eqref{wave2} satisfied by the single pair $(z_1,s_1)=(9,1)$.

According to our weakly nonlinear analysis, we obtain the Landau coefficient $L=-1.46 \times 10^{-5} < 0$ from equation \eqref{landeau_2_positive}. Since $L < 0$, the cubic Stuart-Landau equation fails to describe the amplitude behavior near the bifurcation, requiring extension to fifth-order analysis as described in Section \ref{sec4}.

Following the quintic analysis, we calculate the coefficients $\bar{\sigma}=0.008$, $\bar{L}=-3.72 \times 10^{-5}$, and $\bar{R}=-6.3 \times 10^{-8}$ from equation \eqref{S_L_Q_2D_1}. This provides a concrete example of subcritical bifurcation where the equilibrium amplitude is determined by equation \eqref{S_L_Q_2D_1}.

To enable comparison between the weakly nonlinear approximation and the full numerical simulations for both examples, we initialize the system near the homogeneous steady state $E_2 = (n^*, l^*)$ with a spatially periodic perturbation. The initial conditions are given by
\begin{align}
n(x, y, 0) &= n^* + \delta \cos(0.1x) \cos(0.1y), \\
l(x, y, 0) &= l^* + \delta \cos(0.1x) \cos(0.1y),
\end{align}
where $\delta$ is a random perturbation with maximum amplitude of 20\% of $E_2$. This choice of initial conditions allows for a direct comparison between the amplitude of the emerging patterns in the numerical simulation and that predicted by the weakly nonlinear analysis.

The comparison between numerical simulation results and the first-order weakly nonlinear approximation demonstrates that the theoretical amplitude prediction shows relatively a good agreement with the numerical simulation in both regimes (Figures \ref{fig_squares} and \ref{fig:single_mode_subcritical}), validating our analytical approach.

\begin{figure*}[!htbp]
  \centering
  \subfloat[]{\includegraphics[width=0.45\textwidth]{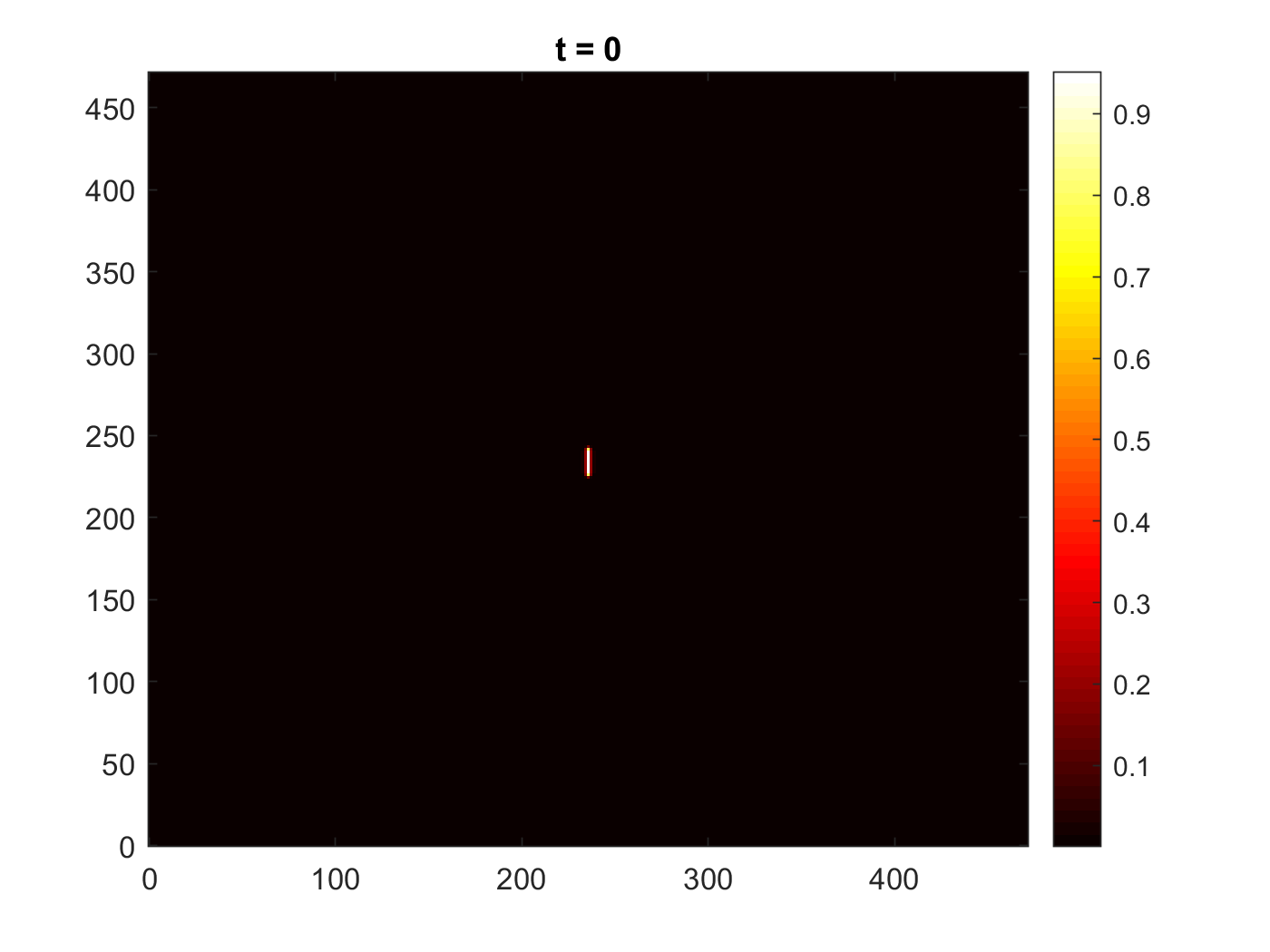}\label{fig2:subfig1}}
  \hfill
  \subfloat[]{\includegraphics[width=0.45\textwidth]{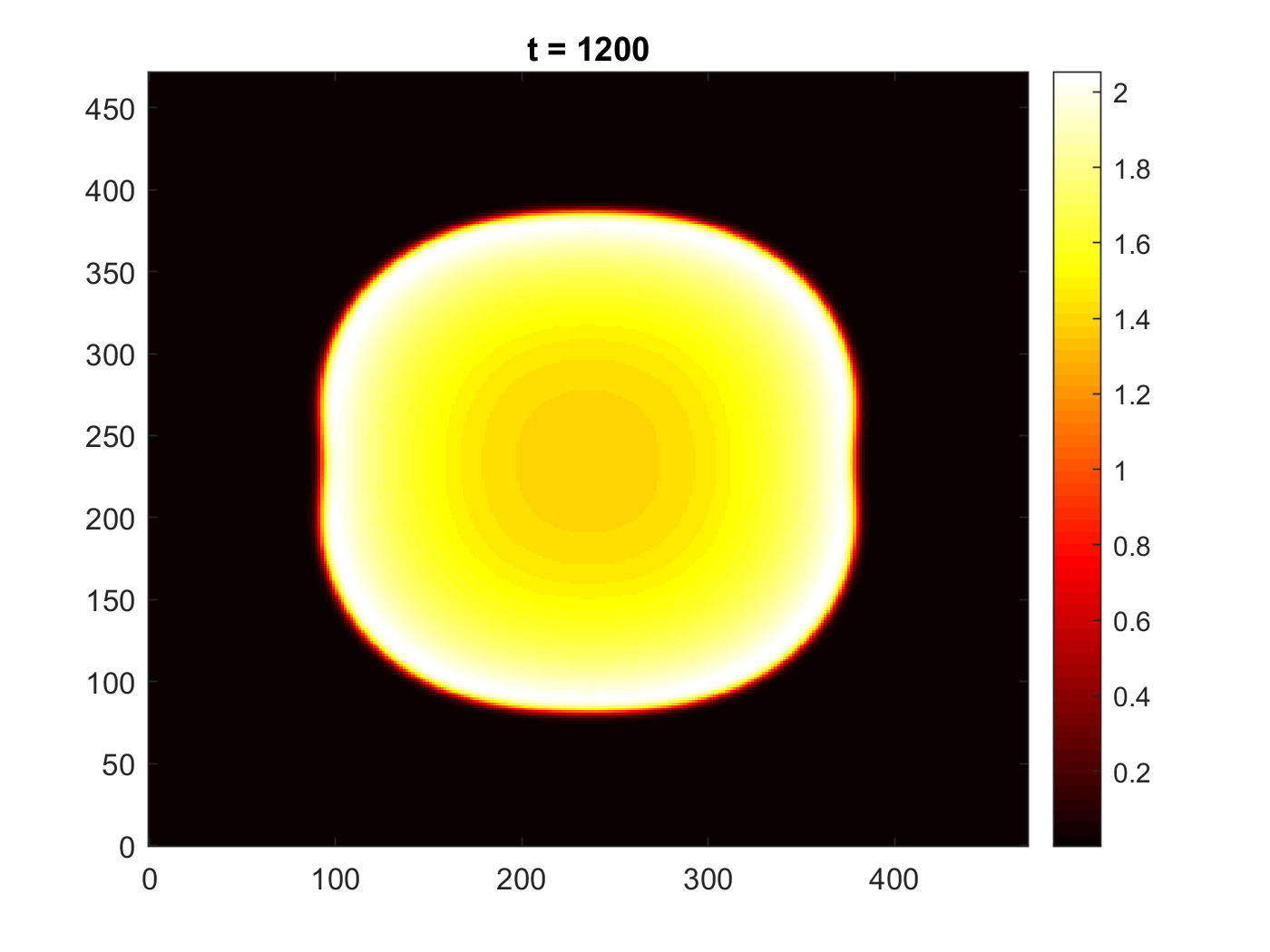}\label{fig2:subfig2}}
  \\
  \subfloat[]{\includegraphics[width=0.45\textwidth]{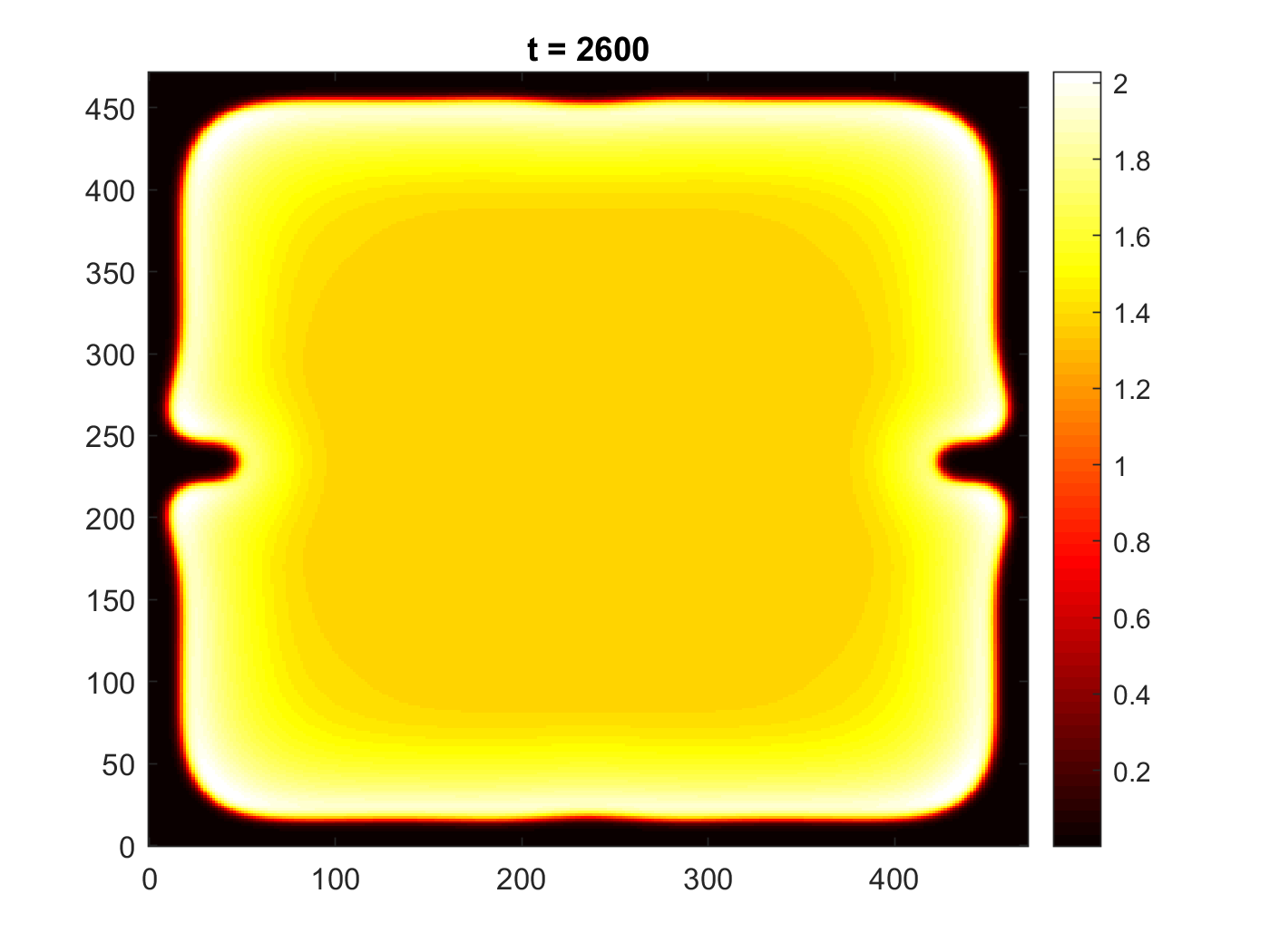}\label{fig2:subfig3}}
  \hfill
  \subfloat[]{\includegraphics[width=0.45\textwidth]{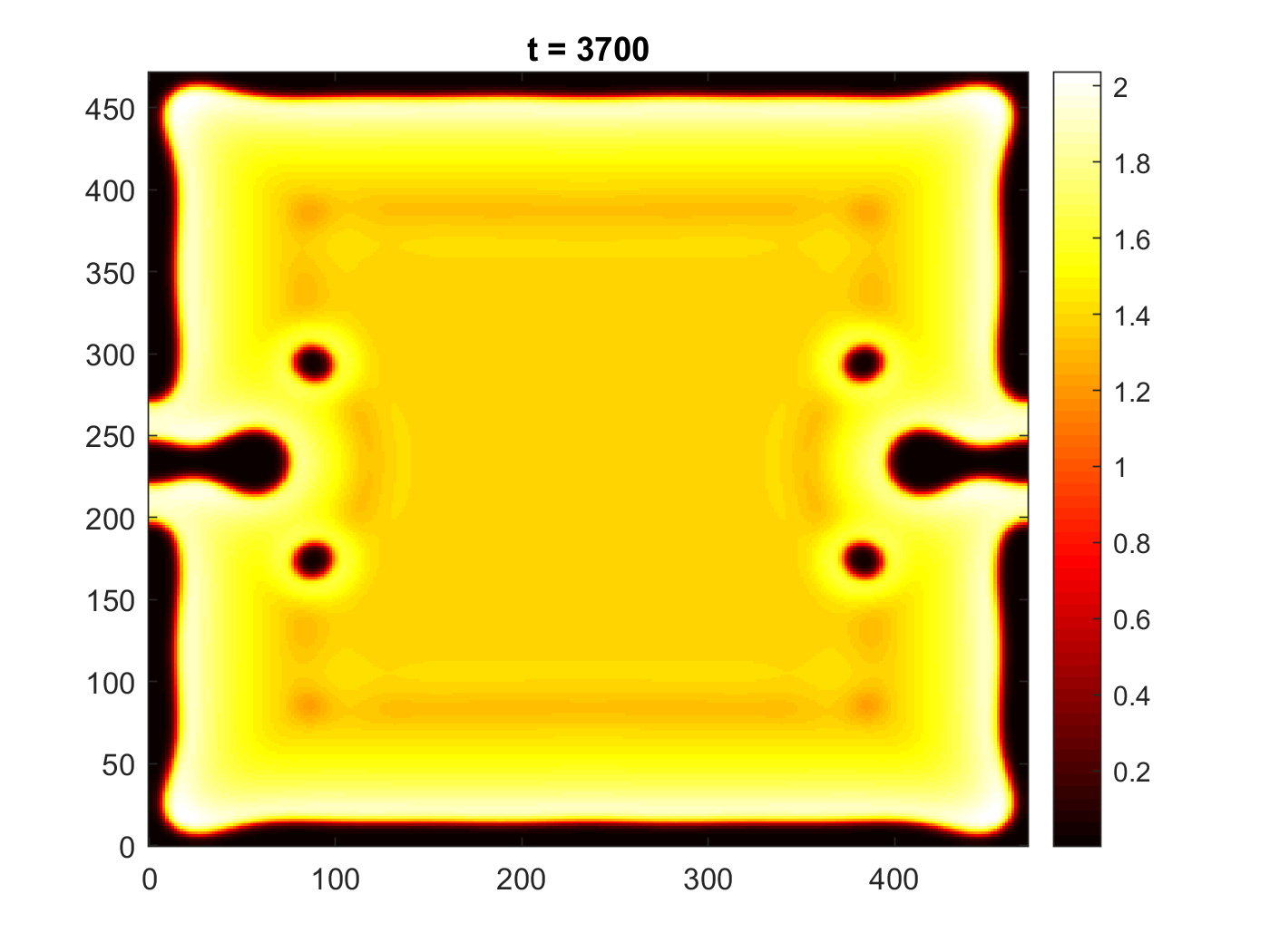}\label{fig2:subfig4}}
  \\
  \subfloat[]{\includegraphics[width=0.45\textwidth]{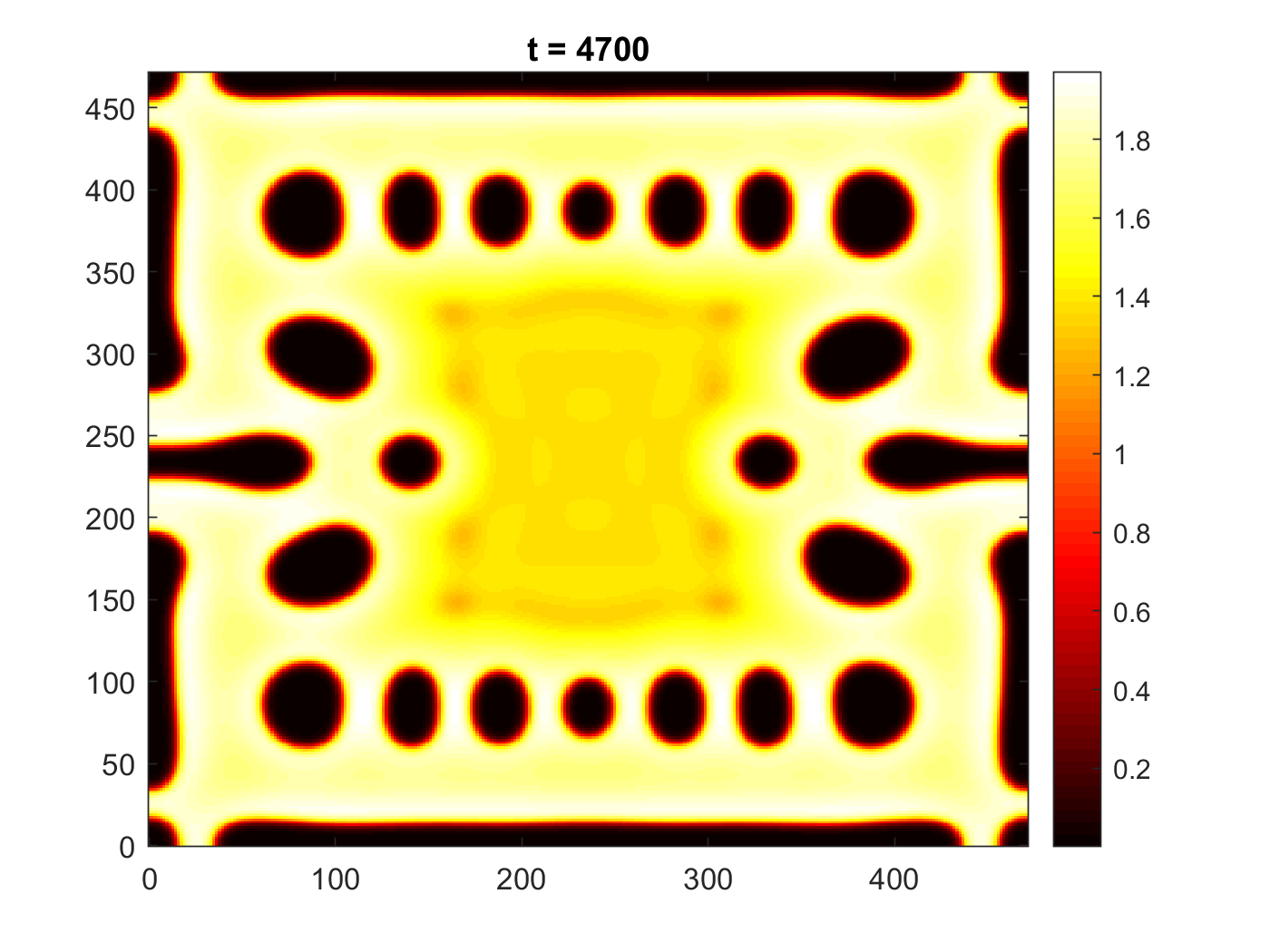}\label{fig2:subfig5}}
  \hfill
  \subfloat[]{\includegraphics[width=0.45\textwidth]{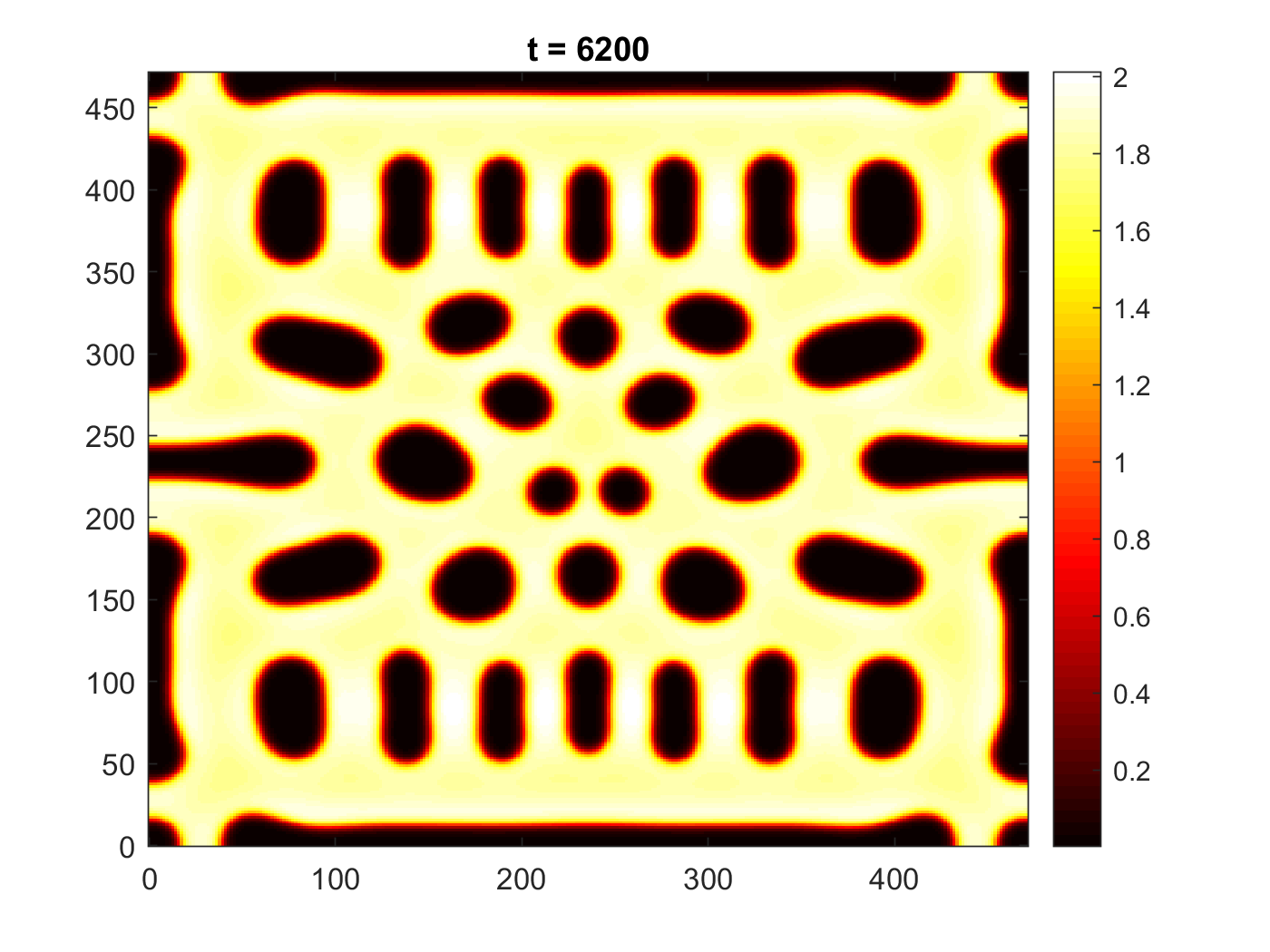}\label{fig2:subfig6}}
  \caption{ Snapshots from a 2D spatiotemporal simulation depict the concentration of Nodal with reversed spots. The initial structure is a small perturbation around the equilibrium state $E_2$ at the center of the spatial domain. The parameters are as follows: $\tau=0.63$, $k_{+}=1.04$, $n_n=2.63$, $n_l=1.09$, $\gamma_n=0.02$, $\gamma_l=0.06$ and $\nu=0.62$.}
  \label{fig:subfigures5}
\end{figure*}

\begin{figure*}[!htbp]
  \centering
  \subfloat[]{\includegraphics[width=0.45\textwidth]{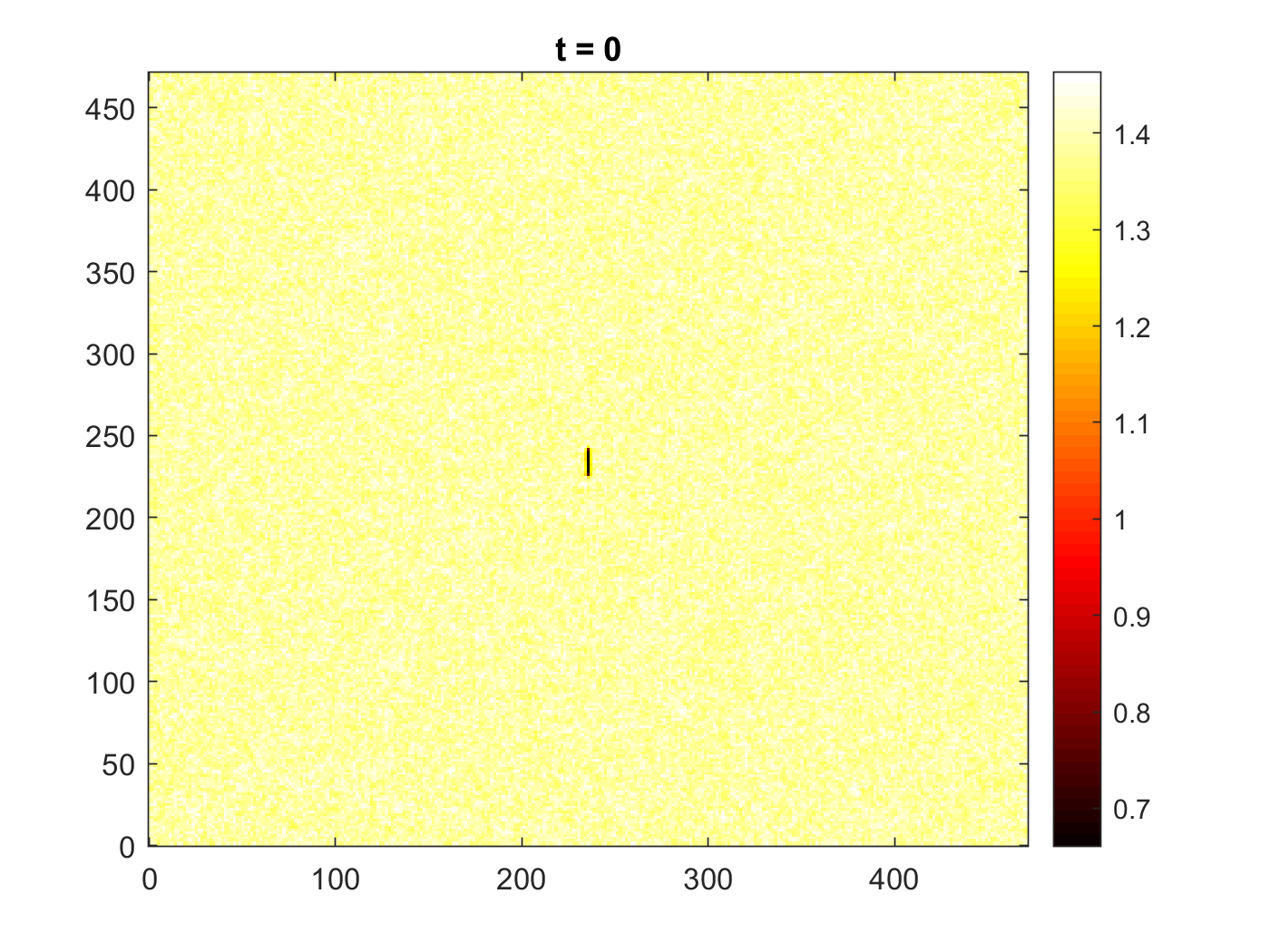}\label{fig3:subfig1}}
  \hfill
  \subfloat[]{\includegraphics[width=0.45\textwidth]{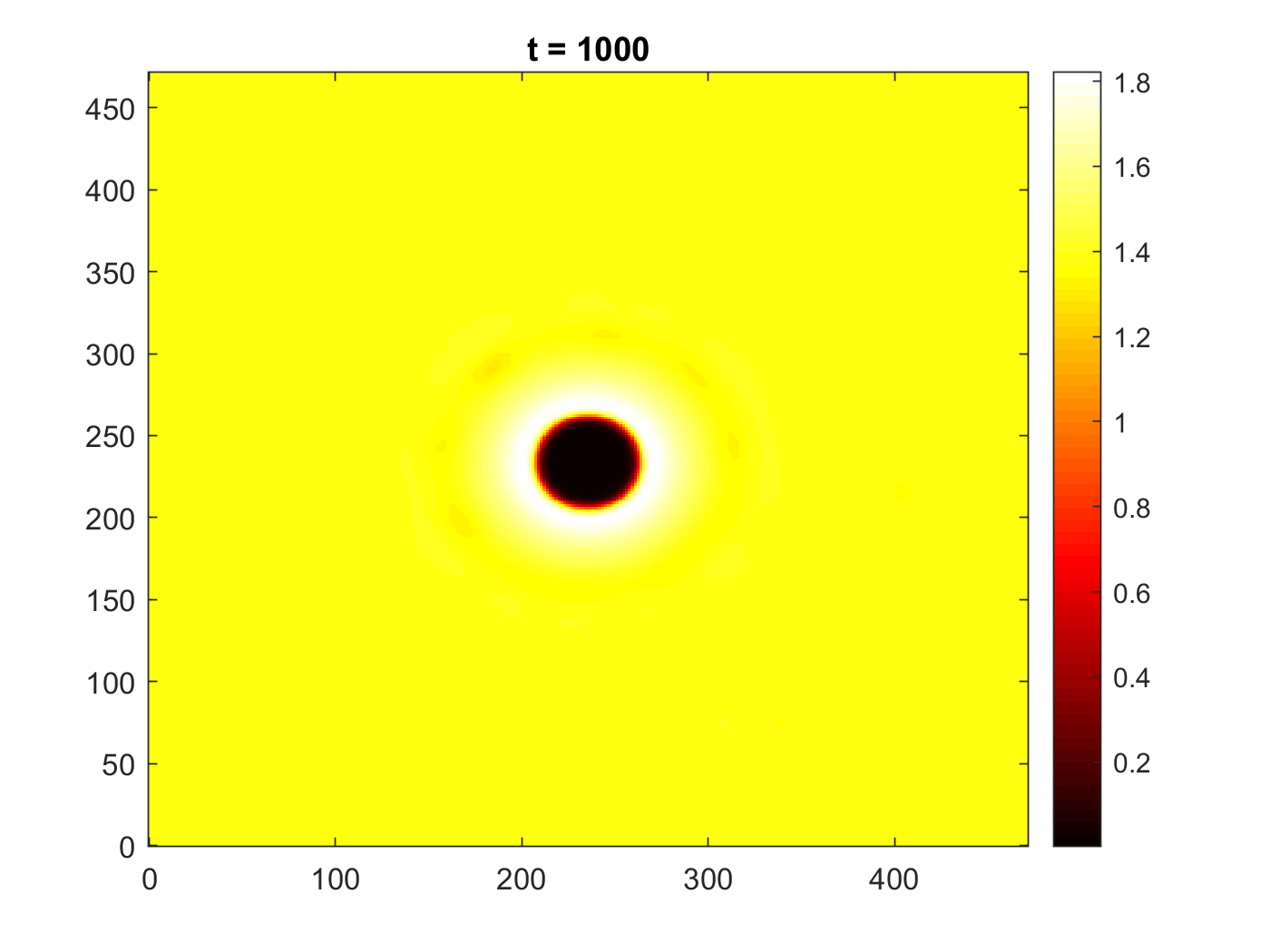}\label{fig3:subfig2}}
  \\
  \subfloat[]{\includegraphics[width=0.45\textwidth]{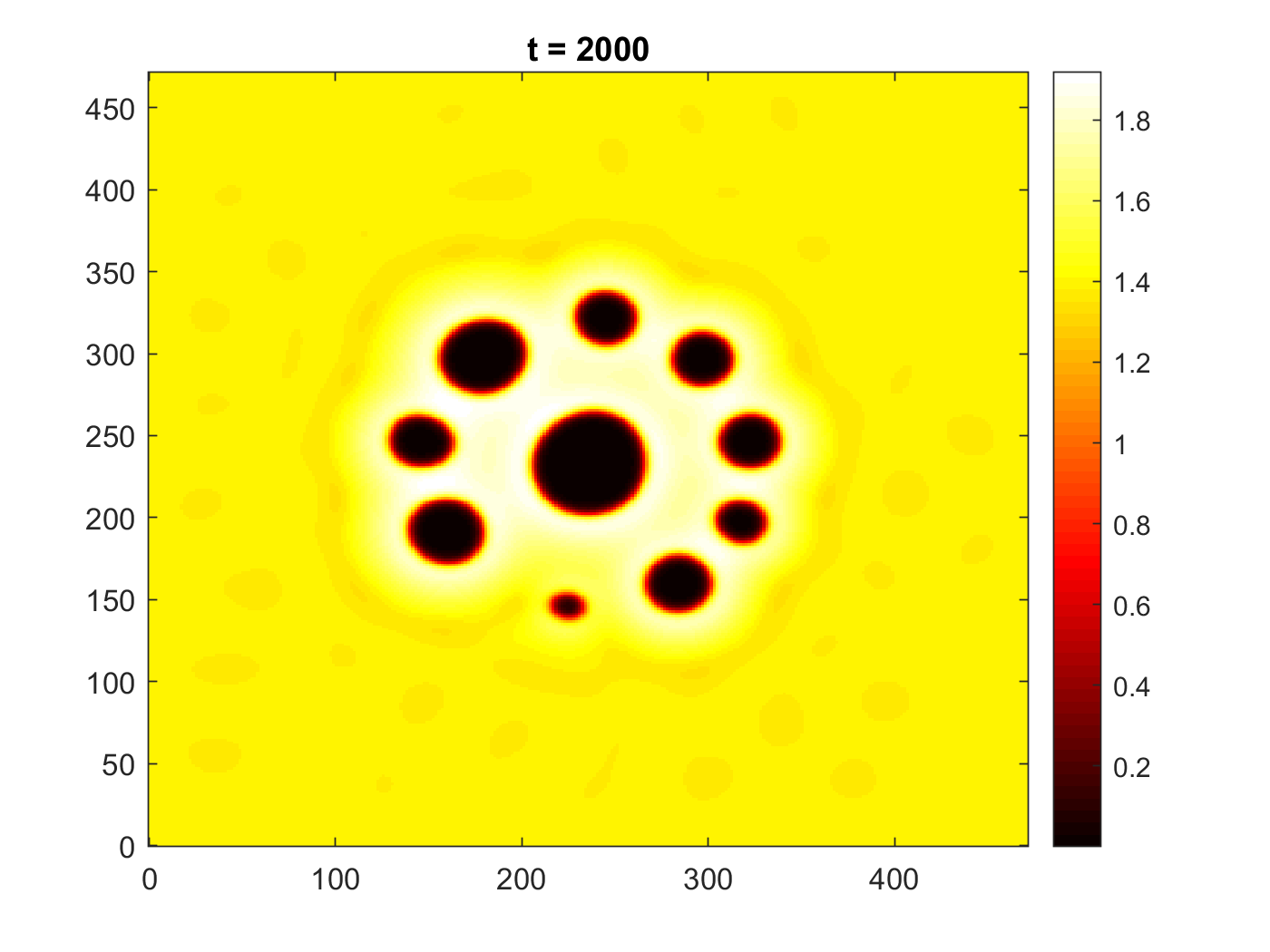}\label{fig3:subfig3}}
  \hfill
  \subfloat[]{\includegraphics[width=0.45\textwidth]{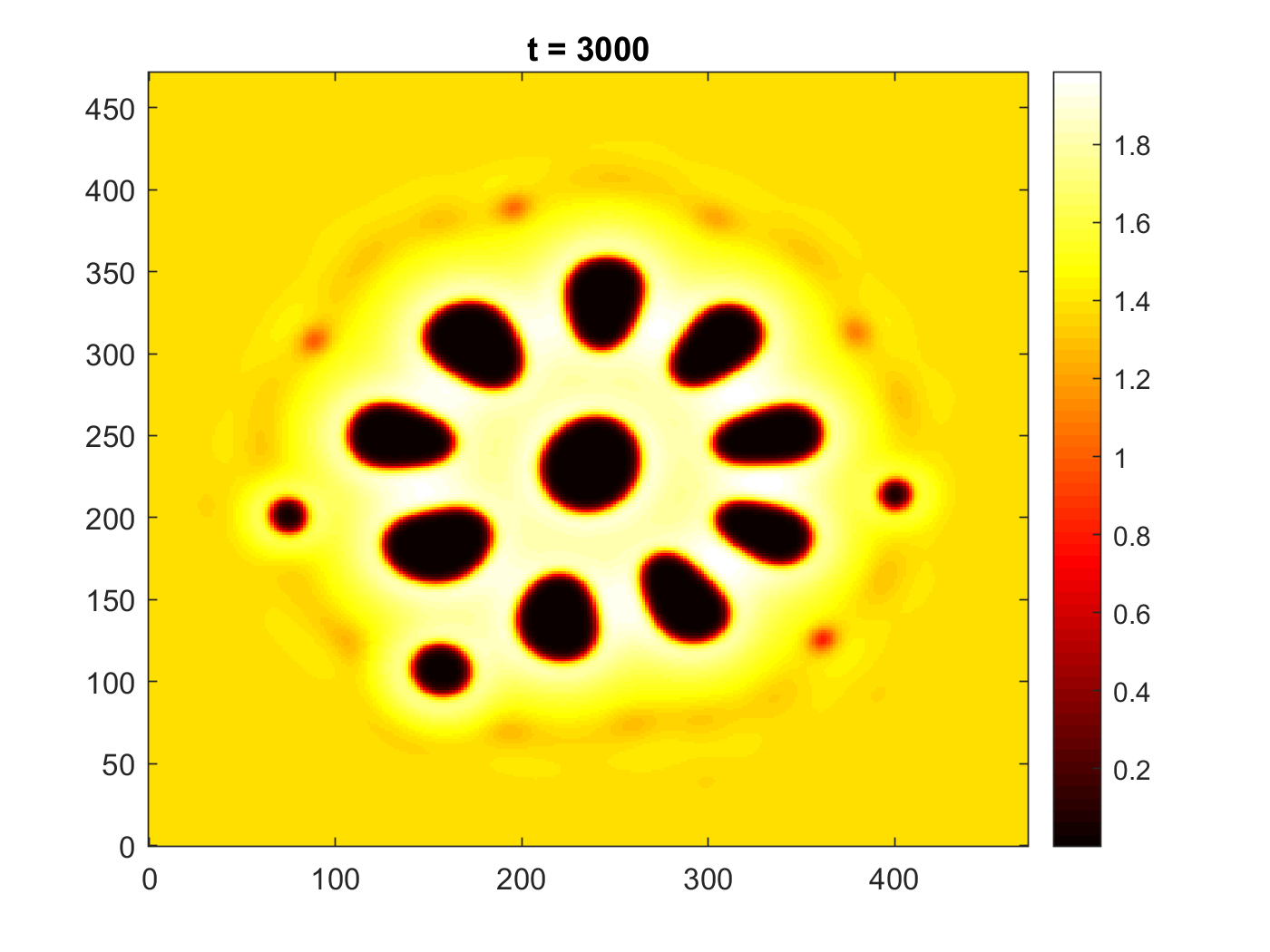}\label{fig3:subfig4}}
  \\
  \subfloat[]{\includegraphics[width=0.45\textwidth]{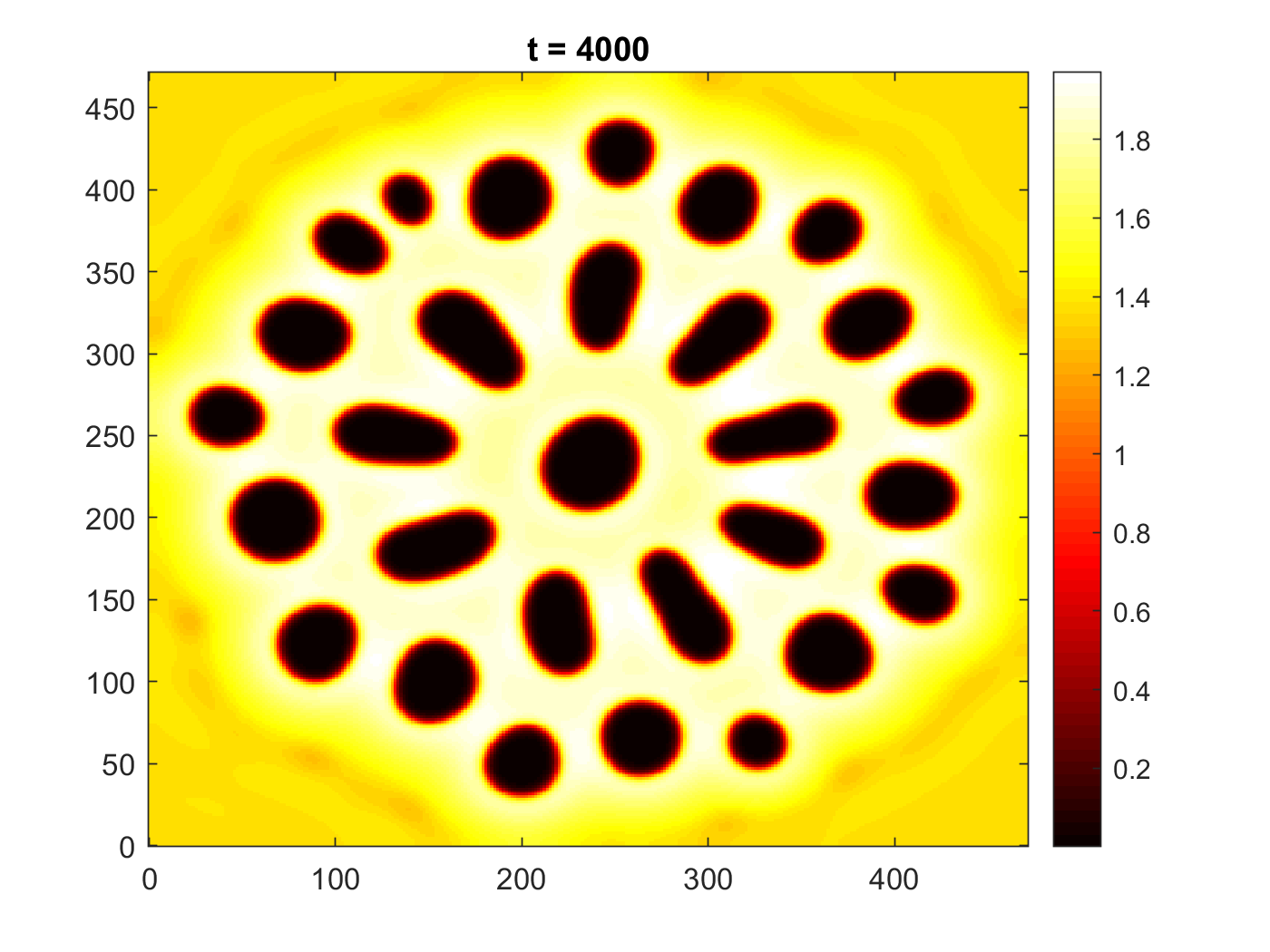}\label{fig3:subfig5}}
  \hfill
  \subfloat[]{\includegraphics[width=0.45\textwidth]{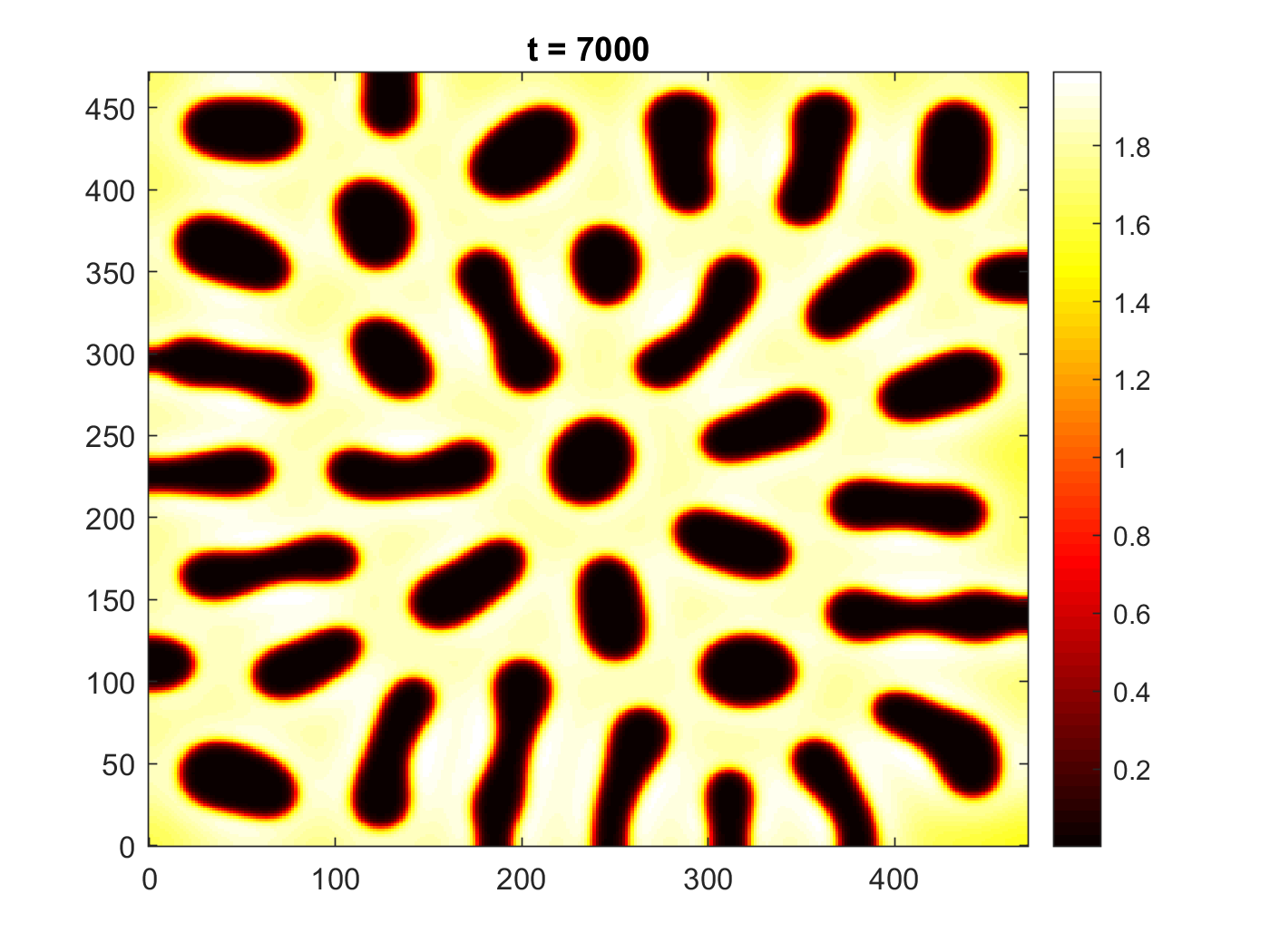}\label{fig3:subfig6}}
  \caption{ Snapshots from space temporal 2D simulation showing the concentration of Nodal and describing the self-replication process. The initial structure is small perturbations around $E_2$ with small concentration chosen in the center of the spatial domain. We used the same parameters as those presented in Figure \ref{fig:subfigures5}.}
  \label{fig:subfigures6}
\end{figure*}

\begin{figure*}[!htbp]
  \centering
  \subfloat[]{\includegraphics[width=0.45\textwidth]{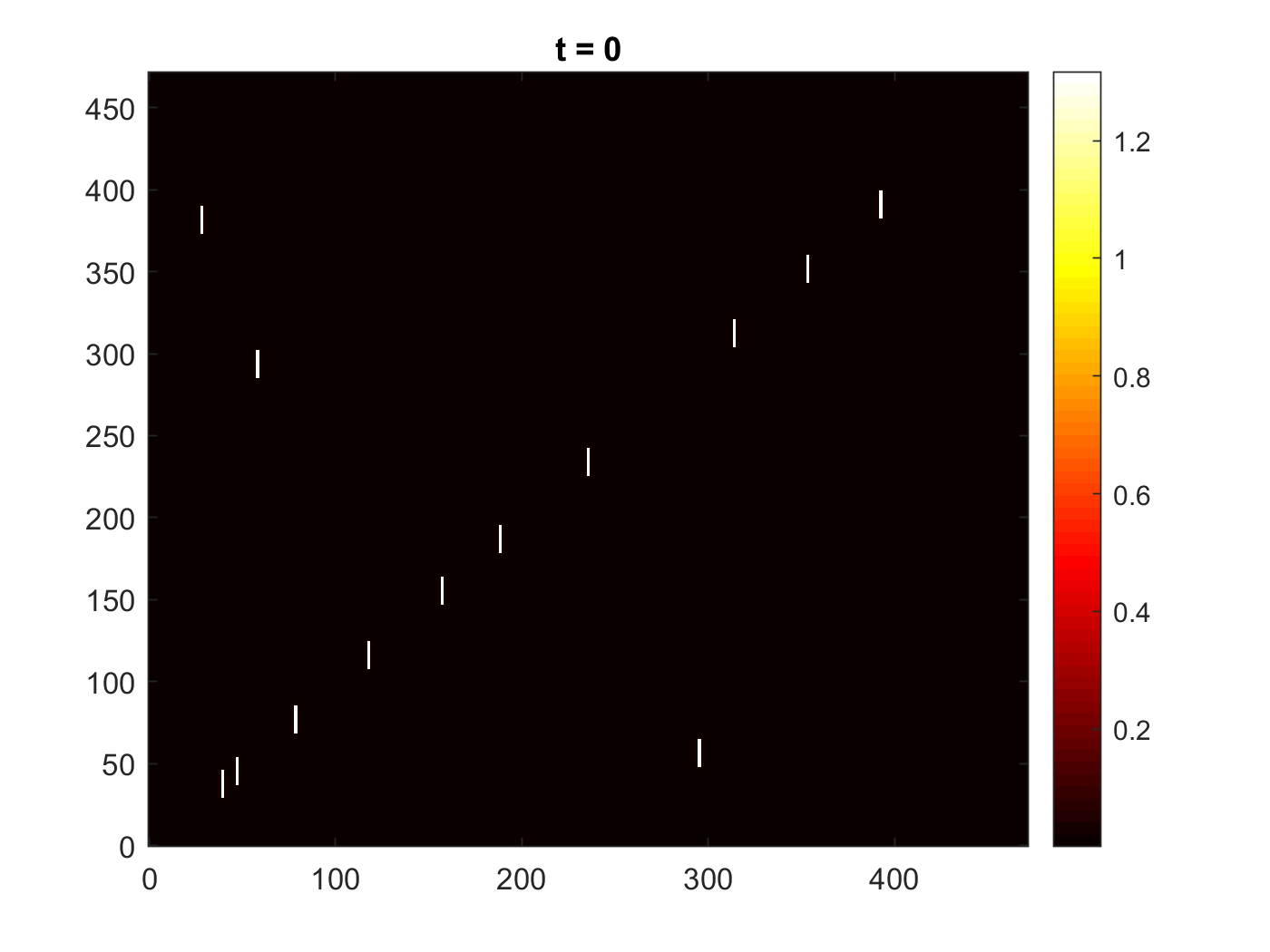}\label{fig4:subfig1}}
  \hfill
  \subfloat[]{\includegraphics[width=0.45\textwidth]{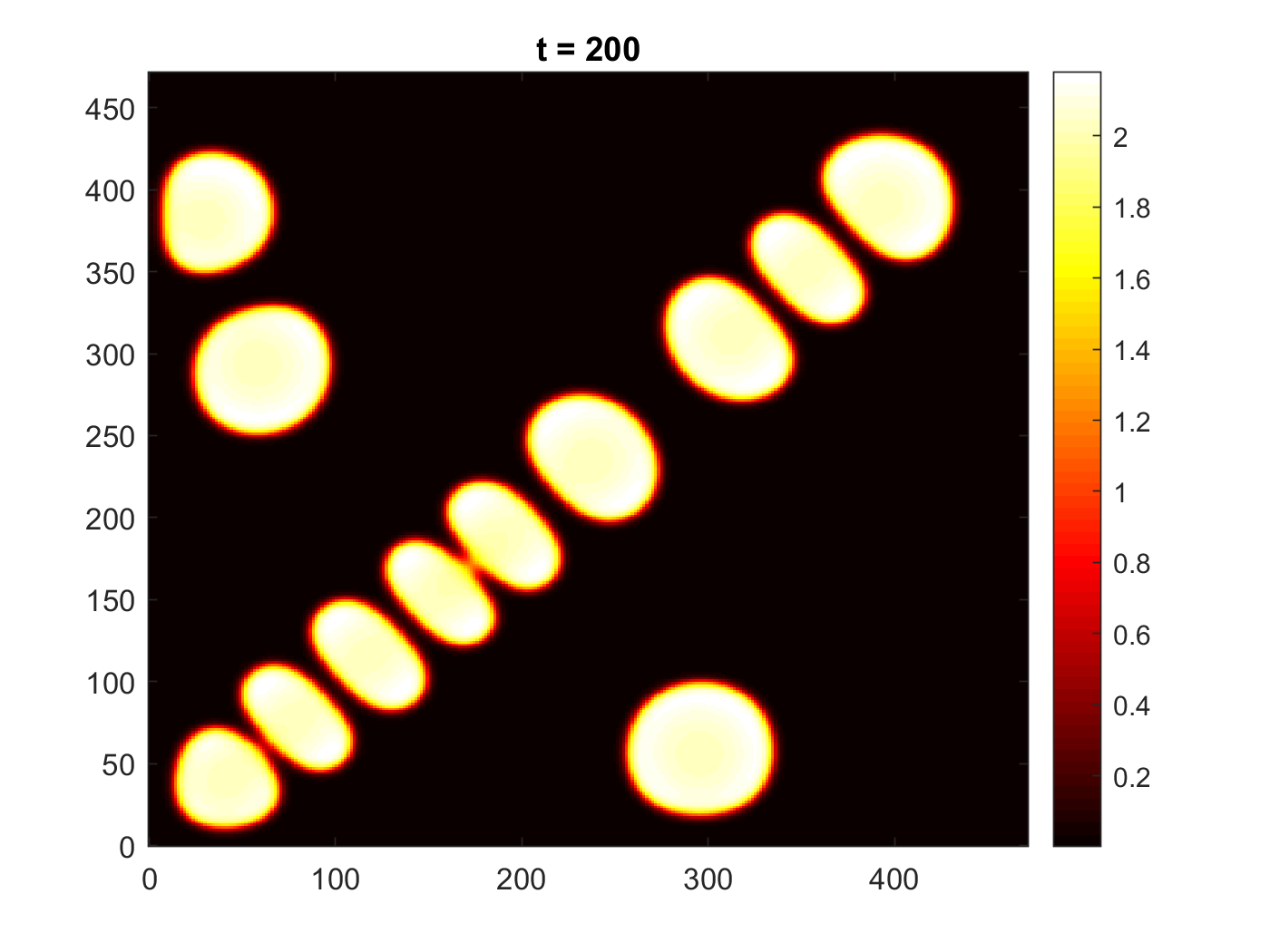}\label{fig4:subfig2}}
  \\
  \subfloat[]{\includegraphics[width=0.45\textwidth]{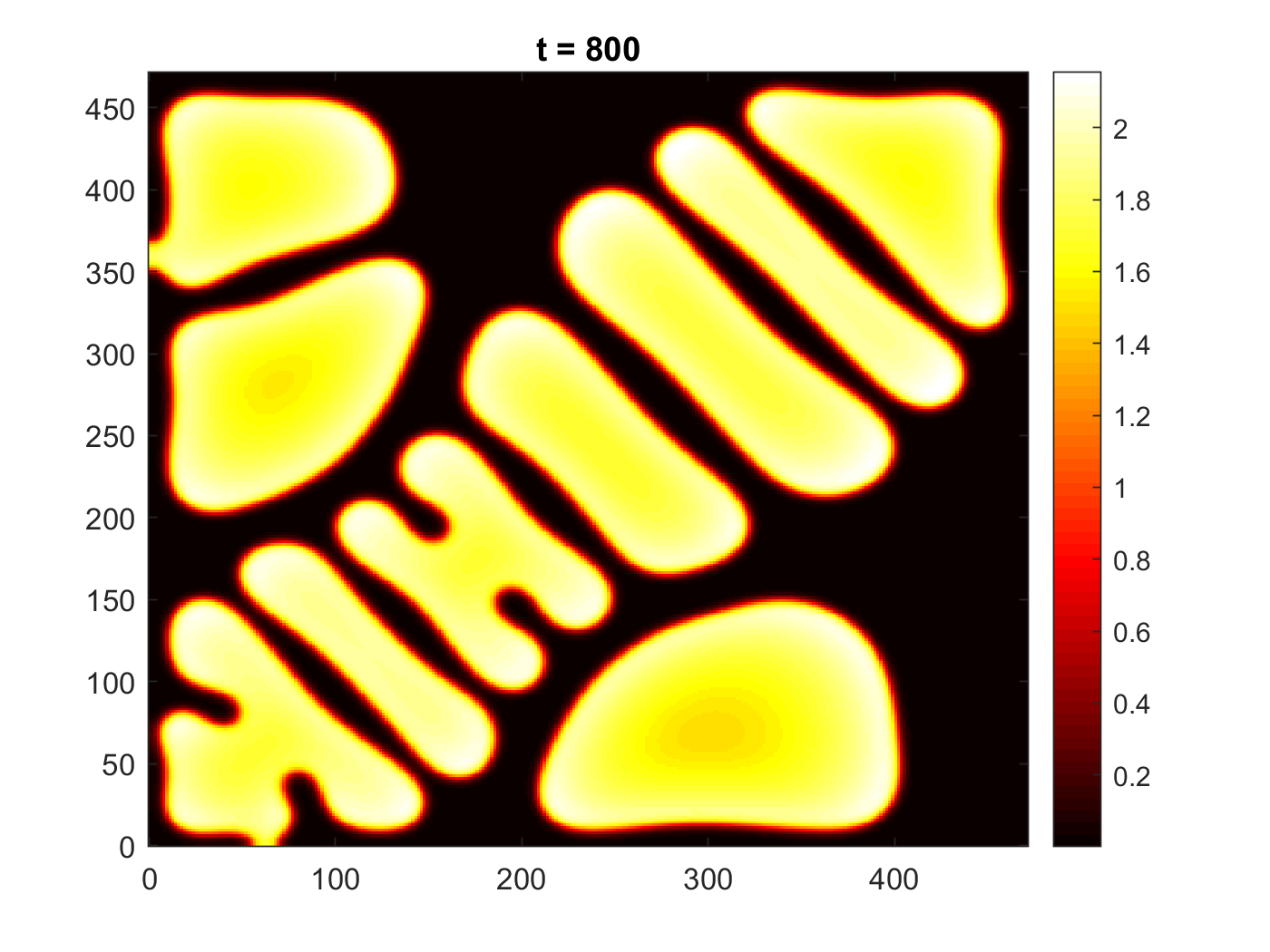}\label{fig4:subfig3}}
  \hfill
  \subfloat[]{\includegraphics[width=0.45\textwidth]{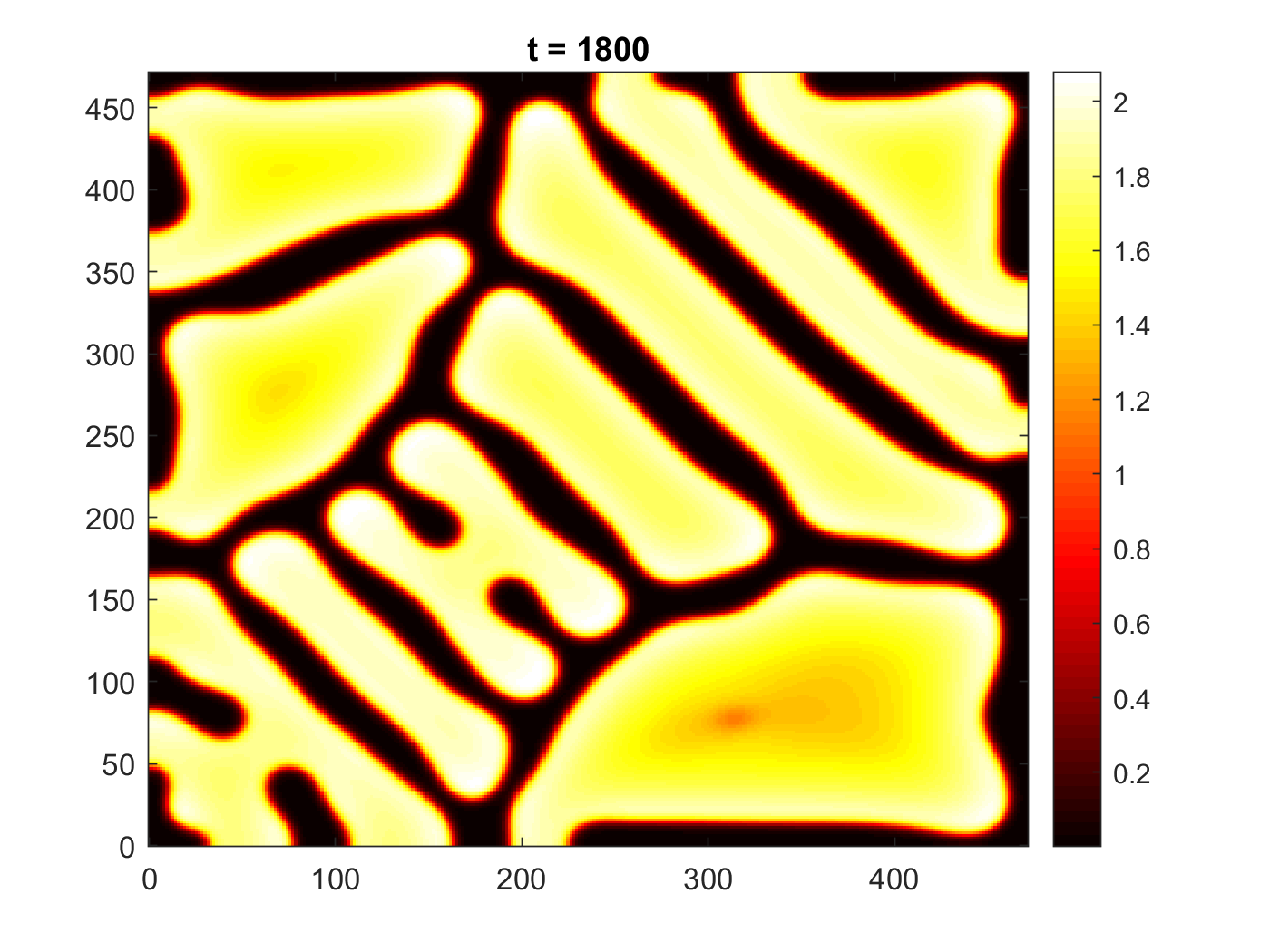}\label{fig4:subfig4}}
  \\
  \subfloat[]{\includegraphics[width=0.45\textwidth]{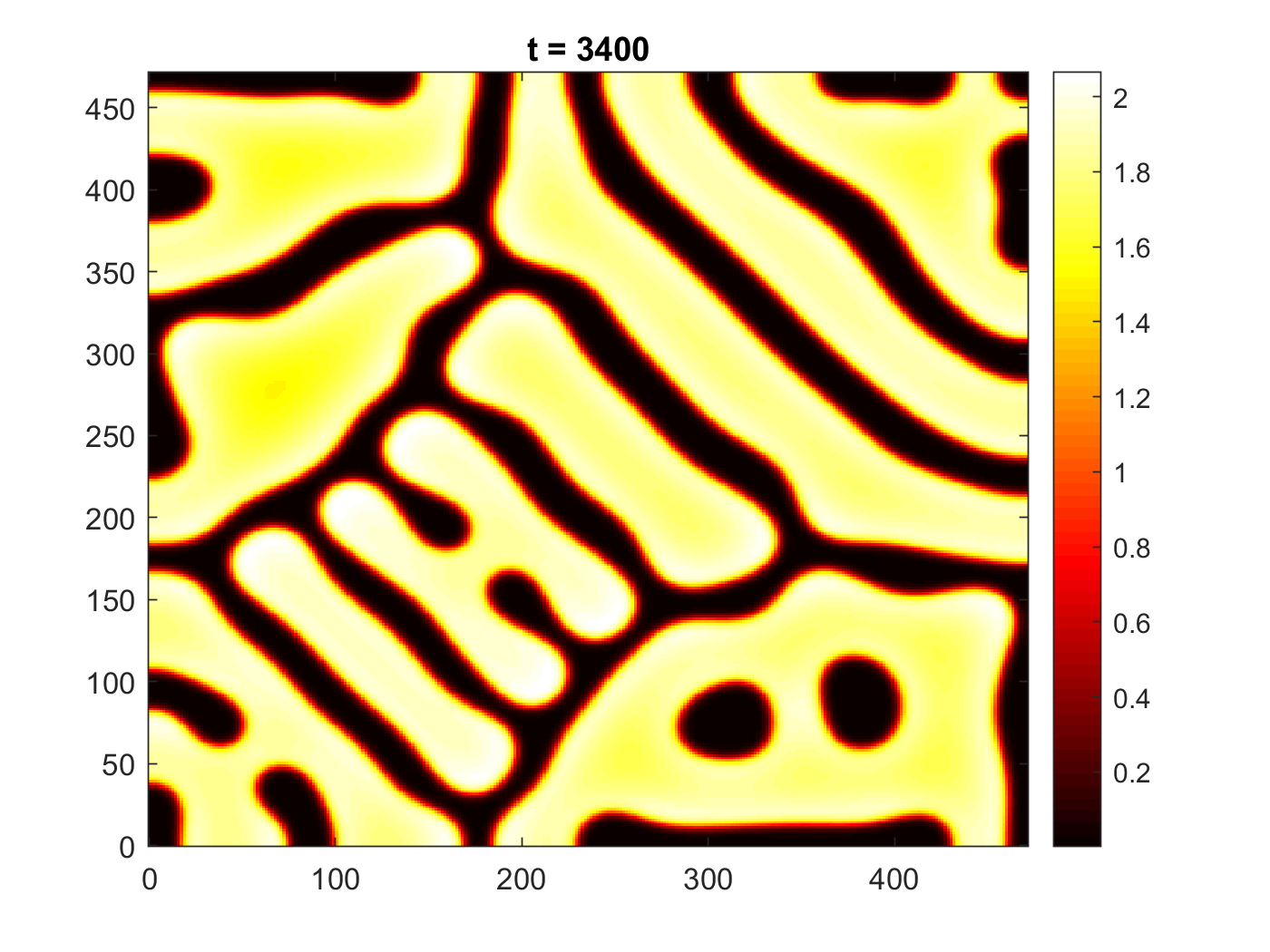}\label{fig4:subfig5}}
  \hfill
  \subfloat[]{\includegraphics[width=0.45\textwidth]{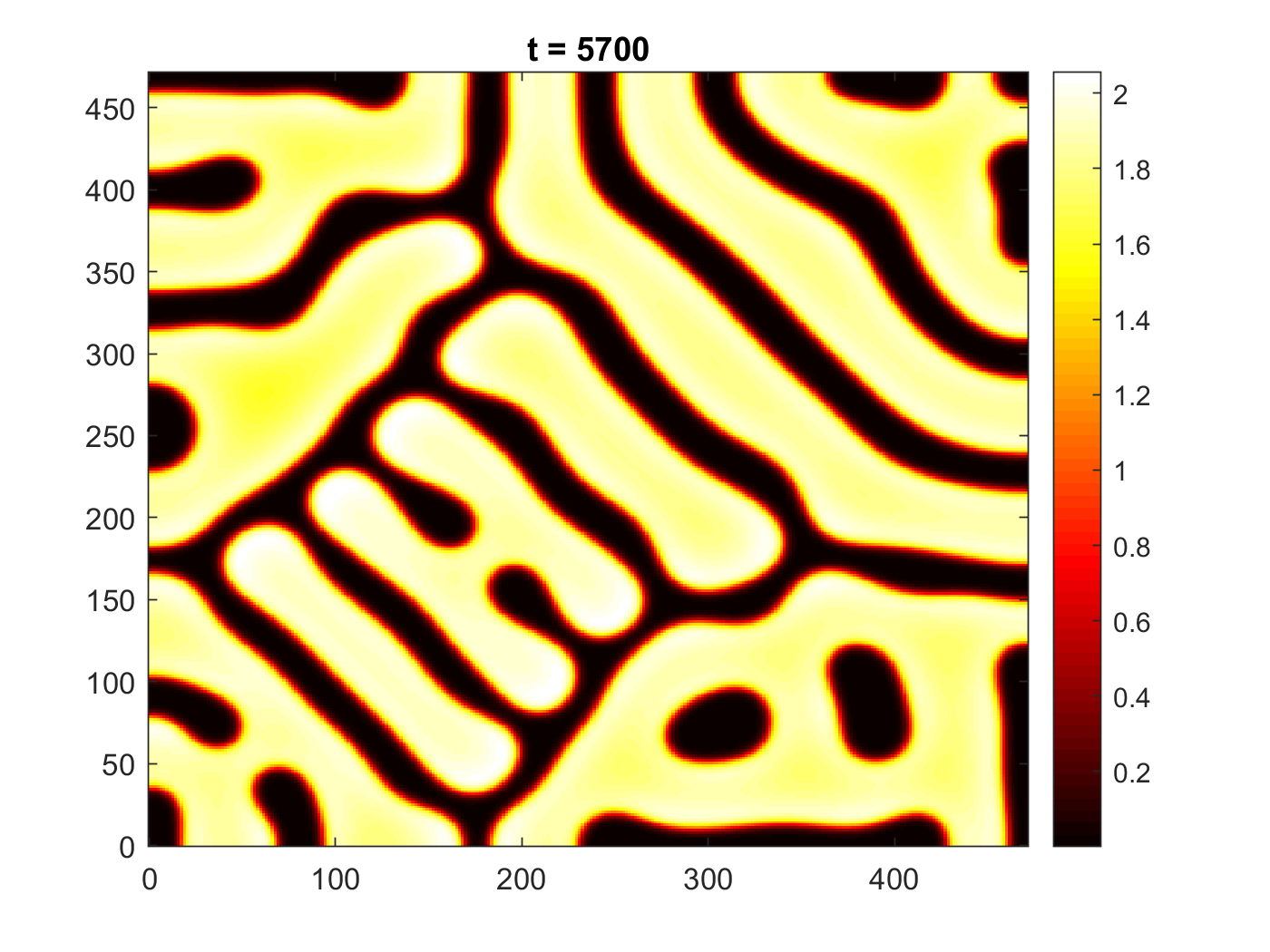}\label{fig4:subfig6}}
  \caption{ Snapshots from space temporal 2D simulation showing the concentration of Nodal and describing the initial perturbations cover the whole domain with time. The initial structure is a set of perturbations of $E_2$ chosen randomly in the spatial domain. We used the same parameters as those presented in Figure \ref{fig:subfigures5}.}
  \label{fig:subfigures7}
\end{figure*}

\subsection{Pattern expansion in the supercritical regime}

\label{sec6_2}

We examine pattern expansion dynamics in the supercritical regime using a localized perturbation at the center of the spatial domain. We consider two diffusion values: $d = 19.5$ and $d = 27$, with parameters $(\tau, k_+) = (3.10, 0)$ and critical diffusion $d_c = 18.34$.

The initial condition consists of a localized perturbation around the steady state $E_2 = (6.95, 9.05)$ placed at the domain center, with the remainder at the uniform steady state.

Figures \ref{fig:supercritical_d19} and \ref{fig:supercritical_d27} show the temporal evolution for both cases. For $d = 19.5$, close to the bifurcation threshold, the perturbation expands radially forming circular stripes that subsequently break into spots. This stripe-to-spot transition is characteristic of supercritical Turing bifurcations and has been observed in the Brusselator model \cite{jensen1994localized}.

For $d = 27$, further from the critical threshold, higher pattern amplitudes develop and the stripe structures remain stable without fragmenting into spots. Note the significant difference in time scales between the two cases, with $d = 27$ exhibiting much faster pattern formation due to the larger distance from the critical value. This amplitude-dependent behavior follows the theoretical scaling $A_{1\infty} = \sqrt{\sigma/L}$ where $\sigma \propto (d-d_c)$. The enhanced amplitudes at larger $d$ values stabilize the stripe morphology.

This qualitative behavior is different from the subcritical dynamics shown in section \ref{sec6.3}. Additionally, the distance-dependent scaling behavior is not present in subcritical bifurcation.

\subsection{Localised spots in the subcritical regime}
\label{sec6.3}
In this part, we focus on localized structures in the subcritical regime, examining their growth and expansion near the bifurcation threshold. The growth of these localized structures is investigated in detail in the Lengyel-Epstein and Fitzhugh-Nagumo models \cite{jensen1994localized}, \cite{purwins2010dissipative}. These structures can grow by either adding new localized structures to the tail of the initial structure, a process referred to as self-replication, or by elongating, deforming, and rupturing the initial structure, leading to space-filling and non-branching patterns, a process known as self-completion \cite{kuznetsov2017pattern}. We specifically concentrate on the numerical simulation of the system in the competitive and direct inhibition scenarios. Particularly in the subcritical region in figure \ref{fig0:subfig4}. We set the parameters $(\tau,k_+)=(0.63,1.04)$ and, starting from three different initial configurations \ref{fig2:subfig1}, \ref{fig3:subfig1} \ref{fig4:subfig1}. The first configuration involves a single spatial perturbation around the equilibrium state $E_2=(1.38,0.36)$ chosen in the center of a low homogeneous concentration domain of Nodal and Lefty. In contrast, the second one consists of random spatial perturbations of $E_2$ with a single low concentration of Nodal and Lefty chosen in the center of the spatial domain. The third configuration consists of a series of perturbations chosen arbitrarily in the spatial domain of homogeneous low concentration of Nodal and Lefty. 

The first and second configurations converge to low-concentration patches located in high-density concentration, referred to as reversed spots. In the first configuration, the perturbation expands in space. Eventually, it stops, leading to spots that self-replicate and fill the domain occupied by the expansion of the initial perturbation. In this process, the perturbation does not reach small regions on the sides of the boundaries, forming stripes parallel to the boundaries \ref{fig:subfigures5}.  The second configuration involves the self-replication of spots \ref{fig:subfigures6}, where the spots near the boundaries tend to align perpendicularly to them. In the third configuration, the perturbations extend and cover the whole domain \ref{fig:subfigures7}. These three examples are among other localized structures observed in the system \eqref{N_L_Nondim} in the subcritical regime by varying the initial conditions and parameters in this region.

\section{Discussion and Conclusion}
\label{sec7}

The field of synthetic developmental biology presents new opportunities for controlling morphogenesis and advancing our understanding of the mechanisms underlying pattern formation. There is a critical need to develop mathematical and computational models capable of abstracting the complexities of engineered reaction-diffusion circuits to fully exploit these advancements. Analysis of these models can provide valuable insights and quantitative predictions that can be experimentally tested and refined. In this paper, we focused on analyzing the morphogenetic model of Nodal and Lefty, representing the first engineered mammalian reaction-diffusion circuit. Going beyond linear analysis predictions, we employ a two-dimensional weakly nonlinear analysis to describe the behavior of the system near the bifurcation threshold. Close to the Turing bifurcation threshold, we distinguish between supercritical and subcritical bifurcations, each exhibiting distinct qualitative behavior in amplitude. We identified the regions of each bifurcation in the parameter space. Our computations reveal that increasing the association rate of Nodal and Lefty $k_+$ narrows the Turing space while promoting subcritical pattern formation, as illustrated in figure \ref{fig0:subfigures4}. Conversely, other results suggest that increasing cooperativity in this type of morphogenetic model expands the Turing space \cite{diambra2015cooperativity}.

The weakly nonlinear analysis method demonstrates good qualitative and quantitative agreement with numerical simulations in both regimes. For supercritical bifurcations with slight deviation from the threshold, our predictions match simulations well. We further validated our approach in the subcritical regime through single-mode pattern formation, showing a relatively good agreement between quintic Stuart-Landau predictions and full numerical simulations, confirming the necessity of fifth-order analysis when the cubic Landau coefficient is negative.

Our investigation of pattern expansion dynamics from localized perturbations in the supercritical regime reveals stripe-to-spot fragmentation near the bifurcation threshold, consistent with behavior observed in the Brusselator model \cite{jensen1994localized}. This expansion behavior is qualitatively different from the subcritical regime.
In the subcritical regime, pattern formation can occur beyond the bifurcation threshold, leading to the emergence of localized structures due to bistability between spatially uniform and spatially periodic patterns. An example of these localized structures in the subcritical regime includes the spotty patterns observed experimentally in the CDIMA reaction \cite{davies1998dividing}. Moreover, pattern formation in this regime requires a more significant perturbation to be initiated, suggesting that patterns in the subcritical regime are more robust and resistant to small perturbations in parameter space compared to the typical Turing bifurcation \cite{brena2014subcritical}.  Additionally, the subcritical region exhibits a wide variety of localized structures, such as the spotted patterns in our numerical simulations \ref{fig:subfigures6}, which expand and grow, eventually occupying the entire spatial domain. Notably, the jump behavior of pattern amplitude in the subcritical regime aligns more closely with the pattern mechanism observed in the engineered reaction-diffusion circuit \cite{sekine2018synthetic}, where transitions between low and high concentrations of Nodal served as initial configurations.

Furthermore, transient patterns can arise from the interplay of the saddle-node bifurcation and diffusion-driven instability. These transient patterns can either evolve toward stable Turing patterns or dissipate back to the uniform state. Although these transient patterns have been studied in a similar morphogenetic model \cite{guisoni2022transient}, their precise role in morphogenesis still needs to be determined.

Finally, this study provides the first exploration of nonlinear dynamics in a morphogenetic reaction-diffusion model controlled by a single regulatory function for both morphogens. The findings offer valuable insights that can inform the design of control strategies for pattern formation within the Nodal and Lefty signaling pathways. Future research should investigate the robustness of Turing patterns in these types of morphogenetic models. Methods that maintain the subcriticality of the model while expanding the Turing space would be highly valuable.

\bibliographystyle{elsarticle-harv} 
\bibliography{refs}






\end{document}